\newcommand{\logicFont}[1]{\protect\ensuremath{\mathrm{#1}}\xspace}
\newcommand{\classFont}[1]{\protect\ensuremath{\mathsf{#1}}\xspace}
\newcommand{\D}{\logicFont{D}}
\newcommand{\pind}{\perp\!\!\!\perp}
\newcommand{\cpind}{\perp\!\!\!\perp_{\rm c}}
\newcommand {\pci}[3] {#2~\!\!\pind_{#1}\!\!~#3}
\newcommand {\pmi}[2] {#1~\!\!\pind\!\!~#2}
\newcommand{\SUM}{\mathrm{SUM}}
\newcommand{\pcixyz}{\pci{\tuple x}{\tuple y}{\tuple z}}
\newcommand{\pmixy}{\pmi{\tuple x}{\tuple y}}
\newcommand{\add}{\textrm{add}}
\newcommand{\supplement}[3]{\mathrm{S}_{{#1},{#2}}(#3)}
\newcommand{\projection}[2]{\mathrm{Pr}_{#2}(#1)}
\newcommand{\base}{\mathcal{V}}
\newcommand{\idx}{\mathcal{I}}
\newcommand{\PSPACE}{\classFont{PSPACE}}
\newcommand{\NP}{\classFont{NP}}
\newcommand{\NPpoly}{\classFont{NP}/\classFont{poly}}
\newcommand{\bp}[1]{\mathrm{BP}(#1)}
\newcommand{\NPr}{\classFont{NP}_\mathbb{R}}
\newcommand{\PTIME}{\classFont{P}}
\newcommand{\tuple}[1]{\vec{#1}}
\newcommand{\Dom}{\operatorname{Dom}}
\newcommand{\ar}{\operatorname{ar}}
\newcommand{\Fr}{\operatorname{Fr}}
\newcommand{\A}{\mathfrak{A}}
\newcommand{\N}{\mathbb{N}}
\newcommand{\Z}{\mathbb{Z}}
\newcommand{\calS}{\mathcal{S}}
\newcommand{\calL}{\mathcal{L}}
\newcommand{\calC}{\mathcal{C}}
\newcommand{\calD}{\mathcal{D}}
\newcommand{\X}{\mathbb{X}}
\newcommand{\Y}{\mathbb{Y}}
\newcommand{\struc}{\mathrm{Struc}}
\def\dep{=\!\!}
\newcommand{\deps}{\dep(\cdots)}
\newcommand{\sub}{\subseteq}
\newcommand{\Var}{\mathrm{Var}}
\newcommand{\Varfo}{\mathrm{Var_{1}}}
\newcommand{\Varso}{\mathrm{Var_{2}}}
\newcommand{\dfn}{:=}
\newcommand{\enc}{\mathrm{enc}}
\newcommand{\RE}{\mathbb{R}}
\newcommand{\supp}[1]{\mathrm{Supp}(#1)}
\newcommand{\bsupp}[1]{\mathrm{Base}(#1)}
\newcommand{\positive}[1]{\mathrm{Pos}(#1)}
\newcommand{\relweight}[2]{|{#1}_{#2}|_{\mathrm{rel}}}
\newcommand{\existst}{\ddot{\exists}}
\def\dep{=\!\!}
\newcommand{\FO}{{\rm FO}}
\newcommand{\mA}{{\mathfrak A}}
\newcommand{\ESO}{{\rm ESO}}
\newcommand{\eso}[2]{\ESO_{#1}[{#2}]}
\newcommand{\peso}[2]{\mathrm{L}\text{-}\ESO_{#1}[{#2}]}
\newcommand{\esor}[1]{\eso{\RE}{#1}}
\newcommand{\eear}[1]{(\existst^*\exists^*\forall^*)_{\RE}[{#1}]}
\newcommand{\ear}[1]{(\existst^*\forall^*)_{\RE}[{#1}]}
\newcommand{\lear}[1]{\mathrm{L}\text{-}(\existst^*\forall^*)_{\RE}[{#1}]}
\newcommand{\lead}[1]{\mathrm{L}\text{-}(\existst^*\forall^*)_{d[0,1]}[{#1}]}
\newcommand{\leau}[1]{\mathrm{L}\text{-}(\existst^*\forall^*)_{[0,1]}[{#1}]}
\newcommand{\leaeu}[1]{\mathrm{L}\text{-}(\existst^*\forall^*\exists^*)_{[0,1]}[{#1}]}
\newcommand{\pesod}[1]{\peso{d[0,1]}{#1}}
\newcommand{\pesou}[1]{\peso{[0,1]}{#1}}
\newcommand\bigexists{%
  \mathop{\lower0.75ex\hbox{\ensuremath{%
    \mathlarger{\mathlarger{\mathlarger{\mathlarger{\exists}}}}}}}%
  \limits}
\newenvironment{redtext}{\color{red}}{\ignorespacesafterend}
\newenvironment{bluetext}{\color{blue}}{\ignorespacesafterend}
\newcommand{\jonni}[1]{\begin{redtext} Jonni: {#1} \end{redtext}}
\newcommand{\miika}[1]{\begin{bluetext} Miika: {#1} \end{bluetext}}
\newtheorem{theorem}{Theorem}
\newtheorem*{subclaim}{Subclaim}
\newtheorem{proposition}[theorem]{Proposition}
\newtheorem{corollary}[theorem]{Corollary}
\newtheorem{claim}{Claim}
\newtheorem{lemma}[theorem]{Lemma}
\newtheorem{definition}[theorem]{Definition}
\newif\iflong 
\let\oldeq\[
\renewcommand{\[}{\begin{linenomath*}\oldeq}
\let\oldequ\]
\renewcommand{\]}{\oldequ \end{linenomath*}}
\journal{Annals of Pure and Applied Logic}
\begin{document}

\begin{frontmatter}



\title{Tractability frontiers in probabilistic team semantics and existential
               second-order logic over the reals}


\author[miika]{Miika Hannula\fnref{fn1}\corref{cor1}}

     \ead{miika.hannula@helsinki.fi}

\author[jonni1,jonni2]{Jonni Virtema\fnref{fn2}}
     \ead{j.t.virtema@sheffield.ac.uk}

\cortext[cor1]{Corresponding author}
\fntext[fn1]{Supported by the Academy of Finland grant 322795.}
\fntext[fn2]{Supported by the DFG grant VI 1045/1-1.}

\affiliation[miika]{organization={Department of Mathematics and Statistics, University of Helsinki},
            city={Helsinki},
            country={Finland}}

\affiliation[jonni1]{organization={Institut f\"ur Theoretische Informatik, Leibniz Universit\"at Hannover},
            city={Hannover},
            country={Germany}}
            
\affiliation[jonni2]{organization={Department of Computer Science, University of Sheffield},
            city={Sheffield},
            country={United Kingdom}}

\begin{abstract}
Probabilistic team semantics is a framework for logical analysis of probabilistic dependencies. Our focus is on the axiomatizability, complexity, and expressivity of probabilistic inclusion logic and its extensions.
We identify a natural fragment of existential second-order logic with additive real arithmetic that captures exactly the expressivity of probabilistic inclusion logic. We furthermore relate these formalisms to linear programming, and doing so obtain PTIME data complexity for the logics. 
Moreover, 
on finite structures, we show that the full existential second-order logic with additive real arithmetic can only express NP properties.
Lastly, we present a sound and complete axiomatization for probabilistic inclusion logic at the atomic level. 
\end{abstract}



\begin{keyword}
dependence logic \sep
team semantics \sep
metafinite structures \sep
Blum-Shub-Smale machine



\end{keyword}

\end{frontmatter}


\section{Introduction}
 
\emph{Metafinite model theory}, introduced by Gr\"adel and Gurevich \cite{GradelG98}, generalizes the approach of \emph{finite model theory} by shifting to two-sorted structures that extend finite structures with another (often infinite) domain with some arithmetic (such as the reals with multiplication and addition), and weight functions bridging the two sorts. 
A simple example of a metafinite structure is a graph involving numerical labels; e.g., a railway network where an edge between two adjacent stations is labeled by the distance between them.
Metafinite structures are, in general, suited for modeling problems that make reference to some numerical domain, be it reals, rationals, or complex numbers.

A particularly important subclass of metafinite structures are the \emph{$\RE$-structures}, which extend finite structures with the real arithmetic on the second sort. The computational properties of $\RE$-structures can be studied with \emph{Blum-Shub-Smale machines} \cite{blum1989} 
(BSS machines for short) 
which are essentially register machines with registers that can store arbitrary real numbers and which can compute rational
 functions over reals in a single time step. 
 
A particularly important related problem is the existential theory of the reals (ETR), which contains all Boolean combinations of equalities and inequalities of polynomials that have real solutions. Instances of ETR are closely related to the question whether a given finite structure can be extended to an $\RE$-structure satisfying certain constraints. Moreover, as we will elaborate more shortly, ETR is also closely related to polynomial time BSS-computations.

\emph{Descriptive complexity theory} for BSS machines and logics on metafinite structures was initiated by Gr\"adel and Meer who showed that $\NPr$ (i.e., non-deterministic polynomial time on BSS machines)  is captured by a variant of existential second-order logic ($\ESO_{\mathbb{R}}$) over $\RE$-structures \cite{GradelM95}. Since the work by Gr\"adel and Meer, others (see, e.g., \cite{CuckerM99,abs-2003-00644,HansenM06,Meer00}) have shed more light upon \emph{the descriptive complexity over the reals} mirroring the development of classical descriptive complexity.

Complexity over the reals can be related to classical complexity by restricting attention to Boolean inputs. The so-called \emph{Boolean part} of $\NPr$, written $\bp{\NPr}$, consists of all those Boolean languages that can be recognized by a BSS machine in non-deterministic polynomial time. In contrast to $\NP$, which is concerned with discrete problems that have discrete solutions, this class captures discrete problems with \emph{numerical} solutions. 
A well studied visibility problem in computational geometry related to deciding existence of numerical solutions is the so-called \emph{art gallery problem}. Here one is asked can a given polygon be guarded by a given number of guards whose positions can be determined with arbitrary precision.
Another typical problem is the recognition of unit distance graphs, that is, to determine whether a given graph can be embedded on the Euclidean plane in such a way that two points are adjacent whenever the distance between them is one.
These problems \cite{AbrahamsenAM18,schaefer13}, and an increasing number of others, have been recognized as complete for the complexity class $\exists \RE$, defined as the closure of ETR with polynomial-time reductions \cite{Schaefer09}. 
The exact complexity of $\exists \RE$ is a major open question; currently it is only known that 
\begin{equation}\label{eq:open}
\NP \leq \exists \RE \leq \PSPACE \, \, \, \text{\cite{Canny88}}.
\end{equation}
Interestingly,  $\exists \RE$ can also be characterized as the Boolean part of $\NPr^0$, written $\bp{\NPr^0}$, where $\NPr^0$ is non-deterministic polynomial time over BSS machines that allow only machine constats $0$ and $1$ \cite{BurgisserC06,SchaeferS17}.  It follows that  $\exists \RE$ captures exactly those properties of finite structures that are definable in $\ESO_{\mathbb{R}}$ (with constants $0$ and $1$). That $\exists \RE$ can be formulated in purely descriptive terms has, to the best of our knowledge, never been made explicit in the literature. Indeed, one of the aims of this paper is to promote a descriptive approach to $\exists \RE$. In particular, our results show that certain additive fragments of $\ESO_{\mathbb{R}}$, which correspond to subclasses of $\exists \RE$, collapse to $\NP$ and $\PTIME$.

 In addition to metafinite structures, the connection between logical definability encompassing numerical structures and computational complexity has received attention  in \emph{constraint databases} \cite{BENEDIKT2003169,GradelK99,Kreutzer00}. A constraint database models (e.g., geometric data) by combining a numerical \emph{context structure} (such as the real arithmetic) with a finite set of quantifier-free formulae defining infinite database relations~\cite{KanellakisKR95}.
 
Renewed interest to logics on frameworks analogous to metafinite structures, and related descriptive complexity theory, is motivated by the need to model inferences utilizing numerical data values in the fields of machine learning and artificial intelligence.  
See e.g. \cite{GroheR17,abs-2009-10574} for declarative frameworks for machine learning utilizing logic, \cite{ConsoleHL20,Torunczyk20} for very recent works on logical query languages with arithmetic, and \cite{JordanK16} for applications of descriptive complexity in machine learning. 
 
In this paper, we focus on the descriptive complexity of logics with so-called \emph{probabilistic team semantics} as well as additive $\ESO_{\mathbb{R}}$.
Team semantics is the semantical framework of modern logics of dependence and independence. Introduced by Hodges \cite{hodges97} and adapted to dependence logic by
V\"a\"an\"anen \cite{vaananen07}, team semantics defines truth in reference to collections of assignments, called \emph{teams}. Team semantics is particularly suitable for a formal analysis of properties, such as the functional dependence between variables, which only arise in the presence of multiple assignments. In the past decade numerous research articles have, via re-adaptations of team semantics, shed more light into the interplay between logic and dependence. A common feature, and limitation, in all these endeavors has been their preoccupation with notions of dependence that are \emph{qualitative} in nature. That is, notions of dependence and independence that make use of quantities, such as conditional independence in statistics, have usually fallen outside the scope of these studies.

The shift to quantitative dependencies in team semantics setting is relatively recent. 
While the ideas of probabilistic teams trace back to the works of Galliani \cite{galliani08} and Hyttinen et al. \cite{Hyttinen15b},
a systematic study on the topic can be traced to \cite{DurandHKMV18,HKMV18}.
In \emph{probabilistic team semantics} the basic semantic units are probability distributions (i.e., \emph{probabilistic teams}).
This shift from set based semantics to distribution based semantics enables probabilistic notions of dependence to be embedded to the framework.
In \cite{HKMV18} probabilistic team semantics was studied in relation to the dependence concept that is most central in statistics: conditional independence. Mirroring \cite{galliani12,GradelM95,kontinenv09} the expressiveness of probabilistic independence logic ($\FO(\cpind)$), obtained by extending first-order logic with conditional independence, was in \cite{HKMV18,abs-2003-00644} characterised in terms of  arithmetic variants of existential second-order logic. In \cite{abs-2003-00644} the data complexity of $\FO(\cpind)$ was also identified in the context of BSS machines and the existential theory of the reals. In \cite{HHKKV19} the focus was shifted to the expressivity hierarchies between probabilistic logics defined in terms of different quantitative dependencies.
Recently, the relationship between the settings of probabilistic and relational team semantics has raised interest in the context of quantum information theory \cite{abramsky2021team,DBLP:journals/corr/abs-2102-10931}.

Another vantage point to quantitative dependence comes from the notion of \emph{multiteam semantics}, defined in terms of multisets of variable assignments called \emph{multiteams}. 
A multiteam can be viewed as a database relation that not only allows duplicate rows (cf. SQL data tables), but also keeps track of the number of times each row is repeated. Multiteam semantics and probabilistic team semantics are close parallels, and they often exhibit similar behavior with respect to their key logics (cf. \cite{DurandHKMV18,wilke20,wilkelode20}). There are also differences, namely because the two frameworks are designed to model different situations. For instance, a probability of a random variable can be halved, but it makes no sense to consider a data row that is repeated two and half times in a data table. For this reason, the so-called split disjunction is allowed to cut an assignment weight into two halves in one framework but not (always) in the other.

 Of all the dependence concepts thus far investigated in team semantics, that of \emph{inclusion} has arguably turned out to be the most intriguing and fruitful. One reason is that \emph{inclusion logic}, which arises from this concept, can only define properties of teams that are decidable in polynomial time \cite{gallhella13}. In contrast, other natural team-based logics, such as dependence and independence logic, capture non-deterministic polynomial time \cite{galliani12,kontinenv09,vaananen07}, and many variants, such as team logic, have an even higher complexity \cite{kontinennu09}. Thus it should come as no surprise if quantitative variants of many team-based logics turn out more complex; in principle, adding  arithmetical operations and/or counting cannot be a mitigating factor when it comes to complexity.

In this paper, we study \emph{probabilistic inclusion logic}, which is the extension of first-order logic with so-called \emph{marginal identity atoms} $x \approx y$ which state that $x$ and $y$ are identically distributed.
Our particular focus is on the complexity and expressivity of \emph{sentences}. It is important, at this point, to note the distinction between formulae and sentences in team-based logics: Formulae describe properties of \emph{teams} (i.e., relations), while sentences describe properties of \emph{structures}. This distinction is even more pointed in probabilistic team semantics, where formulae describe properties \emph{probabilistic teams} (i.e., real-valued probability distributions). 
On the other hand, sentences of logics with probabilistic team semantics can express
variants of important problems that are conjectured not to be expressible in the relational analogues of the logics. Decision problems related to ETR (i.e., the likes of the art gallery problem) are, in particular, these kind of problems.
Another motivation to focus on sentences is our desire to \emph{make comparison} between relational and quantitative team logics. 
As discussed above, the move from relational to quantitative dependence should not in principle make the associated logics weaker. There is, however, no direct mechanism to examine this hypothesis at the formula level, because the team properties of relational and quantitative team logics are essentially incommensurable.
Fortunately this becomes possible at  the sentence level. The reason is that sentences describe only properties of (finite) structures in \emph{both} logical approaches.

The main takeaway of this paper is that there is no drastic difference between a relational team logic and its quantitative variant, as long as the latter makes only reference to \emph{additive} arithmetic. While inclusion logic translates to fixed point logic,  its quantitative variant, probabilistic inclusion logic, seems to require linear programming. Yet, the complexity upper bounds ($\NP$/$\PTIME$) of first-order logic extended with dependence and/or inclusion atoms are preserved upon moving to quantitative variants. In contrast, earlier results indicate that this is not necessarily the case with respect to dependencies whose quantitative expression involves multiplication (such as conditional independence \cite{abs-2003-00644}).

{\bf Our contribution.}
We use strong results from linear programming to obtain the following complexity results over finite structures.
We identify a natural fragment of additive $\ESO_{\RE}$ (that is, \emph{almost conjunctive $\ear{\leq,+,\SUM, 0,1}$}) which captures $\PTIME$ on ordered structures (see page \pageref{almostconjunctive} for a definition). The full additive $\ESO_{\RE}$ is in turn  shown to capture $\NP$. Additionally, we establish that the so-called \emph{loose fragments}, almost conjunctive $\leau{=,\SUM,0,1}$ and $\pesou{=,+,0,1}$, of the aforementioned logics have the same expressivity as probabilistic inclusion logic and its extension with dependence atoms, respectively. The characterizations of $\PTIME$ and $\NP$ hold also for these fragments.
Over open formulae, probabilistic inclusion logic extended with dependence atoms is shown to be strictly weaker than probabilistic independence logic. Moreover, we expand from a recent analogous result by Gr\"adel and Wilke on multiteam semantics \cite{wilke20} and show that probabilistic independence cannot be expressed in any logic that has access to only atoms that are relational or closed under so-called \emph{scaled unions}. In contrast, independence logic and inclusion logic with dependence atoms are equally expressive in team semantics \cite{galliani12}.
We also show that inclusion logic can be conservatively embedded into its probabilistic variant, when restricted to probabilistic teams that are uniformly distributed. From this we obtain an alternative proof through linear systems  (that is entirely different from the original proof of Galliani and Hella \cite{gallhella13}) for the fact that inclusion logic can express only polynomial time properties. Finally, we present a sound and complete axiomatization for marginal identity atoms. This is achieved by appending the axiom system of inclusion dependencies with a symmetricity rule. 

This paper is an extended version of \cite{HanVirJelia21}. Here we include all the proofs that were previously omitted. In addition, the results in Sections \ref{sect:sep} and \ref{sect:ax} are new.

\section{Existential second-order logics on $\RE$-structures}\label{sec:preli}
In addition to finite relational structures, we consider their numerical extensions by adding real numbers ($\RE$) as a second domain sort and functions that map tuples over the finite domain to $\RE$. Throughout the paper structures are assumed to have at least two elements. In the sequel, $\tau$ and $\sigma$ will always denote a finite relational and a finite functional vocabulary, respectively.    
The arities of function variables $f$ and relation variables $R$ are denoted by $\ar(f)$ and $\ar(R)$, resp.
If $f$ is a function with domain $\Dom(f)$ and $A$ a set, we define $f\upharpoonright A$ to be the function with domain   $\Dom(f)\cap A$ that agrees with $f$ for each element in its domain.
Given a finite set $S$, a function $f\colon S\to[0,1]$ that maps elements of $S$ to elements of the closed interval $[0,1]$ of real numbers such that $\sum_{s\in S}f(s)=1$ is called a \emph{(probability) distribution}, and the \emph{support} of $f$ is defined as $\supp{f} \dfn \{s\in S\mid f(s) >  0\}$. Also, $f$ is called \emph{uniform} if $f(s)=f(s')$ for all $s,s'\in\supp{f}$.


 

\begin{definition}[$\RE$-structures]
A tuple
\(
\A = (A,\RE, (R^\A)_{R\in\tau}, 
(g^\A)_{g\in\sigma}),
\)
where the reduct of $\A$ to $\tau$ is a finite relational structure, and 
 each  $g^\A$ is a function from $A^{\ar(g)}$ to $\RE$, is called an \emph{$\RE$-structure of vocabulary $\tau\cup\sigma$}. Additionally, $\A$ is also called (i) an $S$-structure, for $S\sub \RE$, if each $g^\A$ is a function from $A^{\ar(g)}$ to $S$, and (ii) a $d[0,1]$-structure if each $g^\A$ is a distribution.
 We call $\A$ a \emph{finite structure}, if $\sigma=\emptyset$.
\end{definition}

Our focus is on a variant of functional existential second-order logic with numerical terms ($\ESO_\RE$) that is designed to describe properties of $\RE$-structures. As first-order terms we have only first-order variables. For a set   $\sigma$ of function symbols, the set of numerical $ \sigma$-terms $i$ is generated by the following grammar:
 \[
i ::= c\mid f(\vec{x}) \mid i + i \mid i\times i\mid
\SUM_{\vec{y}} \, i,
\]
where $\vec y$ can be any tuple of variables and include variables that do not occur in $i$.
The interpretations of $+,\times,\SUM$ are the standard addition, multiplication, and summation of real numbers, respectively, and $c\in\RE$ is a real constant denoting itself.
In particular, the interpretation $[\SUM_{\vec y} \, i]^\A_s$ of the term $\SUM_{\vec y} \, i$ is defined as follows:
\[
[\SUM_{\vec y}
\, i]^\A_s :=
\sum_{\vec a \in A^{|\vec y|}} [i]^\A_{s[\vec a/\vec y]},
\]
where $[i]^\A_{s[\vec a/\vec y]}$ is an interpretation of the term $i$.
We write $i(\vec{y})$ to mean that the free variables of the term $i$ are exactly the variables in $\vec{y}$. The free variables of a term are defined as usual. In particular, the variables in $\vec{x}$ are not free in $\SUM_{\vec{x}} i(\vec{y})$.
%

\begin{definition}[Syntax of $\ESO_\RE$]\label{def:eso}
Let $O\sub \{+,\times, \SUM\}$,  $E \sub \{=,<,\leq\}$, and $C\sub \RE$.  The set of $\tau\cup\sigma$-formulae of $\esor{O,E,C}$ is defined via the grammar:
\[
\phi ::= \   x=y \,|\, \neg x= y \,|\, i \mathrel e j \,|\, \neg
{i\mathrel e j} \,|\, R(\vec{x}) \,|\, \neg R(\vec{x}) \,|\,
 \phi\land\phi \,|\,
  \phi\lor\phi \,|\,  \exists x\phi \,|\, \forall x \phi \,|\,  \exists f \psi,
\]
where $i$ and $j$ are numerical $\sigma$-terms constructed using operations from $O$ and constants from $C$; $e\in E$; $R\in\tau$ is a relation symbol; $f
 $ is a function variable; $x$, $y$, and $\vec{x}$ are (tuples of) first-order variables; and $\psi$ is a $\tau\cup(\sigma\cup\{f\})$-formula of $\esor{O,E,C}$.  
\end{definition}
The semantics of $\esor{O,E,C}$ is defined via $\RE$-structures
and assignments analogous to first-order logic, however the interpretations of function variables $f$ range over functions $A^{\ar(f)} \to \RE$.
Furthermore, given $S\sub \RE$, we define $\eso{S}{O,E,C}$ as the variant of $\esor{O,E,C}$ in which quantification of functions range over $h\colon A^{\ar(f)} \to S$.
 
\vspace{-2.5mm} 
\paragraph{\textbf{Loose fragment}}
For $S \sub \RE$,
define $\peso{S}{O,E,C}$ as the \emph{loose
fragment} of $\eso{S}{O,E,C}$ in which negated numerical atoms
$\neg {i\mathrel ej}$ are disallowed. 
%

\vspace{-2.5mm} 
\paragraph{\textbf{Almost conjunctive}}\label{almostconjunctive} A formula $\phi\in \ESO_S[O,E,C]$ is \emph{almost conjunctive}, if for every subformula $(\psi_1 \lor \psi_2)$ of $\phi$, no numerical term occurs in $\psi_i$, for some $i\in\{1,2\}$.

\vspace{-2.5mm} 
\paragraph{\textbf{Prefix classes}} For a regular expression $L$ over the alphabet $\{\existst, \exists, \forall \}$, we denote by $L_S[O,E,C]$ the formulae of $\ESO_S[O,E,C]$ in prefix form whose quantifier prefix is in the language defined by $L$, where $\existst$ denotes existential function quantification, and  $\exists$ and $\forall$ first-order quantification.

\vspace{-2.5mm} 
\paragraph{\textbf{Expressivity comparisons}}
Let $\calL$ and $\calL'$ be some logics defined above, and let $X\subseteq \RE$.
For $\phi\in\calL$, define $\struc_{X}(\phi)$ to be the class of pairs $(\mA,s)$ where $\mA$ is an $X$-structure 
and $s$ an assignment
such that $\A\models_s \phi$. 
Define $\struc_{\rm fin}(\phi)$ ($\struc_{\rm ord}(\phi)$, resp.) analogously in terms of finite (finite ordered, resp.) structures. 
%
Additionally, $\struc_{d[0,1]}(\phi)$ is the class of $(\A,s) \in \struc_{[0,1]}(\phi)$ such that each $f^\A$ is a distribution.
%
%
If $X$ is a set of reals or from \{``$d[0,1]$'',``$\mathrm{fin}$'', ``$\mathrm{ord}$''\}, we write $\calL \leq_X \calL'$ if for all formulae $\phi\in \calL$ there is a formula $\psi\in \calL'$ such that $\struc_X(\phi)= \struc_X(\psi)$.
For formulae without free first-order variables, we omit $s$ from the pairs $(\mA,s)$ above.
 %
As usual, the shorthand $\equiv_X$
 stands for $\leq_X$ in both directions.
  For $X=\RE$, we write simply $\leq$ and $\equiv$.

  \vspace{-1mm}
\section{Data complexity of additive $\ESO_\RE$}

On finite structures $\esor{\leq,+, \times,0,1}$ is known to capture the complexity class $\exists\RE$ \cite{BurgisserC06,GradelM95,SchaeferS17}, which lies somewhere between $\NP$ and $\PSPACE$. Here we focus on the additive fragment of the logic. It turns out that the data complexity of the additive fragment is $\NP$ and thus no harder than that of $\ESO$. Furthermore, we obtain a tractable fragment of the logic, which captures $\PTIME$ on finite ordered structures.

\vspace{-1mm}
\subsection{A tractable fragment}\label{sect:almost}

Next we show $\PTIME$ data complexity for almost conjunctive $\eear{\leq,+, \SUM,0,1}$.


\begin{proposition}\label{prop:lp}
Let $\phi$ be an almost conjunctive $\esor{\leq,+, \SUM,0,1}$-formula in which no existential first-order quantifier is in a scope of a universal first-order quantifier. There is a polynomial-time reduction from $\RE$-structures $\mA$ and assignments $s$ to families of systems of linear inequations $\mathcal{S}$ such that $\mA \models_s \phi$ if and only if there is a system $S\in \mathcal{S}$ that has a solution. If $\phi$ has no free function variables, the systems of linear inequations in $\mathcal{S}$ have integer coefficients.
\end{proposition}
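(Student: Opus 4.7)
The plan is to translate $\phi$ compositionally into a polynomial-size family $\mathcal{S}$ of linear systems by (i) replacing function symbols with real LP variables, (ii) branching on existential first-order quantifiers to generate the family members, (iii) expanding universal first-order quantifiers as conjunctions, and (iv) collapsing each disjunction using the almost-conjunctive hypothesis.

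For each function symbol $f$ of arity $k$ (either free in $\phi$ or bound by some $\existst f$), I introduce real variables $x_{f,\vec{a}}$ for every $\vec{a}\in A^k$. Any $\existst f$ lying in the scope of universal first-order quantifiers is first Skolemized by adding those universal variables as arguments; since $\phi$ is fixed in the data-complexity setting, this only increases arities by a constant, so polynomially many variables suffice. By assumption no existential FO quantifier occurs under a universal FO quantifier, so every remaining $\exists x$ can be handled by branching on values $a\in A$: after at most a constant number of such branchings, I obtain $|A|^{O(1)}$ branches, one per member of $\mathcal{S}$. Within a fixed branch, only universal first-order quantifiers remain, and each is expanded by the equivalence $\forall y\,\psi \equiv \bigwedge_{a\in A}\psi[a/y]$, producing a quantifier-free formula of polynomial size.

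I then propagate top-down through the connectives. Conjunctions simply accumulate constraints. For a disjunction $\psi_1\vee\psi_2$ the almost-conjunctive hypothesis guarantees that some $\psi_i$ contains no numerical term, hence is a pure relational first-order formula over the finite reduct of $\mA$ with fully instantiated free variables; I evaluate it directly in polynomial time, dropping the disjunction if $\psi_i$ holds and replacing it by $\psi_{3-i}$ otherwise. This eliminates propositional branching entirely. Each surviving numerical atom $i\leq j$ then becomes a comparison of linear expressions in the variables $x_{f,\vec{a}}$, because every $\SUM_{\vec{y}}$ expands into a finite sum over $A^{|\vec{y}|}$, no multiplication is present, and all remaining constants come from $0$, $1$, and the cardinalities $|A|^{|\vec{y}|}$; each relational atom is evaluated directly in $\mA$, a false one discarding the branch. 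Collecting the linear constraints per surviving branch yields the family $\mathcal{S}$, and when $\phi$ has no free function variables all coefficients are integers by inspection.

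The main obstacle I anticipate is orchestrating the quantifier handling so that the family stays polynomial: branching on $\exists$-FO quantifiers is polynomial precisely because they are not nested under $\forall$, and Skolemizing $\existst$ over enclosing universals increases arities only by a fixed constant, both facts essential for the polynomial-time bound. The remaining work is routine structural induction, using the almost-conjunctive hypothesis to avoid propositional splitting and the additivity of the signature $\{+,\SUM\}$ to guarantee that every constraint produced is linear.
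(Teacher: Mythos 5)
Your construction mirrors the paper's proof in all essential respects: you bring $\phi$ into the form $\exists \vec{y}\,\existst\vec{f}\,\forall\vec{x}\,\theta$ (the paper assumes this normal form w.l.o.g.; you make the second-order Skolemization of $\existst f$ under universals explicit, which is a nice touch), you let each valuation of the existentially quantified first-order variables determine one member of the family $\mathcal{S}$, you expand the universal block over all tuples from $A$, you use almost-conjunctivity to evaluate the numerical-term-free disjunct directly in $\mA$ and thereby eliminate disjunctions, and you turn the surviving numerical atoms into linear constraints over variables $x_{f,\vec{a}}$, discarding a branch when a purely relational conjunct fails. This is exactly the paper's reduction.

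There is, however, one genuine error: you introduce LP unknowns $x_{f,\vec{a}}$ for the \emph{free} function variables of $\phi$ as well. These functions are part of the input: the $\RE$-structure $\mA$ interprets them, and the reduction must decide whether that \emph{given} interpretation satisfies $\phi$. Leaving them as unknowns makes the system solvable whenever \emph{some} reinterpretation of the free function variables would satisfy $\phi$, which is strictly weaker. Concretely, for $\phi := g(x)\leq 0$ and a structure with $g^{\mA}(s(x))=1$, we have $\mA\not\models_s\phi$, yet your system, consisting of the single constraint $x_{g,s(x)}\leq 0$, has a solution. The paper instead substitutes the real constants $g^{\mA}(\vec{a})$ for terms built from free function variables, so that only the existentially quantified functions become unknowns. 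This is also precisely why the proposition's last sentence restricts the integer-coefficient guarantee to formulas without free function variables: evaluated values $g^{\mA}(\vec{a})$, which may be arbitrary reals, enter the constraints otherwise. Under your setup all coefficients would be integers in every case, so your own closing claim (integers ``when $\phi$ has no free function variables'') signals the inconsistency; the fix is simply to evaluate free function variables in $\mA$ rather than relax them to LP variables.
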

\begin{proof}
Fix $\phi$. 
We assume, w.l.o.g.,
that variables quantified in $\phi$ are quantified exactly once, the sets of free and bound variables of $\phi$ are disjoint, and that the domain of $s$ is the set of free variables of $\phi$.
Moreover, we assume that $\phi$ is of the form $\exists \vec{y} \exists \vec{f} \forall  \vec{x} \theta$, where $\vec{f}$ is a tuple of function variables and $\theta$ is quantifier-free. We use $X$ and $Y$ to denote the sets of variables in $\vec{x}$ and $\vec{y}$, respectively, and $\vec{g}$ to denote the free function variables of $\phi$.

We describe a polynomial-time process of constructing a family of systems of linear inequations $\mathcal{S}_{\mA,s}$ from a given $\tau\cup\sigma$-structure $\mA$ and an assignment $s$. We introduce
\begin{itemize}
\item a fresh variable $z_{\vec{a},f}$, for each $k$-ary function symbol $f$ in $\vec{f}$ and $k$-tuple $\vec{a}\in A^k$.
\end{itemize}
In the sequel, the variables $z_{\vec{a},f}$ 
will range over real numbers.

Let $\mA$ be a $\tau\cup\sigma$-structure and $s$ an assignment for the free variables in $\phi$.
In the sequel, each interpretation for the variables in $\vec{y}$ yields a system of linear equations.
Given an interpretation $v\colon Y \rightarrow A$, we will denote by $S_v$ the related system of linear equations to be defined below.
We then set $\mathcal{S}_{\mA,s} \dfn \{ S_v \mid v:  Y \rightarrow A \}$.
%
The system of linear equations $S_v$ is defined as
 \(
S_v \dfn \bigcup_{u\colon X \rightarrow A} S_v^u, 
\)
where $S_v^u$ is defined as follows. Let $s_v^u$ denote the extension of $s$ that agrees with $u$ and $v$. We let $\theta_{v}^u$ denote the formula obtained from $\theta$ by the following simultaneous substitution:
If $(\psi_1\lor \psi_2)$ is a subformula of $\theta$ such that no function variable occurs in $\psi_i$, then $(\psi_1\lor \psi_2)$ is substituted with $\top$, if 
\begin{linenomath*}
\begin{equation}\label{eq:lp2}
\mA\models_{s_v^u} \psi_i,
\end{equation}
\end{linenomath*}
and with $\psi_{3-i}$ otherwise. The set $S_v^u$ is now generated from $\theta_{v}^u$ together with $u$ and $v$. Note that $\theta_{v}^u$ is a conjunction of first-order or numerical atoms $\theta_i$, $i\in I$, for some index set $I$. For each conjunct $\theta_i$ in which some $f\in\vec{f}$ occurs, add  $(\theta_i)_{s_v^u}$ to $S_v^u$, where  $(\psi)_{s_v^u}$ is defined recursively as follows:
\begin{linenomath*}
\begin{align*}
&(\neg \psi)_{s_v^u} \dfn \neg (\psi)_{s_v^u}, &
&(i e j)_{s_v^u} \dfn (i)_{s_v^u}\, e\,  (j)_{s_v^u}, \text{ for each $e\in\{=,<,\leq,+\}$},\\
& (f(\vec{z}))_{s_v^u} \dfn z_{s_v^u(\vec{z}), f}, &
&(\SUM_{\vec{z}} i)_{s_v^u}  \dfn  \sum_{a\in A^{\lvert \vec{z}\rvert}}  (i)_{s_v^u(\vec{a}/\vec{z})},\\
& (g(\vec{z}))_{s_v^u} \dfn  g^\mA(s_v^u(\vec{z})), &
& (x)_{s_v^u} \dfn s_v^u(x), \text{ for every variable $x$}.
\end{align*}
\end{linenomath*}
Let $\theta^*$ be the conjunction of those conjuncts of $\theta_{v}^u$ in which no $f\in\vec{f}$ occurs.
If $\mA\not\models_{s_v^u} \theta^*$, remove $S_v$ from $\mathcal{S}_{\mA,s}$.

Since $\phi$ is fixed, it is clear that $\mathcal{S}_{\mA,s}$ can be constructed in polynomial time with respect to $\lvert \mA \rvert$.
 Moreover, it is straightforward to show that there exists a solution for some $S\in \mathcal{S}_{\mA,s}$ exactly when $\mA\models_s \phi$.

Assume first that there exists an $S\in \mathcal{S}_{\mA,s}$ that has a solution.  Let $w\colon Z \rightarrow \RE$, where $Z \dfn \{z_{\vec{a},f} \mid f\in \vec{f} \text{ and }  \vec{a} \in A^{\ar(f)}\}$, be the function given by a solution for $S$. By construction, $S=S_v$, for some $v\colon Y\rightarrow A$. Let $\mA'$ be the expansion of $\mA$ that interprets each $f\in\vec{f}$ as the function $\vec{a}\mapsto w(z_{\vec{a},f})$. By construction,
\(
\mA'\models_{s^u_v} \theta^u_v
\)
for every $u\colon X\rightarrow A$. Now, from \eqref{eq:lp2} and the related substitutions, we obtain that
\(
\mA'\models_{s^u_v} \theta 
\)
for every $u\colon X\rightarrow A$, and hence
$\mA'\models_{s_v} \forall x_1 \ldots \forall x_n  \theta$.
From this $\mA \models_s \phi$ follows.

For the converse, assume that $\mA \models_s \phi$. Hence there exists an extension $s_v$ of $s$ and an expansion $\mA'$ of $\mA$ such that $\mA'\models_{s_v} \forall x_1 \ldots \forall x_n  \theta$. 
Now, by construction, it follows that $S_v \in \mathcal{S}_{\mA,s}$ and
\(
\mA'\models_{s^u_v} \theta^u_v,
\)
for every $u\colon X\rightarrow A$. Moreover, it follows that the function defined by $z_{\vec{a},f} \mapsto f^{\mA'}(\vec{a})$, for $f\in\vec{f}$ and $\vec{a}\in A^{\ar(f)}$, is a solution for $S_v$.
   
\end{proof}
The above proposition could be strengthened by relaxing the almost conjunctive requirement in any way such that \eqref{eq:lp2} can be still decided (i.e., it suffices that the satisfaction of $\psi_i$s do not depend on the interpretations of the functions in $\vec{f}$).
\begin{theorem}\label{thm:ptimedataesor}
The data complexity of almost conjuctive $\esor{\leq,+, \SUM, 0,1}$-formulae without free function variables and where no existential first-order quantifiers are in a scope of a universal first-order quantifier is in $\PTIME$.
\end{theorem}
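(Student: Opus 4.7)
My plan is to derive the theorem as a direct consequence of Proposition~\ref{prop:lp} combined with a polynomial-time linear programming feasibility procedure. Fix an almost conjunctive formula $\phi\in\esor{\leq,+,\SUM,0,1}$ of the form permitted by the theorem; since we are dealing with data complexity, $\phi$ is constant and the input consists of the $\RE$-structure $\mA$ together with an assignment $s$.

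First, I would invoke Proposition~\ref{prop:lp} to construct, in time polynomial in $|\mA|$, the family $\mathcal{S}_{\mA,s}=\{S_v\mid v\colon Y\to A\}$ of systems of linear inequations such that $\mA\models_s\phi$ iff some $S\in\mathcal{S}_{\mA,s}$ is feasible. The crucial observation for data complexity is that, since $\phi$ is fixed, the tuple $\vec y$ of existentially quantified first-order variables has constant length, so $|\mathcal{S}_{\mA,s}|=|A|^{|Y|}$ is polynomial in $|A|$; likewise each individual system $S_v=\bigcup_{u\colon X\to A}S_v^u$ has size polynomial in $|A|$ because $|X|$ is constant and each $S_v^u$ is built from the fixed quantifier-free matrix. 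Moreover, because $\phi$ has no free function variables, the coefficients appearing in every $S_v$ are (as stated in the proposition) integers, and their magnitudes are bounded in the bit-length of the input.

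Next, for each $S\in\mathcal{S}_{\mA,s}$ I would invoke a polynomial-time linear programming feasibility algorithm, e.g.\ Khachiyan's ellipsoid method, which decides feasibility of a system of linear inequalities with integer coefficients in time polynomial in the bit-length of the coefficients. Since both the number of systems and the bit-length of each system are polynomial in $|\mA|$, the overall procedure runs in polynomial time. It accepts iff at least one $S_v$ is feasible, which by Proposition~\ref{prop:lp} is exactly the condition $\mA\models_s\phi$.

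The main step that requires care is the appeal to polynomial-time LP feasibility: one must make sure that the reduction produces systems whose coefficients have polynomially bounded bit-lengths so that Khachiyan's bound applies. This is immediate here because the only constants appearing in terms are $0$ and $1$, and the values $g^{\mA}(\vec a)$ substituted for ground subterms of the form $g(\vec z)$ arise from the structure $\mA$ itself and therefore fit within the input size; summations $\SUM_{\vec z}$ expand into sums of polynomially many such bounded terms. With this observation the theorem follows.
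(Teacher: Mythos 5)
Your proposal is correct and takes essentially the same route as the paper's proof: invoke Proposition~\ref{prop:lp} to obtain, in polynomial time, a polynomial-size family of polynomial-size systems of linear inequations with integer coefficients, and then decide feasibility of each system with a polynomial-time linear programming algorithm (the paper cites Khachiyan), accepting iff some system is feasible. The only stray remark is your mention of values $g^{\mA}(\vec a)$ appearing as coefficients—since $\phi$ has no free function variables there are no such terms at all, which is exactly why all coefficients are integers of polynomially bounded bit-length—but this aside does not affect the argument.
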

\begin{proof}
Fix an almost conjuctive $\esor{\leq,+, \SUM, 0,1}$-formula $\phi$ of relational vocabulary $\tau$ of the required form. Given a $\tau\cup\emptyset$ structure $\mA$ and an assignment $s$ for the free variables of $\phi$, let $\mathcal{S}$ be the related polynomial size family of polynomial size systems of linear inequations with integer coefficients given by Proposition \ref{prop:lp}. Deciding whether a system of linear inequalities with integer coefficients has solutions can be done in polynomial time \cite{zbMATH03644821}. Thus checking whether there exists a system of linear inequalities $S\in\mathcal{S}$ that has a solution can be done in $\PTIME$ as well, from which the claim follows. 
   
\end{proof}

We will later show that probabilistic inclusion logic captures $\PTIME$   on finite ordered structures (Corollary \ref{cor:pincP}) and can be translated to almost conjunctive $\leau{\leq,\SUM, 0,1}$ (Lemma \ref{lem:fromFOprob}). Hence already almost conjuctive $\lear{\leq,\SUM, 0,1}$ captures $\PTIME$.

\begin{corollary}\label{cor:learptime}
Almost conjunctive $\lear{\leq,\SUM, 0,1}$ captures $\PTIME$ on finite ordered structures. 
\end{corollary}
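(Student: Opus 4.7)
The plan is to prove both directions of the capture claim: (i) every almost conjunctive $\lear{\leq,\SUM,0,1}$-sentence defines a PTIME-recognizable class of finite ordered structures, and (ii) every PTIME property of finite ordered structures is expressible by some such sentence.

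Direction (i) follows from Theorem \ref{thm:ptimedataesor}. A sentence $\phi$ of $\lear{\leq,\SUM,0,1}$ is in prenex $\existst^*\forall^*$ form, so no first-order existential quantifier lies in the scope of a first-order universal one; it has no free function variables; and it is syntactically a formula of $\esor{\leq,+,\SUM,0,1}$, since the loose restriction merely rules out negated numerical atoms and hence does not take us outside that logic. Thus the hypotheses of Theorem \ref{thm:ptimedataesor} are met and the data complexity of $\phi$ is in $\PTIME$.

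For direction (ii) I would chain the two forthcoming results. Given a PTIME property $P$ of finite ordered structures, Corollary \ref{cor:pincP} supplies a probabilistic inclusion logic sentence defining $P$, and Lemma \ref{lem:fromFOprob} translates that sentence to an equivalent almost conjunctive $\leau{\leq,\SUM,0,1}$-sentence of the form $\existst \vec f \, \forall \vec x \, \theta$. To convert this to a $\lear$-sentence I would relativize the existential function quantifiers to the interval $[0,1]$: for each $f_i \in \vec f$ of arity $k_i$ introduce a fresh universal tuple $\vec y_i$ of length $k_i$ and conjoin the bounds $0 \leq f_i(\vec y_i)$ and $f_i(\vec y_i) \leq 1$ to the matrix, yielding
\[
\existst \vec f \, \forall \vec x \, \forall \vec y_1 \cdots \forall \vec y_k \Bigl(\theta \wedge \bigwedge_i 0 \leq f_i(\vec y_i) \wedge \bigwedge_i f_i(\vec y_i) \leq 1\Bigr).
\]
This sentence remains in the $\existst^*\forall^*$ prefix class, remains almost conjunctive (no new disjunction appears), and its added numerical atoms occur positively so it stays in the loose fragment; moreover, its $\RE$-valued witnessing functions are now forced to be $[0,1]$-valued, so it defines the same class of finite ordered structures as the original $\leau$-sentence, namely $P$. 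The combination is largely bookkeeping; the one piece requiring a brief verification is the relativization step, but preservation of the loose and almost conjunctive restrictions is immediate by inspection.
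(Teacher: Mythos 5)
Your proposal is correct and follows essentially the same route as the paper: the upper bound via Theorem \ref{thm:ptimedataesor}, and the lower bound by chaining Corollary \ref{cor:pincP} with Lemma \ref{lem:fromFOprob} and then relativizing the $[0,1]$-valued function quantifiers to $\RE$-valued ones by adjoining the bounds $0 \leq f_i(\vec y_i) \leq 1$ (a step the paper treats as immediate, noting only that unit-interval quantification "is clearly expressible in $\esor{\leq,\SUM,0,1}$"). Your explicit verification that the relativization preserves the prefix class, the almost conjunctive form, and the loose restriction is exactly the bookkeeping the paper leaves implicit.
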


\subsection{Full additive $\ESO_\RE$}
The goal of this subsection is to prove the following theorem:

\begin{theorem}\label{thm:addesonp}
$\esor{\leq,+,\SUM,0,1}$ captures $\NP$ on finite structures.
\end{theorem}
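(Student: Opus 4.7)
The plan is to prove containment in both directions. For $\NP$-hardness I would appeal to Fagin's theorem and give a routine embedding of standard $\ESO$ into $\esor{\leq,+,\SUM,0,1}$: each second-order relation variable $R$ of arity $k$ is replaced by a function variable $f_R$ of arity $k$, pinned to $\{0,1\}$ by the universal conjunct $\forall\vec x\,(f_R(\vec x)\leq 0 \lor 1\leq f_R(\vec x))\land \forall\vec x\, f_R(\vec x)\leq 1$; atoms $R(\vec x)$ and $\neg R(\vec x)$ translate to $1\leq f_R(\vec x)$ and $f_R(\vec x)\leq 0$, respectively. Equality on numerical terms is simulated by two $\leq$-atoms, so the translation stays inside $\esor{\leq,+,\SUM,0,1}$.

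For membership in $\NP$ I would generalise the linear-system reduction of Proposition~\ref{prop:lp} to arbitrary sentences without free function variables. Fix such a sentence $\phi$. Since existential second-order quantifiers can always be pulled to the front of any formula of $\ESO_\RE$ (nested $\forall x\,\exists f$ is replaced by $\exists f'\,\forall x$ with $f'$ carrying the extra $x$-argument), and since any first-order existential quantifier nested under universal ones $\forall\vec x$ can be absorbed into a Skolem function $h\colon A^{|\vec x|}\to A$, I may assume $\phi\equiv\exists\vec g\,\exists\vec f\,\forall\vec x\,\theta$ with $\theta$ quantifier-free. As $\phi$ is fixed, I further convert $\theta$ into a disjunctive normal form $\bigvee_{k=1}^K\theta_k$ of constant size $K$, each $\theta_k$ a conjunction of literals. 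The key step is to introduce one more first-order Skolem function $c\colon A^{|\vec x|}\to\{1,\dots,K\}$ selecting a satisfied disjunct at each $\vec x$, so that $\phi$ becomes equivalent to $\exists c\,\exists\vec g\,\exists\vec f\,\forall\vec x\,\theta_{c(\vec x)}$.

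The $\NP$ algorithm on input $\mA$ then nondeterministically guesses the finite-domain Skolem functions $c$ and $\vec g$ as value tables of polynomial size. For every $\vec a\in A^{|\vec x|}$, the conjunction $\theta_{c(\vec a)}(\vec a)$ is evaluated: relational literals are checked against $\mA$ directly (any failure aborts the branch), while the numerical literals become linear (in)equalities—including strict ones coming from negated $\leq$-atoms—over the polynomially many fresh real variables $z_{\vec b, f}$ representing the unknown values $f(\vec b)$ for $f\in\vec f$, exactly as in Proposition~\ref{prop:lp}. Collecting these over all $\vec a$ yields a polynomial-size system of linear (in)equalities with integer coefficients, whose feasibility over $\RE$ is decidable in polynomial time. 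Correctness follows the same two-way argument as in Proposition~\ref{prop:lp}: a solution provides an interpretation of $\vec f$ that makes the expansion of $\mA$ satisfy $\forall\vec x\,\theta_{c(\vec x)}$, and conversely any satisfying expansion supplies such a solution.

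The main obstacle I anticipate is disjunctions that have numerical terms in both branches, which fall outside the almost-conjunctive regime of Proposition~\ref{prop:lp}; the fix above relies crucially on $\phi$ being fixed, because only then is the DNF size $K$ a constant and the disjunct-selecting Skolem function $c$ has a polynomial-size description that can be guessed in $\NP$. A secondary point worth verifying is that negated numerical atoms translate to strict linear inequalities, which still lie within polynomial-time LP feasibility.
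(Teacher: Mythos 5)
Your proposal is correct, but it reaches the $\NP$ upper bound by a genuinely different route than the paper. The paper's proof goes through computation over the reals: it first notes that $\SUM$ is eliminable and that the logic subsumes $\ESO$ (same hardness argument as yours, via Fagin), but for membership it invokes an adaptation of Gr\"adel--Meer (Theorem \ref{addnp}) to show that $\esor{\leq,+,0,1}$ captures $\NP^0_{\text{add}}$ on $\RE$-structures, and then proves the machine-theoretic collapse $\bp{\NP^{0}_{\text{add}}} = \NP$ (Theorem \ref{addboole}) by nondeterministically guessing the outcomes of all branch instructions of an additive BSS computation and reducing the remaining consistency check to linear programming. You instead bypass BSS machines entirely and generalise Proposition \ref{prop:lp} syntactically: prenexing and arity-increase push all function quantifiers outward, first-order existentials under universals become finite-domain Skolem tables that the $\NP$ machine guesses, and---this is your key added idea---arbitrary disjunctions (which break the almost-conjunctive hypothesis of Proposition \ref{prop:lp}) are tamed by a constant-size DNF together with a guessed disjunct-selector $c\colon A^{|\vec x|}\to\{1,\dots,K\}$, after which feasibility of the resulting polynomial-size system of strict and non-strict integer linear inequalities is checked in polynomial time. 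Both proofs ultimately rest on the same two ingredients (Khachiyan's algorithm and the $\epsilon$-trick for strict inequalities coming from negated $\leq$-atoms), so your flagged ``secondary point'' is indeed unproblematic. What each approach buys: yours is more self-contained and reuses the paper's own reduction machinery, needing no theory of real computation; the paper's detour yields byproducts of independent interest, namely the descriptive characterisation of additive BSS nondeterminism and the Boolean-part collapse extending Koiran's results. One presentational caveat in your write-up: the ``Skolem functions'' $h\colon A^{|\vec x|}\to A$ and the selector $c$ are not expressible in the logic itself (function variables range over maps into $\RE$), so the normal form $\exists\vec g\,\exists\vec f\,\forall\vec x\,\theta$ should be read as a semantic equivalence over a fixed structure, with $\vec g$ and $c$ living only as objects guessed by the algorithm---which is exactly how you use them, so no gap results.
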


First observe that $\SUM$ is definable in $\esor{\leq,+,0,1}$:  
Already $\esor{=}$ subsumes $\ESO$, and thus we may assume a built-in successor function $S$ and its associated minimal and maximal elements $\min$ and $\max$ on $k$-tuples over the finite part of the $\RE$-structure. Then, for a $k$-ary tuple of variables $\tuple x$, $\SUM_{\tuple x} i$ agrees with $f(\max)$, for any function variable $f$ satisfying $f(\min)= i(\tuple x \mapsto \min)$ and $f(S(\tuple x)) = f(\tuple x) + i(S(\tuple x))$. 
%

As $\esor{\leq,+, 0,1}$ subsumes $\ESO$, by Fagin's theorem, it can express all $\NP$ properties.
Thus we only need to prove that any $\esor{\leq,+, 0,1}$-definable property of finite structures is recognizable in $\NP$.  The proof relies on (descriptive) complexity theory over the reals. The fundamental result in this area is that existential second-order logic over the reals ($\esor{\leq,+,\times,(r)_{r\in \RE}}$) corresponds to non-deterministic polynomial time over the reals ($\NP_\RE$) for BSS machines \cite[Theorem 4.2]{GradelM95}. 
To continue from this, some additional terminology is needed. We refer the reader to \ref{sect:app} (or to the textbook \cite{BSSbook}) for more details about BSS machines.
Let  $C_{\RE}$ be a complexity class over the reals.
\begin{itemize}[nosep]
 \item $C_{\text{add}}$ is $C_\RE$ restricted to \emph{additive} BSS machines (i.e., without multiplication).
 \item $C^0_\RE$ is $C_\RE$ restricted to BSS machines with machine constants $0$ and $1$ only.
 \item $\bp{C_\RE}$ is $C_\RE$ restricted to languages of strings that contain only $0$ and $1$. 
 \end{itemize}
 A straightforward adaptation of \cite[Theorem 4.2]{GradelM95} yields the following theorem. 
\begin{theorem}[\cite{GradelM95}]\label{addnp}
$\esor{\leq,+,0,1}$ captures $\NP^0_{\emph{add}}$ on $\RE$-structures.
\end{theorem}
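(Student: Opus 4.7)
The plan is to adapt the proof of \cite[Theorem 4.2]{GradelM95} in a modular way, keeping the overall structure of the Fagin-style equivalence but tracking the two restrictions (no multiplication, only constants $0,1$) simultaneously. I would split the argument into the two inclusions and verify that each direction respects these restrictions.

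For the inclusion $\esor{\leq,+,0,1}\subseteq \NP^0_{\text{add}}$, given a sentence $\phi = \exists f_1 \cdots \exists f_n\, \psi$ with $\psi$ first-order over $\tau\cup\{f_1,\ldots,f_n\}$, I would describe an additive BSS machine with machine constants only $0,1$ that, on input an $\RE$-structure $\A$ of size $N$, first nondeterministically guesses the real values $f_i^{\A}(\vec a)$ for each $\vec a \in A^{\ar(f_i)}$ (polynomially many reals, each stored in a single register by definition of BSS machines), and then deterministically verifies $\A[h_1/f_1,\ldots,h_n/f_n]\models\psi$. The verification loops over all first-order assignments (polynomially many), evaluating numerical terms built from $+$, $0$, $1$ and the guessed/input function values using only additions, and evaluating atomic comparisons by $\leq$-tests. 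Since $\psi$ contains neither $\times$ nor non-$\{0,1\}$ constants, the machine requires neither feature. This yields membership in $\NP^0_{\text{add}}$.

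For the converse $\NP^0_{\text{add}} \subseteq \esor{\leq,+,0,1}$, I would follow the BSS-Fagin encoding of Gr\"adel--Meer. Fix an additive BSS machine $M$ with constants $0,1$ running in nondeterministic time $n^k$ on inputs encoded as $\RE$-structures. The formula existentially quantifies function variables $C,R,I,\ldots$ representing, at each time step $\vec t \in A^k$, the current instruction node, the contents of each register used, and the nondeterministic guesses (which for an additive BSS machine are themselves real numbers, quantified as values of an existential function symbol of the appropriate arity). A universal first-order formula $\forall \vec t\,\theta$ then asserts (i) the initial configuration matches the input, (ii) each transition is faithful to $M$'s instructions, and (iii) the final configuration is accepting. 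The crucial observation is that the transition relation of an additive BSS node is a finite disjunction over instruction types, each of which is expressed by either $\leq$-comparisons (branch nodes), assignments using only $+$ with constants $0,1$ (computation nodes), or copying register values, so $\theta$ fits inside $\esor{\leq,+,0,1}$. As in \cite{GradelM95}, a built-in ordering on tuples (available because $\esor{=}$ subsumes $\ESO$) gives access to successor, $\min$, and $\max$ for encoding time steps; this is precisely the same mechanism used earlier in the excerpt to define $\SUM$ from $+$ and quantified functions.

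The main obstacle, and the only place where one must be careful, is verifying that \emph{additivity is preserved end-to-end}: the guessed nondeterministic real inputs of $M$ must be expressible without multiplication, and no step of the register-update relation may require scaling by a machine constant other than $0$ or $1$. Both hold by definition of additive BSS machines with constants restricted to $\{0,1\}$, so once this restriction has been tracked through each clause of the transition encoding, the theorem follows. I would conclude by noting that this is exactly the proof of \cite[Theorem 4.2]{GradelM95} after deleting the clauses handling multiplication and after replacing arbitrary real constants $r\in\RE$ by $0,1$ throughout, justifying the ``straightforward adaptation'' remark.
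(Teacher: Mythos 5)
Your proposal is correct and takes essentially the same route as the paper: the paper gives no detailed argument for this theorem, asserting only that it is ``a straightforward adaptation of \cite[Theorem 4.2]{GradelM95}'', and your two-directional argument (guess-and-verify on an additive BSS machine with constants $0,1$ for membership, and a Fagin-style encoding of additive BSS configurations with $\leq$-branch tests and relational addition constraints for the converse) is precisely that adaptation with the two restrictions tracked through every clause. The details you gloss over --- prenexing nested function quantifiers by increasing arities, and handling subtraction nodes relationally via $z + x_k = x_j$ --- are routine and do not create a gap.
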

If we can establish that $\bp{\NP^0_{\text{add}}}$, the so-called \emph{Boolean part} of $\NP^0_{\text{add}}$, collapses to $\NP$, we have completed the proof of Theorem \ref{thm:addesonp}.
Observe that another variant of this theorem readily holds; 
 $\esor{=,+,(r)_{r\in \RE}}$-definable properties of $\RE$-structures are recognizable in $\NP_{\add}$ branching on equality, which in turn, over Boolean inputs, collapses to $\NP$ 
\cite[Theorem 3]{Koiran94}. 
 Here, restricting branching to equality is crucial. With no restrictions in place (the BSS machine by default branches on inequality and can use arbitrary reals as machine constants) $\NP_{\add}$ equals $\NPpoly$ over Boolean inputs \cite[Theorem 11]{Koiran94}. Adapting arguments from \cite{Koiran94}, we show next that disallowing machine constants other than $0$ and $1$, but allowing branching on inequality, is a mixture that leads to a collapse to $\NP$. 

\begin{restatable}{theorem}{readdboole}\label{addboole}
$\bp{\NP^{0}_{\emph{add}}} = \NP$.
\end{restatable}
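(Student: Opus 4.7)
The plan is to prove the two inclusions separately. The direction $\NP \subseteq \bp{\NP^{0}_{\text{add}}}$ is routine: any nondeterministic polynomial-time Turing machine is simulated by an additive BSS machine whose only machine constants are $0$ and $1$. Bits are stored as the reals $0$ and $1$; Boolean operations, indexing, and tape access are implemented by addition, subtraction, and branching; and each nondeterministic bit guess becomes a real guess $y$ followed by branches verifying $y=0$ or $y=1$.

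For the direction $\bp{\NP^{0}_{\text{add}}} \subseteq \NP$, fix a language $L \subseteq \{0,1\}^*$ and an additive BSS machine $M$ with constants $0,1$ that decides $L$ in time $T(n)=n^{O(1)}$. The key structural observation, adapted from \cite{Koiran94}, is that after every step of $M$ each register holds a linear form \(\sum_i a_i x_i + \sum_j b_j y_j + c\), where the $x_i$ are the input bits, the $y_j$ are the reals guessed so far, and $a_i,b_j,c \in \Z$ have absolute value at most $2^{T(n)}$ (each addition at most doubles magnitudes, and no constants other than $0,1$ are ever injected). Consequently, every branch along a computation path imposes a linear (strict or non-strict) inequality or equation in the $y_j$'s with integer coefficients of polynomial bit-length, once the Boolean input $x$ is substituted.

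The $\NP$ procedure for $L$ on input $x \in \{0,1\}^n$ then proceeds as follows: (i) nondeterministically guess a computation path of $M$ of length at most $T(n)$, encoded by the sequence of branch outcomes and the instructions executed; (ii) simulate $M$ along the path, maintaining each register content symbolically as a linear form in $y_1,\dots,y_m$ with $x$ substituted, and collecting the resulting polynomial-size system $\Sigma$ of integer-coefficient linear (in)equalities; (iii) check that the path ends in an accepting state; (iv) decide feasibility of $\Sigma$ over $\RE$. Step (iv) runs in deterministic polynomial time by Khachiyan's linear-programming feasibility algorithm \cite{zbMATH03644821}, with strict inequalities handled by the standard reduction that introduces a slack variable and maximizes it. Soundness and completeness follow since $M$ accepts $x$ iff some accepting path is realized by some real guess vector, which is exactly the condition that some guessed path produces a feasible $\Sigma$.

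The main obstacle is the control of bit-length in the structural observation, and this is precisely where the restriction to machine constants $0$ and $1$ is essential. If arbitrary real constants were allowed, the additive term $c$ could be an arbitrary real and the LP entries would no longer be rational or polynomially representable; this is the very gap responsible for the advice string in Koiran's identity $\bp{\NP_{\text{add}}}=\NPpoly$ \cite[Theorem~11]{Koiran94}. With constants confined to $0$ and $1$, no advice is needed and the argument collapses to plain $\NP$. A minor bookkeeping point is that both equality and strict inequality branches must be recorded separately in $\Sigma$ before invoking LP feasibility, but this is routine.
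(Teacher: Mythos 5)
Your proposal is correct and follows essentially the same route as the paper's proof: the easy inclusion by direct simulation, and the converse by nondeterministically guessing the branch outcomes, symbolically maintaining each register as an integer-coefficient linear form in the real guesses, and deciding feasibility of the resulting polynomial-size linear system via Khachiyan's algorithm, with strict inequalities handled by the slack-variable ($\epsilon$) maximization trick. Your explicit bound of $2^{T(n)}$ on coefficient magnitudes is a nice touch that the paper leaves implicit in its polynomial-time construction of the system.
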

\begin{proof}
Clearly  $\NP\leq \bp{\NP^{0}_{\add}}$; a Boolean guess for an input $\vec{x}$ can be constructed by comparing to zero each component of a real guess $\vec{y}$, and a polynomial-time Turing computation can be simulated by a polynomial-time BSS computation.

For the converse, let $L\sub \{0,1\}^*$ be a Boolean language that belongs to $\bp{\NP^{0}_\add}$; we need to show that $L$ belongs also to $\NP$. Let $M$ be a BSS machine such that its running time is bounded by some polynomial $p$, and for all Boolean inputs $\vec x \in \{0,1\}^*$, $\vec{x}\in L$ if and only if there is $\vec{y}\in \RE^{p(|x|)}$ such that $M$ accepts $(\vec{x},\vec{y})$.



We describe a non-deterministic algorithm that decides $L$ and runs in polynomial time. Given a Boolean input $\vec{x}$ of length $n$,  first guess the outcome of each comparison in the BSS computation; this guess is a Boolean string $\vec{z}$ of length $p(n)$. 
Note that each configuration of a polynomial time BSS computation can be encoded by a real string of polynomial length.
During the BSS computation the value of each coordinate of its configuration is a linear function on the constants $0$ and $1$, the input $\vec{x}$, and the real guess $\vec{y}$ of length $p(n)$. Thus it is possible to construct in polynomial time a system $\mathcal{S}$ of linear inequations on $\vec{y}$ of the form
\begin{linenomath*}
\begin{equation}\label{eq:linearsystem}
\sum_{j=1}^{p(n)} a_{ij}y_j  \leq 0 \quad  (1\leq i \leq m)  \quad \text{and} \quad \sum_{j=1}^{p(n)} b_{ij}y_j < 0 \quad (1\leq i \leq l),
\end{equation}
\end{linenomath*}
where $a_{ij}\in \mathbb{Z}$, such that $\vec{y}$ is a (real-valued) solution to $\mathcal{S}$ if and only if $M$ accepts $(\vec{x},\vec{y})$ with respect to the outcomes $\vec{z}$. In \eqref{eq:linearsystem}, the variables $y_j$ stand for elements of the real guess $\vec{y}$, and $m+l$ is the total number of comparisons.  Each comparison generates either a strict or a non-strict inequality, depending on the outcome encoded by $\vec{z}$.

Without loss of generality we may assume additional constraints of the form $y_j \geq 0$ $(1 \leq j \leq p(n))$ (cf. \cite[p. 86]{oldtextbook}). 
Transform then $\mathcal{S}$ to another system of inequalities $\mathcal{S'}$ obtained from $\mathcal{S}$ by replacing strict inequalities in \eqref{eq:linearsystem} by
\begin{linenomath*}
\begin{equation*}
\sum_{j=1}^{p(n)} b_{ij}y_j + \epsilon \leq 0 \quad (1\leq i \leq l)\quad \text{and} \quad \epsilon \leq 1,
\end{equation*}
\end{linenomath*}
Then determine the solution of the linear program: maximize $(\tuple 0,1)(\tuple y,\epsilon)^T$ subject to $\mathcal{S'}$ and $(\tuple y,\epsilon)\geq 0$. If there is no solution or the solution is zero, then reject; otherwise accept. Since $\mathcal{S'}$ is of polynomial size and linear programming is in polynomial time \cite{zbMATH03644821}, the algorithm runs in polynomial time. Clearly, the algorithm accepts $\vec{x}$ for some guess $\vec{z}$ if and only if $\vec{x}\in L$. 
\end{proof}

\section{Probabilistic team semantics and additive $\ESO_\RE$}
\subsection{Probabilistic team semantics}

Let $D$ be a finite set of first-order variables and $A$ a finite set. A \emph{team} $X$ is a set of assignments from $D$ to $A$. A \emph{probabilistic team} is a distribution $\X\colon X\rightarrow [0,1]$, where $X$ is a finite team.
Also the empty function is considered a probabilistic team.
We call $D$ the variable domain of both $X$ and $\X$, written $\Dom(\X)$ and $\Dom(X)$. $A$ is called the \emph{value domain} of $X$ and $\X$.

Let $\X:X \to [0,1]$ be a probabilistic team,  $x$ a variable, $V\sub \Dom(\X)$ a set of variables, and $A$ a set. The \emph{projection} of $\X$ on $V$ is defined as $\projection{\X}{V} :X\upharpoonright V \to [0,1]$ such that $s \mapsto \sum_{t\upharpoonright V=s} \X(t)$, where $X\upharpoonright V :=\{t\upharpoonright V \mid t\in X\}$.  Define $\supplement{x}{A}{\X}$ as the set of all probabilistic teams $\Y$ with variable domain $\Dom(\X)\cup \{x\}$ such that $\projection{\Y}{\Dom(\X)\setminus\{x\}}=\projection{\X}{\Dom(\X)\setminus\{x\}}$ and $A$ is a value domain of $\Y\upharpoonright \{x\}$. We denote by $\X[A/x]$ the unique $\Y\in \supplement{x}{A}{\X}$ such that
 \[\Y(s) = \frac{\projection{\X}{\Dom(\X)\setminus \{x\}}(s\upharpoonright \Dom(\X)\setminus \{x\})}{|A|}.\]
 If $x$ is a fresh variable, then this equation becomes
  \(\Y(s(a/x)) = \frac{\X(s)}{|A|}.\)
 We also define
$
X[A/x] \dfn \{s(a/x) \mid s\in X, a\in A \}
$, and write $\X[a/x]$ and $X[a/x]$ instead of $\X[\{a\}/x]$ and $X[\{a\}/x]$, for singletons $\{a\}$.

Let us also define some function arithmetic. 
Let $\alpha$ be a real number, and $f$ and $g$ be functions from a shared domain into real numbers. The scalar multiplication $\alpha f$ is a function defined by $(\alpha f)(x):=\alpha f(x)$. The addition  $f+g$ is defined as $(f+g)(x)=f(x)+g(x)$, and the multiplication $fg$ is defined as $(fg)(x):=f(x)g(x)$. In particular, if $f$ and $g$ are probabilistic teams and $\alpha +\beta =1$, then $\alpha f + \beta g$ is a probabilistic team.
 
 We define first probabilistic team semantics for first-order formulae. As is customary in the team semantics context, we restrict attention to formulae in negation normal form.
{If $\phi$ is a first-order formula,  we write $\phi^\bot$ for the equivalent formula obtained from $\neg \phi$ by pushing the negation in front of atomic formulae. If furthermore $\psi$ is some (not necessarily first-order) formula, we then use a shorthand $\phi\to \psi$ for the formula $\phi^{\bot} \lor (\phi \land \psi)$.}

\begin{definition}[Probabilistic team semantics]\label{def:semantics}
	Let $\A$ be a $\tau$-structure over a finite domain $A$, and $\X\colon X \to [0,1]$ a proba\-bi\-listic team. The satisfaction relation $\models_\X$ for first-order logic is defined as follows:
	\begin{tabbing}
		\hspace{-3mm}left \= $\A \models_{\X} (\psi \land \theta)$ \= $\Leftrightarrow$ \= $\forall s\in X: s(\tuple x) \in R^{\A}$\kill
		\> $\A \models_{\X} l$ \> $\Leftrightarrow$ \> $\forall s\in \supp{\X}: \A\models_s l$, where $l$ is a literal\\ 
		\> $\A\models_{\X} (\psi \land \theta)$ \> $\Leftrightarrow$ \> $\A \models_{\X} \psi \text{ and } \A \models_{\X} \theta$\\
		\> $\A\models_{\X} (\psi \lor \theta)$ \> $\Leftrightarrow$ \> $\A\models_{\Y} \psi \text{ and } \A \models_\Z \theta,
		\text{ for some probabilistic teams $\Y$ and $\Z$, and}$\\
		\>\>\>$\alpha\in[0,1]\text{ such that } \alpha\Y + (1-\alpha )\Z =\X$ \\
		\> $\A\models_{\X} \forall x\psi$ \> $\Leftrightarrow$ \> $\A\models_{\X[A/x]} \psi$\\
		\> $\A\models_{\X} \exists x\psi$ \> $\Leftrightarrow$ \> $\A\models_{\Y} \psi \text{ for some }\Y\in\supplement{x}{A}{\X}$
	\end{tabbing}
\end{definition}
The satisfaction relation $\models_s$ denotes the Tarski semantics of first-order logic. If $\phi$ is a \emph{sentence} (i.e., without free variables), then $\A$ \emph{satisfies} $\phi$, written $\A \models \phi$, if $\A\models_{\X_\emptyset} \phi$, where $\X_\emptyset$ is the distribution that maps the empty assignment to $1$.


We make use of a generalization of probabilistic team semantics where the requirement of being a distribution is dropped. A \emph{weighted team} is any non-negative weight function $\X\colon X \rightarrow \RE_{\geq 0}$. Given a first-order formula $\alpha$, we write $\X_{\alpha}$ for the restriction of the weighted team $\X$ to the assignments of $X$ satisfying $\alpha$ (with respect to the underlying structure). Moreover, the \emph{total weight} of a weighted team $\X$ is $|\X|:=\sum_{s\in X} \X(s)$. 

\begin{definition}[Weighted semantics]\label{def:weightedsemantics}
Let $\A$ be a $\tau$-structure over a finite domain $A$, and $\X\colon X \to \RE_{\geq 0}$ a weighted team. The satisfaction relation $\models^w_\X$ for first-order logic is defined exactly as in Definition \ref{def:semantics}, except that for $\lor$ we define instead:
\[
\mA \models^w_\X  (\psi \lor \theta) \quad\Leftrightarrow\quad  \A\models_{\Y} \psi \text{ and } \A \models_\Z \theta \text{ for some }\Y,\Z \text{ s.t. } \Y + \Z =\X.
\]
\end{definition}
%
We consider logics with the following atomic dependencies:
%
\begin{definition}[Dependencies]
Let $\A$ be a finite structure with universe $A$, $\X$ a weighted team, and $X$ a team.
\begin{itemize}
\setlength\itemsep{0em}
\item \textbf{Marginal identity and inclusion atoms}. If $\tuple x,\tuple y$ are variable sequences of length $k$, then $\tuple x \approx \tuple y$ is a \emph{marginal identity atom} and $\tuple x \sub \tuple y$ is an \emph{inclusion atom}  with satisfactions defined as:
\begin{linenomath*}
\begin{align*}
\A \models^w_{\X} \vec{x} \approx \vec{y} &\Leftrightarrow \lvert {\X}_{\vec{x}=\vec{a}} \rvert =  \lvert {\X}_{\vec{y}=\vec{a}} \rvert \text{ for each $\vec{a}\in A^k$},\\
\A\models_X \tuple x \sub \tuple y &\Leftrightarrow \text{for all }s\in X \text{ there is }s'\in X\text{ such that }s(\tuple x)=s'(\tuple y).
\end{align*}
\end{linenomath*}
\iflong 
\item \textbf{Probabilistic independence atom.} If $\tuple x,\tuple y,\tuple z$ are variable sequences, 
then $\pci{\tuple x}{\tuple y}{\tuple z}$ is a \emph{probabilistic (conditional) independence atom} with satisfaction defined as:
\begin{linenomath*}
\begin{align*}
	\A\models_{\X} \pcixyz \label{def1}
\end{align*}
\end{linenomath*}
if for all $s\colon\Var(\tuple x\tuple y\tuple z) \to A$ it holds that
\[
\lvert {\X}_{\tuple x\tuple y= s(\tuple x\tuple y)} \rvert \cdot  \lvert {\X}_{\tuple x\tuple z=s(\tuple x\tuple z)} \rvert = \lvert {\X}_{\tuple x\tuple y\tuple z=s(\tuple x\tuple y\tuple z)} \rvert \cdot \lvert {\X}_{\tuple x=s(\tuple x)} \rvert .
\]
We also write $\pmi{\tuple x}{\tuple y}$ for the \emph{probabilistic marginal independence atom}, defined as $\pci{\emptyset}{\tuple x}{\tuple y}$.

\fi

\item \textbf{Dependence atom}. For a sequence of variables $\tuple x$ and a variable $y$, $\dep(\tuple x, y)$ is a \emph{dependence atom} with satisfaction defined as: 
\[\A\models_X \dep(\tuple x, y) \Leftrightarrow \text{for all }s,s'\in X: \text{ if }s(\tuple x)=s'(\tuple x)\text{, then }s(y)=s'(y).\]
\end{itemize}
For probabilistic teams $\X$, the satisfaction relation is written without the superscript $w$. 
\end{definition}
Observe that any dependency $\alpha$ over team semantics can also be interpreted in probabilistic team semantics: $\A\models_\X \alpha$ iff $\A\models_{\supp{\X}} \alpha$. 
For a list $\calC$ of dependencies, we write $\FO(\calC)$ for the extension of first-order logic with the dependencies in $\calC$. The logics $\FO(\approx)$ and $\FO(\sub)$, in particular, are called \emph{probabilistic inclusion logic} and \emph{inclusion logic}, respectively. Furthermore, \emph{probabilistic independence logic} is denoted by $\FO(\cpind)$, and its restriction to probabilistic marginal independence atoms by $\FO(\pind)$. 
We write $\Fr(\phi)$ for the set free variables of $\phi \in \FO(\calC)$, defined as usual.  We conclude this section with a list of useful equivalences. We omit the proofs, which are straightforward structural inductions (\ref{i:alternateprop} was also proven in \cite{HHKKV19} and \ref{i:flatness} follows from \ref{i:conservative} and the flatness property of team semantics).

\begin{proposition}\label{propositions}
Let $\phi\in \FO(\calC)$, $\psi\in\FO(\approx,\calC)$, and $\theta \in \FO$,
where $\calC$ is a list of dependencies over team semantics. Let $\A$ be a structure, $\X$ a weighted team, and $r$ any positive real. The following equivalences hold: 
\vspace{-3mm}
\begin{multicols}{2}
\begin{enumerate}[label=(\roman*)]
	\item $\A\models^w_\X \phi \;\Leftrightarrow\; \A\models_{\supp{\X}} \phi$. 
	\label{i:conservative}
	\item  $\mA\models^w_\X \psi \;\Leftrightarrow\; \mA \models_{\frac{1}{\lvert \X \rvert} \X}~\psi$. \label{i:alternateprop}
	\item\label{i:scaling} $\A\models^w_\X \psi \;\Leftrightarrow\;  \A\models^w_{r\X} \psi$. \label{i:scaling}
	\item $\A\models^w_\X \psi \;\Leftrightarrow\; \A\models^w_{\X \upharpoonright V} \psi$, where
	 $\Fr(\psi)\sub V$. \label{i:locality}
	\item $\A\models^w_\X \theta \; \Leftrightarrow \; \A\models_s \theta$, for all
	 $s\in \supp{X}$. \label{i:flatness}
\end{enumerate}
\end{multicols}
\end{proposition}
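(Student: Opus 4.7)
The plan is to prove all five items by a simultaneous structural induction on the formula. Much of the work is shared across the clauses: the atoms in $\calC$ are by convention conservative over supports (since they are originally team-semantics dependencies), the marginal identity atom $\vec{x} \approx \vec{y}$ is scale-invariant, and the connective and quantifier cases reduce to simple algebraic manipulations of weights.

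For item \ref{i:conservative}, I would first observe that every atom $\alpha$ in $\calC$ satisfies $\A\models^w_{\X} \alpha \iff \A\models_{\supp{\X}} \alpha$ by the convention stated just above the proposition, while literals satisfy the same by the definition of $\models^w_\X l$. The main step is the disjunction case. Given a team-semantics splitting $\supp{\X} = Y \cup Z$ witnessing $\A\models_{\supp{\X}}(\psi \lor \theta)$, I would construct weighted teams $\Y, \Z$ with $\Y + \Z = \X$, $\supp{\Y} \subseteq Y$, and $\supp{\Z} \subseteq Z$ by assigning the full weight $\X(s)$ to $\Y$ when $s\in Y\setminus Z$, to $\Z$ when $s\in Z\setminus Y$, and splitting $\X(s)$ arbitrarily when $s\in Y\cap Z$; the converse direction is immediate since $\supp{\Y + \Z} = \supp{\Y} \cup \supp{\Z}$. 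The quantifier cases use the identities $\supp{\X[A/x]} = \supp{\X}[A/x]$ and the analogous identity for $\supplement{x}{A}{\X}$.

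For item \ref{i:scaling}, the key base case is the marginal identity atom: both $|\X_{\vec{x}=\vec{a}}|$ and $|\X_{\vec{y}=\vec{a}}|$ scale linearly in the team weights, so their equality is preserved under multiplication by any positive $r$. Atoms from $\calC$ are handled by item \ref{i:conservative}, since $\supp{\X} = \supp{r\X}$ when $r>0$. The disjunction case uses $r\Y + r\Z = r\X$, and the quantifier cases follow from $(r\X)[A/x] = r(\X[A/x])$ together with a corresponding identity for supplements. Item \ref{i:alternateprop} then follows by applying item \ref{i:scaling} with $r = 1/|\X|$, where $|\X|$ is positive in any nontrivial case. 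Item \ref{i:locality} is a standard locality argument, proceeding inductively by noting that projection commutes with $\X \mapsto \X[A/x]$ and with splittings. Finally, item \ref{i:flatness} follows from item \ref{i:conservative} combined with the flatness of first-order team semantics: $\A\models^w_\X \theta \iff \A\models_{\supp{\X}} \theta \iff \A\models_s \theta$ for every $s\in\supp{\X}$.

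The main obstacle will be the disjunction case shared by items \ref{i:conservative} and \ref{i:scaling}: one must verify in both directions that a splitting $\Y + \Z = \X$ of the weighted team corresponds to a splitting $\supp{\X} = Y \cup Z$ in the relational sense, and that the inductive hypothesis applies to the constructed subteams. The strict positivity of $r$ in item \ref{i:scaling} is also essential; without it, the bijection between splittings of $\X$ and of $r\X$ breaks down and supports can collapse.
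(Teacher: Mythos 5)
Your overall plan—simultaneous structural induction, atoms handled by the support convention, marginal identity by linearity of weights, and item (v) obtained from item (i) plus flatness—is exactly the route the paper has in mind (it omits the details as ``straightforward structural inductions,'' citing \cite{HHKKV19} for item (ii)). However, your treatment of item (ii) has a genuine gap. Applying item (iii) with $r=1/\lvert\X\rvert$ only yields $\A\models^w_\X\psi \Leftrightarrow \A\models^w_{\frac{1}{\lvert\X\rvert}\X}\psi$, an equivalence \emph{inside} weighted semantics. Item (ii) instead asserts $\A\models^w_\X\psi \Leftrightarrow \A\models_{\frac{1}{\lvert\X\rvert}\X}\psi$, where the right-hand side is the probabilistic team semantics of Definition \ref{def:semantics}. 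These are different satisfaction relations with different disjunction clauses: the weighted clause allows an arbitrary decomposition $\Y+\Z=\X$ into non-negative weighted teams, whereas the probabilistic clause requires a convex combination $\alpha\Y+(1-\alpha)\Z=\X$ of \emph{distributions}. The missing step—that $\models^w$ and $\models$ agree on distributions—is precisely the content of item (ii) that goes beyond scaling, and it needs its own inductive argument: in the disjunction case one normalizes the parts $\Y,\Z$ of a weighted split to the distributions $\frac{1}{\lvert\Y\rvert}\Y$ and $\frac{1}{\lvert\Z\rvert}\Z$ (using item (iii) and the induction hypothesis for item (ii)), and conversely de-normalizes a convex combination. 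This fits naturally into your simultaneous induction, but as written, ``item (ii) follows by applying item (iii)'' is a non sequitur.

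A smaller imprecision occurs in the disjunction case of item (i): the invariant $\supp{\Y}\subseteq Y$ you state is too weak. The induction hypothesis gives $\A\models^w_\Y\psi \Leftrightarrow \A\models_{\supp{\Y}}\psi$, but what you know is $\A\models_Y\psi$; since team semantics with dependencies from $\calC$ (e.g., inclusion atoms) is not downward closed, satisfaction by $Y$ does not transfer to a proper subset $\supp{\Y}\subsetneq Y$. You must choose the split so that $\supp{\Y}=Y$ and $\supp{\Z}=Z$ exactly—for instance, splitting the weight half-and-half on $Y\cap Z$—which your ``arbitrary'' splitting permits but does not guarantee.
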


\subsection{Expressivity of probabilistic inclusion logic}

We turn to the expressivity of probabilistic inclusion logic and its extension with dependence atoms. In particular, we relate these logics to existential second-order logic over the reals. We show that probabilistic inclusion logic extended with dependence atoms captures a fragment  in which arithmetic is restricted to summing. Furthermore, we show that leaving out dependence atoms is tantamount to restricting to sentences in almost conjunctive form with $\existst^*\forall^*$ quantifier prefix.

\vspace{-3mm}
\paragraph{Expressivity comparisons.}
 Fix a list  of atoms $\calC$ over probabilistic team semantics. For a probabilistic team $\X$ with variable domain $\{x_1, \ldots ,x_n\}$ and value domain $A$, the function $f_\X:A^n\to [0,1]$ is defined as the probability distribution such that $f_\X(s(\tuple x))=\X(s)$ for all  $s\in X$. For a formula $\phi \in \FO(\calC)$
of vocabulary $\tau$ and with free variables $\{x_1, \ldots ,x_n\}$, the class $\struc_{d[0,1]}(\phi)$ is defined as the class of $d[0,1]$-structures $\A$ over $\tau\cup\{f\}$ such that
 $(\A\upharpoonright \tau)\models_{\X} \phi$, where  $f_\X=f^\A$ and $\A\upharpoonright \tau$ is the finite $\tau$-structure underlying $\A$.
Let $\calL$ and $\calL'$ be two logics of which one is defined over (probabilistic) team semantics. We write $\calL\leq \calL'$ 
 if for every formula 
  $\phi\in \calL$ there is $\phi'\in \calL'$ such that $\struc_{d[0,1]}(\phi)= \struc_{d[0,1]}(\phi')$; again, $\equiv$ is a shorthand for $\leq $ both ways.

\begin{theorem}\label{thm:FOprob}
The following equivalences hold:
\begin{enumerate}[nosep,label=(\roman*)]
\item $\FO(\approx,\deps)\equiv \pesou{=,+,0,1}$. \label{i:equivyks}
\item $\FO(\approx)\equiv \text{ almost conjunctive }\leau{=,\SUM,0,1}$. \label{i:equivkaks}
\end{enumerate}
\end{theorem}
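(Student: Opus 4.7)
Each of (i) and (ii) requires two directions, and I would prove all four by induction on formula structure, exploiting the correspondence between a probabilistic team $\X$ with variable domain $\tuple x=x_1\dots x_n$ and the distribution $f_\X\colon A^n\to[0,1]$ defined by $f_\X(s(\tuple x))=\X(s)$. Throughout, the scaling and locality properties of Proposition~\ref{propositions} let me move between probabilistic teams and arbitrary weighted teams, which in turn matches the $[0,1]$-valued function quantification of the loose ESO fragments on the right-hand sides.

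For the forward direction $\FO(\approx,\deps)\leq\pesou{=,+,0,1}$, I would define inductively a translation sending $\phi(\tuple x)$ to a loose ESO formula $\phi^*(f)$ with $f$ an $n$-ary function variable such that $\mA\models_\X\phi$ iff the expansion of $\mA$ by $f_\X$ satisfies $\phi^*(f)$. First-order literals $\alpha(\tuple z)$ become $\forall\tuple z\,\forall\tuple y\,(f(\tuple z,\tuple y)=0\lor\alpha(\tuple z))$ so that the literal must hold wherever $f$ has positive marginal; a marginal identity $\tuple x\approx\tuple y$ becomes an equality of marginal sums of $f$ (the sums themselves being simulated by $+$ and auxiliary function variables, as already observed in the paper for $\esor{=,+,0,1}$); a dependence atom $\deps(\tuple x,y)$ is rendered by existentially quantifying a Skolem function $h$ and asserting that on the support of $f$ we have $y=h(\tuple x)$. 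Conjunctions reuse $f$, disjunctions $\psi\lor\theta$ become $\existst g\,\existst h\,(f=g+h\land\psi^*(g)\land\theta^*(h))$ using pointwise sums of $[0,1]$-valued functions, and the quantifiers $\exists x$ and $\forall x$ are handled by quantifying a distribution on the extended variable domain. For the variant (ii), the same construction applied to $\FO(\approx)$ produces an almost conjunctive formula because, with dependence atoms removed, no subtranslation introduces numerical atoms on both sides of a disjunction; indeed, Lemma~\ref{lem:fromFOprob} already yields a formula in almost conjunctive $\leau{\leq,\SUM,0,1}$ with $\existst^*\forall^*$ prefix, and upgrading from $\leq$ to $=$ requires only the standard loose-fragment trick of introducing an auxiliary $[0,1]$-valued witness for the slack.

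For the reverse direction $\pesou{=,+,0,1}\leq\FO(\approx,\deps)$, I would first put the input ESO formula into a prefix normal form with a quantifier-free matrix $\theta$ that is a positive Boolean combination of first-order atoms and numerical atoms $t_1=t_2$ over $+$ (and $\SUM$ in case of (ii)). Each existentially quantified function variable $f$ of arity $k$ is represented in the probabilistic team by a block of $k{+}1$ fresh first-order variables $\tuple v_f,w_f$, with a dependence atom $\deps(\tuple v_f,w_f)$ guaranteeing that $w_f$ is functionally determined by $\tuple v_f$; the marginal distribution of these variables then encodes $f$ after appropriate rescaling (using Proposition~\ref{propositions}\ref{i:scaling}). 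Universal first-order quantifiers in ESO translate to $\forall$ in team semantics, and each sum-equality between $+,\SUM$-terms reduces, via team splits and reindexing, to a conjunction of marginal identity atoms over the encoded variables, since summing out a variable is exactly marginalization. For the corresponding half of (ii), the unavailability of $\deps$ is matched by the almost conjunctive restriction on the ESO side: in that fragment every disjunction in $\theta$ has one purely first-order disjunct, which can be resolved by splitting the support of the team along the first-order disjunct using the flatness property (Proposition~\ref{propositions}\ref{i:flatness}), so no encoding of functional witnesses is needed.

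The step I expect to be most delicate is the back-translation of sum-equalities and their additive combinations into sequences of marginal identity atoms, especially in case (ii) where no dependence atom is available. Additive combinations such as $\SUM_{\tuple u}t_1+\SUM_{\tuple v}t_2=\SUM_{\tuple w}t_3$ do not directly match the single-equation shape of $\tuple x\approx\tuple y$ and must be unrolled by successive disjunctive splits of the probabilistic team, each introducing a convex combination that is then re-merged via marginal identities on auxiliary variables. Carrying this out while respecting the almost conjunctive restriction (never introducing a disjunction with numerical terms on both sides) will require a careful preliminary normalization of the input ESO formula so that each sum-equality is already in a shape directly matching a marginal identity atom; establishing the existence of such a normalization, and verifying that the $\existst^*\forall^*$ prefix form survives it, is the technical heart of (ii).
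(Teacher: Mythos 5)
Your forward direction is essentially the paper's: the clause-by-clause translation (literals guarded by $f(\vec x)=0$, marginal identity as equality of marginal sums, disjunction by splitting $f$, quantifiers by extending the function's arity) is exactly the construction behind Lemma~\ref{lem:fromFOprob}, and your observation that dropping dependence atoms keeps the result almost conjunctive with an $\existst^*\forall^*$ prefix is also the paper's. One syntactic slip: your rendering of $\deps(\tuple x,y)$ via a Skolem function $h$ with the atom $y=h(\tuple x)$ is not well-formed in this ESO dialect, since quantified function variables range over $A^{\ar(h)}\to\RE$ and there is no mixed atom comparing a first-order variable with a numerical term; the paper instead uses the purely first-order universal clause $\forall \vec{x}\,\tuple x'\big(f(\vec x)=0\lor f(\vec{x'})=0\lor \tuple x_0\neq\tuple x'_0\lor x_1=x'_1\big)$, which avoids the issue.

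The reverse direction, however, has a genuine gap, and it is the one you yourself flag: you assume a ``careful preliminary normalization'' bringing every numerical atom into a shape directly matching $\tuple x\approx\tuple y$ while preserving the $\existst^*\forall^*$ prefix and almost-conjunctiveness, but you neither construct it nor prove it exists --- and this normalization \emph{is} the technical heart of the theorem, not a step that can be deferred. The paper resolves it in three stages. First, $[0,1]$-valued function quantification and $+$ are traded for \emph{distribution} quantification and $\SUM$ (Lemma~\ref{lem:lics}, Proposition~\ref{prop:remove-d}, Lemma~\ref{lem:realzero}); this is what makes additive combinations of terms disappear, so your worry about unrolling $\SUM_{\tuple u}t_1+\SUM_{\tuple v}t_2=\SUM_{\tuple w}t_3$ never arises. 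Second, after eliminating dummy-sum instances (Lemma~\ref{lem:safeness}), fresh distribution variables are introduced for all subterms so that every numerical atom takes the single form $f_i(\tuple w)=\SUM_{\tuple v}f_j(\tuple u,\tuple v)$, in Skolem normal form (Lemma~\ref{lemmakaks}). Third, each quantified distribution is encoded as the team marginal on a fresh existentially quantified tuple $\tuple y_i$ --- no dependence atoms are involved in this encoding --- and each normal-form atom is translated by the marker trick of Lemma~\ref{lemma:fromESOf}, introducing auxiliary variables $\alpha,\beta$ and one atom $\tuple x\alpha\approx\tuple x\beta$, verified by a relative-weight computation. This last point also shows why your encoding of function variables via $\deps(\tuple v_f,w_f)$ is both unnecessary and fatal for case (ii): in the paper, dependence atoms appear in the back-translation only to handle disjunctions with numerical terms on both sides (precisely what almost-conjunctiveness forbids), whereas your proposal leaves unexplained how function variables would be encoded at all in $\FO(\approx)$, where $\deps$ is unavailable.
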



We divide the proof of Theorem \ref{thm:FOprob} into two parts. In Section \ref{sect:prob-to-eso} we consider the direction from probabilistic team semantics to existential second-order logic over the reals, and in Section \ref{sect:eso-to-prob} we shift attention to the converse direction.
%
%
In order to simplify the presentation in the forthcoming subsections, we start by 
showing how to replace existential function quantification by  distribution quantification. 
 The following lemma in its original form includes multiplication (see \cite[Lemma 6.4]{abs-2003-00644}) but works also without it. %

\begin{lemma}[\cite{abs-2003-00644}]\label{lem:lics}
$\pesou{=,+,0,1}\equiv_{d[0,1]}  \pesod{=,\SUM}$. 
\end{lemma}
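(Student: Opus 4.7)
The plan is to establish both containments $\pesou{=,+,0,1}\leq_{d[0,1]} \pesod{=,\SUM}$ and $\pesod{=,\SUM}\leq_{d[0,1]} \pesou{=,+,0,1}$ by mutual syntactic translations preserving truth over $d[0,1]$-structures.

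For $\pesod{=,\SUM}\leq_{d[0,1]} \pesou{=,+,0,1}$, I would replace each existentially quantified distribution $\existst f$ by a $[0,1]$-valued function quantifier $\existst f$ together with the positive constraint that $f$ sums to $1$ over its whole domain. Both this constraint and every occurrence of a term $\SUM_{\vec y}\, i$ can be expressed with $+$ alone by the running-sum Skolemization already used in the preceding subsection: guess auxiliary Skolem functions encoding a linear order with successor $S$ and endpoints $\min, \max$ (available because $\esor{=}$ subsumes $\ESO$), and a running-sum Skolem $g$ satisfying $g(\min) = i(\min)$ and $g(S(\vec y)) = g(\vec y) + i(S(\vec y))$; then $\SUM_{\vec y}\, i$ equals $g(\max)$. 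Every numerical atom the translation introduces is a positive equality, so the result stays within the loose fragment.

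For the converse direction $\pesou{=,+,0,1}\leq_{d[0,1]} \pesod{=,\SUM}$, I would simulate each existentially quantified $[0,1]$-valued function $f\colon A^k \to [0,1]$ by an existentially quantified distribution $f'\colon A^{k+1} \to [0,1]$ whose last coordinate serves as a ``slack'' coordinate that absorbs the missing mass, so that $f(\vec x)$ is represented (up to a fixed global rescaling by $|A|^{k+1}$) as $f'(\vec x, a_0)$ for a fixed element $a_0 \in A$; this uniform rescaling cancels out of every atomic comparison that the source formula can form. An addition $i+j$ is rewritten as a single $\SUM$ over a designated two-element index set, using an auxiliary quantified distribution carrying the values of $i$ and $j$ at those two points and distributing the remaining mass elsewhere. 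The constants $0$ and $1$ are obtained respectively as the $\SUM$ over an empty tuple of indices and as $g(\emptyset)$ for a quantified distribution $g\colon A^0 \to [0,1]$.

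The main obstacle is the loose-fragment restriction: negated numerical atoms are forbidden, so every normalization condition and every ``exclusion'' property (e.g.\ ``this function vanishes outside a distinguished set'') must be coded with positive atoms only. A second delicate issue is the bookkeeping of normalization constants when encoding arbitrary $[0,1]$-valued functions as distributions, because $\pesod{=,\SUM}$ has no multiplication; one must ensure that every equation the source formula can state rescales uniformly by the same factor, so the factor cancels on both sides. Once these two points are handled, correctness of both translations is a routine induction on formula structure following the standard semantics of the two logics.
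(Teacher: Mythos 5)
First, for context: the paper does not reprove this lemma at all --- it imports it from \cite{abs-2003-00644} (Lemma 6.4 there), observing that the original argument, which includes multiplication, survives dropping it; the technique behind it is the scaling-and-padding construction that the paper does spell out in Proposition~\ref{prop:remove-d}. Your direction $\pesou{=,+,0,1}\leq_{d[0,1]}\pesod{=,\SUM}$ (slack coordinate absorbing the missing mass, uniform rescaling cancelling in equations) is exactly that technique, so in spirit you are aligned with the source. The other direction, however, contains a step that fails as described. The grammar of numerical terms allows $\SUM_{\vec y}\, i$ where $\vec y$ contains variables \emph{not occurring} in $i$; for instance $\SUM_x\, n$, with $n$ a nullary distribution, evaluates to $|A|$. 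Such ``dummy-sum'' terms, and their running partial sums, take values outside $[0,1]$, so they cannot be stored in the $[0,1]$-valued Skolem functions your running-sum recurrence $g(S(\vec y)) = g(\vec y) + i(S(\vec y))$ quantifies over: the recurrence simply has no admissible witness. This is precisely why the paper needs the separate, nontrivial preprocessing step of Lemma~\ref{lem:safeness} (eliminating dummy-sum instances) before terms of $\pesod{=,\SUM}$ can be treated as distribution marginals; the acknowledgements even record that this issue was originally overlooked. Your translation must invoke something of that kind, and then still argue that \emph{partial} sums of dummy-free terms stay in $[0,1]$.

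There are also inconsistencies in the scaling direction. You claim the global factor cancels out of every atomic comparison, but that requires rescaling the \emph{constants} as well: after scaling, the constant $1$ must be rendered as the value of a uniform distribution (a $1/n^k$-type value, as in Proposition~\ref{prop:remove-d}), whereas your $g()$ for a nullary distribution $g$ is identically $1$, i.e.\ the \emph{unscaled} constant, so an atom $i=1$ is translated incorrectly. Likewise ``$0$ as the $\SUM$ over an empty tuple'' is not right --- an empty sum is just the term itself; obtaining $0$ needs a construction like the diagonal trick of Lemma~\ref{lem:realzero}. Your rendering of $i+j$ as one $\SUM$ with ``remaining mass distributed elsewhere'' pollutes the very marginal being summed ($\SUM_y h(\vec x,y)$ then equals $i+j$ \emph{plus} the padding at $\vec x$); one needs an extra flag coordinate excluded from the sum. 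Free function symbols, which are distributions in $d[0,1]$-structures, must also be replaced by scaled copies, or mixed atoms break. Finally, your appeal to ``$\esor{=}$ subsumes $\ESO$'' to obtain a successor concerns the full logic (it uses negated numerical atoms); inside the loose fragment relation quantification has to be re-derived via $0/1$-valued functions and positive atoms, e.g.\ $\forall\vec x\,(f_R(\vec x)=0 \lor f_R(\vec x)=1)$, which is doable with the constants $0,1$ but is exactly the step you defer rather than carry out.
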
 
The proof, however, does not preserve the almost conjunctive form. That case is dealt with separately in Proposition \ref{prop:remove-d}. 
As shown next, we can utilize in this proposition the fact that the real constants $0$ and $1$ are definable in almost conjunctive $\lead{=,\SUM}$.

\begin{lemma}\label{lem:realzero}
$ \pesod{=,\SUM} \equiv_{\RE} \pesod{=,\SUM,0,1}$. The same holds when both logics are restricted to almost conjunctive formulae of the prefix class $\existst^*\forall^*$.
\end{lemma}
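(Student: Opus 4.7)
The forward direction $\pesod{=,\SUM} \leq_\RE \pesod{=,\SUM,0,1}$ is a trivial syntactic inclusion, and likewise when restricted to almost conjunctive $\existst^*\forall^*$-formulae. For the converse, my plan is to eliminate the constants $0$ and $1$ from a given $\phi \in \pesod{=,\SUM,0,1}$, exploiting that in $\pesod{}$ the existentially quantified function variables are automatically probability distributions.

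To handle the constant $1$: I will introduce a fresh unary function variable $f_1$ quantified existentially in the outermost $\existst$-block. Then on any $d[0,1]$-structure $\SUM_y f_1(y)$ evaluates to $1$ automatically, so every occurrence of the constant $1$ can be replaced by the term $\SUM_y f_1(y)$.

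The interesting step is producing a term that always evaluates to $0$. The plan is to introduce a fresh binary function variable $f_0$ and impose the positive universal numerical constraint $\forall u (\SUM_z f_0(u,z) = f_0(u,u))$. This forces $\sum_{z \neq u} f_0(u,z) = 0$, and since $f_0$ takes values in $[0,1]$ this yields $f_0(u,z) = 0$ for every $u \neq z$. I then replace every occurrence of $0$ by $f_0(x_0,x_1)$, where $x_0,x_1$ are freshly added universally quantified variables, provided they are distinct. Distinctness is arranged by guarding the substituted matrix with the disjunction $(x_0 = x_1) \lor \phi^*$, which is trivially satisfied when $x_0 = x_1$ and, when $x_0 \neq x_1$, coincides with $\phi$ (since the substitutions then evaluate to the actual constants). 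The standing assumption that every universe has at least two elements guarantees that a distinct pair exists.

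For the restricted fragment: given $\phi = \existst \vec f \forall \vec x\, \theta$ with $\theta$ almost conjunctive, the constructed formula
\[
\phi' := \existst f_0 \existst f_1 \existst \vec f \forall u \forall x_0 \forall x_1 \forall \vec x \big((\SUM_z f_0(u,z) = f_0(u,u)) \land ((x_0=x_1) \lor \theta^*)\big)
\]
is still in $\existst^*\forall^*$-prenex form, and the only newly introduced disjunction $(x_0=x_1) \lor \theta^*$ has a first-order left disjunct free of numerical terms; disjunctions inside $\theta^*$ inherit almost conjunctivity from $\theta$ because the substitutions act only in positions already occupied by numerical terms. The main obstacle is producing $0$ as a term inside the loose fragment, where numerical inequalities are forbidden so standard tricks like ``$i = 0$ iff $\neg(i + j = j)$'' are unavailable; the diagonal-support device together with the outer universal guard resolves this while simultaneously respecting the prefix class and the almost conjunctive restriction.
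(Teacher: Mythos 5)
Your proof is correct and follows essentially the same route as the paper's: both manufacture $1$ from the fact that quantified functions range over distributions (the paper uses a nullary distribution $n$, you use $\SUM_y f_1(y)$), manufacture $0$ by forcing a binary distribution to vanish off the diagonal, and then substitute under a first-order guard of the form $x_0 = x_1 \lor {}$(matrix) with fresh universally quantified variables, relying on the standing at-least-two-element assumption. The paper's constraint $\forall x\, (f(x)=h(x,x))$ with an auxiliary unary distribution and your row-sum constraint $\forall u\, (\SUM_z f_0(u,z) = f_0(u,u))$ are interchangeable variants of the same diagonal trick, and both preserve the $\existst^*\forall^*$ prefix and almost conjunctivity.
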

\begin{proof}
Any formula $\theta$ involving $0$ or $1$ can be equivalently expressed as follows:
\[
\exists n \exists f \exists h \forall x \forall y \forall z \Big(f(x)=h(x,x) \land \big(y=z \lor \theta(h(y,z)/0, n/1)\big)\Big),
\]
where $n$ is nullary.
    \end{proof}

%

\begin{restatable}{proposition}{reremoved}\label{prop:remove-d}
	$\pesou{=,\SUM,0,1}\equiv_{[0,1]}  \pesod{=,\SUM}$. The same holds when both logics are restricted to almost conjunctive formulae of the prefix class $\existst^*\forall^*$.
\end{restatable}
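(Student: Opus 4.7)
The plan is to prove both inclusions of $\equiv_{[0,1]}$ separately, tackling the easier one first. For $\pesod{=,\SUM} \leq_{[0,1]} \pesou{=,\SUM,0,1}$, I replace each distribution quantifier $\existst g$ of arity $k$ in the given $\phi$ by a $[0,1]$-function quantifier of the same arity, and conjoin the atomic constraint $\SUM_{\vec{y}} g(\vec{y}) = 1$ (using the constant $1$). Since the added conjunct is atomic, both the almost-conjunctive property and the $\existst^*\forall^*$ prefix are trivially preserved.

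For the harder direction $\pesou{=,\SUM,0,1} \leq_{[0,1]} \pesod{=,\SUM}$, my approach is to first translate $\phi$ into $\pesod{=,\SUM,0,1}$ and then invoke Lemma~\ref{lem:realzero} to eliminate the constants $0, 1$. The encoding step replaces each existential $[0,1]$-function $\existst f$ of arity $k$ by two existential distribution quantifiers $g, h \colon A^{2k} \to [0,1]$. Letting $N = |A|^k$ and $\vec{z}$ be a fresh $k$-tuple of variables, observe that $\SUM_{\vec{z}} g(\vec{v}, \vec{v}) = N \cdot g(\vec{v}, \vec{v})$ because $\vec{z}$ does not occur in the summand. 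I then substitute each occurrence of $f(\vec{v})$ in $\phi$ by the term $\SUM_{\vec{z}} g(\vec{v}, \vec{v})$ and conjoin the constraint $\forall \vec{x'}\,(\SUM_{\vec{z}} g(\vec{x'}, \vec{x'}) + \SUM_{\vec{z}} h(\vec{x'}, \vec{x'}) = 1)$.

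For correctness, given a witness $f$ for the original formula, I define $g(\vec{a}, \vec{a}) := f(\vec{a})/N$ and $h(\vec{a}, \vec{a}) := (1 - f(\vec{a}))/N$, padding off-diagonal entries uniformly to turn both $g$ and $h$ into genuine distributions. Conversely, given distributions $g, h$ satisfying the translated formula, the recovered function $f(\vec{a}) := N \cdot g(\vec{a}, \vec{a})$ is automatically $[0,1]$-valued, since the constraint combined with the non-negativity of $h$ forces $N \cdot g(\vec{a}, \vec{a}) \in [0,1]$. The main obstacle is precisely that the bound $N \cdot g(\vec{a}, \vec{a}) \leq 1$ cannot be expressed directly in the loose fragment, which lacks $\leq$; the slack distribution $h$ resolves this by packaging the bound together with the non-negativity of distributions into a single equality. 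For the almost-conjunctive $\existst^*\forall^*$ variant, the substitution introduces no new disjunctions, any numerical-term-free disjunct contains no $f$ and is unchanged after substitution, the new function quantifiers $\existst g, \existst h$ join the $\existst^*$ block, and $\forall \vec{x'}$ extends the $\forall^*$ block; Lemma~\ref{lem:realzero} preserves this same form by its second statement.
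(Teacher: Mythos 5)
Your overall route is essentially the paper's: the easy direction is handled identically (turn each distribution quantifier into a $[0,1]$-function quantifier constrained by $\SUM_{\vec y}g(\vec y)=1$), and for the hard direction both you and the paper work modulo Lemma~\ref{lem:realzero} and exploit the same $n^k$-scaling idea, encoding quantified $[0,1]$-functions as diagonal entries of higher-arity distributions. Your ``scale up by a dummy sum'' twist, $f(\vec v)\mapsto \SUM_{\vec z}\,g(\vec v,\vec v)$, is actually a pleasant variant: it leaves the constants and the free function symbols completely untouched, whereas the paper scales the entire formula down by $1/n^k$ (replacing the constant $1$ by $1/n^k$) and must additionally manufacture scaled copies of the free function symbols, which is the delicate part of its construction.

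There is, however, a genuine gap at the decisive step. Your conjoined constraint $\forall\vec x'\,\bigl(\SUM_{\vec z}g(\vec x',\vec x')+\SUM_{\vec z}h(\vec x',\vec x')=1\bigr)$ is not a formula of $\pesod{=,\SUM,0,1}$: the operation set of both the source and the target logic is $\{\SUM\}$, so a binary $+$ between two terms is just as unavailable as the $\leq$ you correctly identified as missing. This is not cosmetic, because enforcing the upper bound $N\cdot g(\vec a,\vec a)\leq 1$ is exactly where the whole difficulty of this direction is concentrated; $\SUM$ can only sum one term over \emph{all} tuples of the domain, so adding two unrelated terms requires a further encoding device. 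One that works: quantify an auxiliary distribution $c$ of arity $2k+2$ and a uniform $k$-ary distribution $u$ (so $u(\vec x)=1/n^k$), and demand $\SUM_{y}\,c(\vec x,\vec x,y,y)=g(\vec x,\vec x)$ together with $\SUM_{y,z}\,c(\vec x,\vec x,y,z)=u(\vec x)$; non-negativity of the off-diagonal entries then forces $g(\vec x,\vec x)\leq 1/n^k$, and conversely any bounded $g$ admits such a $c$. This uses only $=$, $\SUM$, conjunction, and universal quantifiers, so it keeps the almost conjunctive $\existst^*\forall^*$ form; some such device must replace your $+$, otherwise the translated formula is ill-formed. (For what it is worth, the paper's own proof leans on the same unproven point from the other side: it writes the bound as $f'_j(\vec y,y_l)\leq\frac{1}{n^k}$ with $\leq$ likewise outside the official signature.) A smaller defect: your construction fails for $0$-ary quantified functions, since then $N=1$ and $g,h$ are nullary distributions, hence both constantly $1$, making your constraint unsatisfiable; the paper disposes of this case in a footnote, and you need a similar arity-padding fix.
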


\begin{proof}
		The $\geq$-direction is trivial. We show the $\leq$-direction, which is similar to the proof of \cite[Lemma 6.4]{abs-2003-00644}. By Lemma \ref{lem:realzero}
		we may assume that almost conjunctive $\lead{=,\SUM}$ (as well as $\pesod{=,\SUM}$) contains real constants $0$ and $1$. 		
Suppose $\phi$ is some formula in $\pesou{=,\SUM,0,1}$.
Let $k$ be the maximal arity of any function variable/symbol appearing in $\phi$. 
The total sum of the weights of any interpretation of a function occurring in $\phi$ on a given structure, whose finite domain is of size $n$, is at most $n^k$.
We now show how to obtain from $\phi$ an equivalent formula in $\pesod{\SUM,=,0,1}$; the idea is to scale all function weights by $1/n^k$. Note first that the value $1/n^k$ can be expressed via a $k$-ary distribution variable $g$ as follows:
\[
\exists g \forall \vec{x} \vec{y} \, g(\vec{x}) = g(\vec{y}) 
\]
Below, we write $\frac{1}{n^k}$ instead of $g(\vec{x})$.

Suppose $\phi$ is of the form $\exists f_1\dots f_m \forall\vec{x} \theta$, where $\theta$ is quantifier free, and let $g_1,\dots, g_t$ be the list of (non-quantified) function symbols of $\phi$. Define
\[
\phi' \dfn \exists f'_1\dots f'_m g'_1 \dots g'_t \forall\vec{x}\vec{x}' \,( \psi \land \theta'),
\]
where each $f'_j$ ($g'(j)$, resp.) is an $\ar(f_j)+1$-ary ($\ar(g_j)+k+1$-ary, resp.) distribution variable and $\psi$ and $\theta'$ are as defined below. The universally quantified variables $\vec{x}'$ list all of the newly introduced variables of the construction below. The formula $\psi$ is used to express that each $f'_j$ ($g'(j)$, resp.) is an $1/n^k$-scaled copy of $f_j$ ($g(j)$, resp.). That  is, $\psi$ is defined as the formula 
\[
\bigwedge_{i \leq m} f'_j(\vec{y},y_l) \leq \frac{1}{n^k} \land \bigwedge_{i \leq t} \big(g'_j(\vec{y},\vec{z}, z_l)=g'_j(\vec{y}, \vec{z'}, z'_l) \land \SUM_{\vec{z}} g'_j(\vec{y}, \vec{z}, z_l) = g_j(\vec{y})\big),
\]
where $y_l$ and $z_l$ (here and below) denote the last elements of the tuples $\vec{y}$ and $\vec{z}$, respectively.\footnote{For a $0$-ary function $f$, a construction  $f'(\vec{z},z_l)= f'(\vec{z}',z'_l)$ can be used instead.}
Finally $\theta'$ is obtained from $\theta$ by replacing expressions of the form $f_j(\vec{y})$ and $g_j(\vec{y})$ by $f'_j(\vec{y},y_l)$ and $g_j(\vec{y},\vec{z}, z_l)$, resp., and the real constant $1$ by $\frac{1}{n^k}$.
A straightforward inductive argument on the structure of formulae yields that, over $[0,1]$-structures, $\phi$ and $\phi'$ are equivalent. Note that $\phi'$ is an almost conjunctive formula of the prefix class $\existst^*\forall^*$, if $\phi$ is.
    \end{proof}


\subsection{From probabilistic team semantics to existential second-order logic}\label{sect:prob-to-eso}

Let $c$ and $d$ be two distinct constants. Let $\phi(\tuple x)\in \FO(\approx,\deps)$ be a formula whose free variables are from the sequence $\tuple x=(x_1, \ldots ,x_n)$. We now construct recursively an $ \pesou{=,\SUM,0,1}$-formula $\phi^*(f)$ that contains one free $n$-ary function variable $f$.
In this formula, a probabilistic team $\X$ is represented as a function $f_\X$ such that $\X(s) = f_\X(s(x_1), \ldots , s(x_n))$.
\begin{enumerate}[label=(\arabic*)]
\item If $\phi(\tuple x)$ is a first-order literal, then
\[
\phi^*(f):= \forall \vec{x} \big( f(\vec{x})=0 \lor \phi(\tuple x) \big).
\]
\item If $\phi(\tuple x)$ is a dependence atom of the form $\dep(\tuple x_0,x_1)$, then
\[
\phi^*(f):= \forall \vec{x} \, \tuple x'\big( f(\vec{x})=0 \lor f(\vec{x'})=0\lor \tuple x_0 \neq \tuple x'_0\lor x_1=x'_1 \big).
\]
\item
If $\phi(\tuple x)$ $\tuple x_0 \approx \tuple x_1$, where $\tuple x=\tuple x_0\tuple x_1\tuple x_2$, then
\[
\phi^*(f):=\forall \tuple y \, \SUM_{\vec{x}_1,\tuple x_2} f(\tuple y,\tuple x_1,\tuple x_2) =  \SUM_{\vec{x}_0,\tuple x_2} f(\tuple x_0,\tuple y,\tuple x_2).
\]
\item If $\phi(\tuple x)$ is of the form $\psi_0(\tuple x)\land\psi_1(\tuple x)$, then 
\[
\phi^*(f) := \psi_0^*(f)\land\psi_1^*(f).
\]
\item If $\phi(\tuple x)$ is of the form $\psi_0(\tuple x)\lor\psi_1(\tuple x)$, then 
\[
\phi^*(f) \dfn \exists g\forall \tuple x \, (\SUM_yg(\tuple x,y)=f(\tuple x) \land \forall y(y=c\lor y=d \lor g(\tuple x,y)=0)\wedge \psi^*_0(g^c) \land \psi^*_1(g^d)),
\]
where $g^i$ is of the same arity as $f$ and defined as $g^i(\tuple x) \dfn g(\tuple x,i)$.

\item If $\phi(\tuple x)$ is $\exists y \psi(\tuple x,y)$, then
\[ 
\phi^*(f) \dfn \exists g \big( ( \forall \vec{x} \, \SUM_y g(\vec{x},y) = f(\vec{x})) \land \psi^*(g) \big).
\]
\item If $\phi(\tuple x)$ is of the form $\forall y \psi(\tuple x,y)$, then 
\[
\phi^*(f) \dfn \exists g \big(  \forall \vec{x} (\forall y \forall z  g(\vec{x},y) = g(\vec{x},z) \land \, \SUM_y g(\vec{x},y) = f(\vec{x})     ) \land \psi^*(g) \big).
\]
\end{enumerate}
This translation leads to the following lemma, 
%
%
 
\begin{restatable}{lemma}{refromFOprob}\label{lem:fromFOprob}
The following hold:
\begin{enumerate}[nosep, label=(\roman*)]
\item $\FO(\approx,\deps)\leq \leau{= ,\SUM,0,1}$.\label{claim:1} 
\item $\FO(\approx,\deps)\leq \text{ almost conjunctive } \leaeu{= ,\SUM,0,1}$. \label{claim:2}
\item $\FO(\approx)\leq \text{ almost conjunctive } \leau{= ,\SUM,0,1}$. \label{claim:3}
\end{enumerate}
\end{restatable}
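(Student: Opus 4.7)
The plan is to establish the three claims by a structural induction on $\phi$, verifying that the translation $\phi\mapsto\phi^*(f)$ given in clauses (1)--(7) correctly captures $\phi$ and lands in the stated fragment, where the function variable $f$ encodes the probabilistic team $\X$ via $f=f_\X$. Correctness---$\A\models_\X \phi$ iff $(\A,f_\X)\models \phi^*(f)$---follows from the design of each clause: clauses (1) and (2) encode the support-based semantics of first-order literals and of $\dep$-atoms by effectively ranging only over tuples $\vec{x}$ with $f(\vec{x})\neq 0$; clause (3) rewrites marginal identity as equality of two $\SUM$-terms directly from its definition; and clauses (5)--(7) use auxiliary existential function quantifications to mirror the team-semantic splittings, supplements, and duplications of $\X$, exploiting the scale-invariance from Proposition \ref{propositions}(\ref{i:scaling}) to accommodate the fact that the witnessed sub-teams need not themselves be probability distributions.

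For claim (i), once correctness is in place, it suffices to observe that in every clause the only function quantifications are existential and the only first-order quantifications are universal. Because each recursive subformula $\psi_i^*(g^i)$ or $\psi^*(g)$ is a sentence (its only free variable is the displayed function variable), its entire prefix can be pulled out of the surrounding $\forall\vec{x}$-block; standard prenexing then fuses all outer function quantifiers to the front and all $\forall$'s directly after, yielding prefix $\existst^*\forall^*$. Since no negated numerical atom is ever introduced, $\phi^*$ lies in $\leau{=,\SUM,0,1}$.

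Claim (iii) follows from the same translation with clause (2) absent. I would verify that every disjunction introduced has at least one numerical-term-free disjunct: in clause (1) the right disjunct is a first-order literal; in clause (5) the triple disjunction $y=c\lor y=d\lor g(\tuple x,y)=0$ is almost conjunctive since $y=c\lor y=d$ is purely first-order. No other clause introduces a disjunction, so omitting clause (2) leaves the entire translation almost conjunctive with prefix $\existst^*\forall^*$.

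The main obstacle is claim (ii): clause (2) as stated is \emph{not} almost conjunctive, because $f(\vec{x})=0\lor f(\vec{x'})=0$ has numerical terms on both sides and no re-parenthesization can remove this. To recover almost conjunctivity while respecting the $\existst^*\forall^*\exists^*$ prefix of $\leaeu{=,\SUM,0,1}$, I would replace clause (2) by the equivalent
\[
\dep(\vec{x}_0,x_1)^*\dfn \forall \vec{x}_0 \exists y \Big(\SUM_{x_1,\vec{x}_2} f(\vec{x}_0,x_1,\vec{x}_2) = \SUM_{\vec{x}_2} f(\vec{x}_0,y,\vec{x}_2)\Big),
\]
where $\vec{x}_2$ lists the remaining free variables. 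This is correct because the right-hand sum is always bounded above by the left (nonnegativity of $f$), so equality forces all of the $\vec{x}_0$-conditional weight of $\X$ to be concentrated at the single value $x_1=y$---exactly dependence on $\supp{\X}$. The equation is a single numerical atom, trivially almost conjunctive, and the added $\forall\exists$ block merges cleanly with the $\forall^*$- and $\exists^*$-blocks arising from the other clauses during prenexing, leaving the overall prefix $\existst^*\forall^*\exists^*$ intact.
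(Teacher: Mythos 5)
Your overall strategy coincides with the paper's own proof: verify the clauses (1)--(7) by induction (using weighted semantics and scaling, i.e.\ Proposition \ref{propositions}\ref{i:alternateprop}--\ref{i:scaling}), analyse the quantifier pattern, and, for item \ref{claim:2}, replace the dependence-atom clause by a $\forall\exists$ formula equating two $\SUM$-terms. Your replacement $\forall \tuple x_0 \exists y\,\big(\SUM_{x_1\tuple x_2} f(\tuple x_0,x_1,\tuple x_2) = \SUM_{\tuple x_2} f(\tuple x_0,y,\tuple x_2)\big)$ is, up to swapping the two sides and renaming the witness variable, exactly the formula the paper uses, with the same concentration-of-weight justification; your almost-conjunctivity checks for clauses (1) and (5) are also correct.

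There is, however, a genuine gap: you never eliminate the two auxiliary domain constants $c$ and $d$ that clause (5) introduces (in the disjunct $y=c\lor y=d\lor g(\tuple x,y)=0$ and in the terms $g^c,g^d$). The relation $\leq$ being proved is defined through the classes $\struc_{d[0,1]}(\cdot)$ of $\tau\cup\{f\}$-structures, so the target formula must be over the vocabulary $\tau\cup\{f\}$; a formula mentioning two fresh constant symbols is simply not a formula of $\leau{=,\SUM,0,1}$. The paper discharges the constants with the formula \eqref{eq:psif}: it existentially quantifies a function $f'$ of arity $n+2$ satisfying $f'(\tuple x,c,d)=f(\tuple x)$, universally quantifies \emph{first-order variables} $c,d$, and guards the translation by $c\neq d \to {}\cdot{}$, whose unfolding $c=d \lor (c\neq d \land \cdot)$ keeps the formula almost conjunctive. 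This step is not mere bookkeeping, because it breaks the prenexing argument you rely on: once $c$ and $d$ are universally quantified variables rather than constants, the recursive subformulas are no longer closed in their first-order variables (they contain $c,d$ free), so their existential function quantifiers cannot simply be pulled out of the $\forall c\,d$ block. One must instead increase the arities of those function variables (and rewrite their occurrences as terms depending on $c,d$) to swap the existential function quantifiers past the universal first-order ones --- precisely the arity-increasing swap the paper performs. With this additional step your argument goes through and coincides with the paper's proof.
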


\begin{proof}
By item \ref{i:alternateprop} of Proposition \ref{propositions}, we may use weighted semantics (Definition \ref{def:weightedsemantics}).
Then, a straightforward induction shows that for all structures $\A$ and non-empty weighted teams $\X\colon X\to [0,1]$, with variable domain $\tuple x$, such that $|\X|\leq 1$,
\begin{linenomath*}
\begin{equation}\label{eq:esotrans}
\A\models^w_{\X} \phi(\tuple x) \iff (\A,f_\X)\models \phi^*(f).
\end{equation}
\end{linenomath*}
Furthermore, the extra constants $c$ and $d$ can be discarded. Define $\psi(f)$ as
\begin{linenomath*}
\begin{equation}\label{eq:psif}
\exists f'\forall cd \forall \tuple x  \big(f'(\tuple x,c,d) = f(\tuple x) \land \big(c\neq d \to \phi^{**}(f') )\big),
\end{equation}
\end{linenomath*}
where $\phi^{**}(f')$ is obtained from $\phi^*(f)$ by replacing function terms $f(t_1, \ldots ,t_n)$ with $f'(t_1, \ldots ,t_n, c, d)$. 
There are only existential function and universal first-order quantifiers in \eqref{eq:psif}.
By pushing these quantifiers in front, and by swapping the ordering of existential and universal quantifiers (by increasing the arity of function variables and associated function terms), we obtain a sentence $\psi^*(f)\in \lead{= ,\SUM,0,1}$ which, if substituted for $\phi^*(f)$, satisfies \eqref{eq:esotrans}.

Let us then  turn to the items of the lemma.
\begin{enumerate}[nosep, label=(\roman*)]
\item  The claim readily holds.
\item The claim follows if the translation for dependence atoms $\dep(\tuple x_0,x_1)$ and $\tuple x=\tuple x_0 x_1\tuple x_2$ is replaced by
\[\phi^*(f):= \forall \tuple x_0 \exists x_1  \SUM_{\tuple x_2}f(\vec{x})= \SUM_{x_1\tuple{x}_2} f(\vec{x}).\]
We conclude that $\phi^*(f)$ interprets the dependence atom in the correct way and it preserves the almost conjunctive form and the required prefix form.
\item  For the claim, it suffices to drop the translation of the dependence atom. 
\end{enumerate}
   
\end{proof}


This completes the ``$\leq$'' direction of Theorem \ref{thm:FOprob}. For \ref{i:equivyks}, this follows from \ref{claim:1} of Lemma \ref{lem:fromFOprob}, Proposition \ref{prop:remove-d}, and Lemma \ref{lem:lics}. For \ref{i:equivkaks}, only \ref{claim:3} of Lemma \ref{lem:fromFOprob} is needed.

 Recall from Proposition \ref{prop:lp} that almost conjunctive $\eear{\leq ,+,\SUM,0,1}$ is in PTIME in terms of data complexity. Since dependence logic captures $\NP$ \cite{vaananen07}, the previous lemma indicates that we have found, in some regard, a maximal tractable fragment of additive existential second-order logic. That is, dropping either the requirement of being almost conjunctive, or that of having the prefix form $\existst^*\exists^*\forall^*$, leads to a fragment that captures $\NP$; that $\NP$ is also an upper bound for these fragments follows by Theorem \ref{thm:addesonp}.

\begin{corollary}\label{main}
$\FO(\approx, \deps)$ captures $\NP$ on finite structures.
\end{corollary}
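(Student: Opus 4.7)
The plan is to prove $\NP$-hardness and $\NP$-membership of the data complexity of $\FO(\approx,\deps)$-sentences on finite structures separately.

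For $\NP$-hardness, I would invoke V\"a\"an\"anen's theorem that dependence logic $\FO(\deps)$ captures $\NP$ on finite structures. Since dependence atoms are among those of $\FO(\approx,\deps)$, every $\FO(\deps)$-sentence is also a sentence of $\FO(\approx,\deps)$, and by item \ref{i:conservative} of Proposition \ref{propositions} its probabilistic-team interpretation (over the initial team $\X_\emptyset$, whose support is $\{\emptyset\}$) coincides with its classical team-semantics interpretation. Hence every $\NP$ property of finite structures is already expressible.

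For $\NP$-membership, I would chain two earlier results. By Lemma \ref{lem:fromFOprob}(i), a sentence $\phi$ of $\FO(\approx,\deps)$ is equivalent to a formula $\phi^*(f)\in\leau{=,\SUM,0,1}$ whose only free symbol is a nullary function $f$ encoding $\X_\emptyset$, and hence forced to take the value $1$ under the $d[0,1]$ semantics. Substituting the constant $1$ for $f$ yields a sentence $\phi^\star\in\leau{=,\SUM,0,1}$ equivalent to $\phi$. To invoke Theorem \ref{thm:addesonp}, I would next lift the $[0,1]$-restriction on function quantification in $\phi^\star$: each existential block $\existst g$ is replaced by $\existst g$ ranging over arbitrary real-valued functions, with the conjunct $\forall \vec{x}\,(0 \leq g(\vec{x}) \land g(\vec{x}) \leq 1)$ appended to the body. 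Since $0$, $1$ and $\leq$ are available in the target signature, and since the new universal quantifiers can be absorbed into the existing $\forall^*$-prefix, this produces an equivalent sentence of $\esor{\leq,+,\SUM,0,1}$, whose data complexity is $\NP$ by Theorem \ref{thm:addesonp}.

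The main technical point is verifying that the $[0,1] \leadsto \RE$ relaxation is harmless: the added additive inequalities cut the real-valued witnesses back down to exactly the $[0,1]$-valued ones, so equivalence is preserved on every finite structure. Everything else is bookkeeping layered on top of Lemma \ref{lem:fromFOprob} and Theorem \ref{thm:addesonp}.
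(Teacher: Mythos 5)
Your proposal is correct and follows essentially the same route as the paper: $\NP$-hardness via V\"a\"an\"anen's theorem for $\FO(\deps)$ together with conservativity (Proposition \ref{propositions}\ref{i:conservative}), and $\NP$-membership by chaining Lemma \ref{lem:fromFOprob}\ref{claim:1} with Theorem \ref{thm:addesonp}. Your bridging step relaxing $[0,1]$-valued function quantification to real-valued quantification with added inequalities $0 \leq g(\vec{x}) \leq 1$ is exactly the observation the paper itself makes (just before Corollary \ref{cor:pincP}) that quantification of unit-interval-valued functions is expressible in $\esor{\leq,\SUM,0,1}$, so this is the intended argument spelled out in slightly more detail.
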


\subsection{From existential second-order logic to probabilistic team semantics}\label{sect:eso-to-prob}

Due to Lemma \ref{lem:lics} and Proposition \ref{prop:remove-d}, our aim is to translate $\pesod{=,\SUM}$ and almost conjunctive $\pesod{=,\SUM}$ to $\FO(\approx, \deps)$ and $\FO(\approx)$, respectively.
The following lemmas imply that we may restrict attention to formulae in Skolem normal form.\footnote{Lemma \ref{lemmakaks} was first presented in \cite[Lemma 3]{HKMV18} in a form that included multiplication. 
 We would like to thank Richard Wilke for noting that the construction used in \cite{HKMV18} to prove this lemma 
 had an element that yields circularity.
 Furthermore, we would like to than Joni Puljuj\"arvi for noting another issue which is circumvented by Lemma \ref{lem:safeness}.
	
}

We first need to get rid of all numerical terms whose interpretation does not belong to the unit interval. The only source of such terms are summation terms of the form $\SUM_{\vec{x}} i(\vec{y})$, where $\vec{x}$ is a sequence of variables that contain a variable $z$ not belonging to $\vec{y}$; we call such instances of $z$ \emph{dummy-sum instances}.
 For example, the summation term $\SUM_{x} n$, where $n$ is the nullary distribution and $x$ a dummy-sum instance, is always interpreted as the cardinality of the model's domain.
\begin{lemma}\label{lem:safeness}
For every $\pesod{=,\SUM}$-formula $\phi$ there exists an equivalent formula without dummy-sum instances.
\end{lemma}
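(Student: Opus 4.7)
I would proceed by structural induction on $\phi$. Via Lemma~\ref{lem:realzero} we may assume the real constants $0$ and $1$ are available. As a preprocessing step, flatten nested sums using the equivalence $\SUM_{\vec{y}_1}\SUM_{\vec{y}_2} i \equiv \SUM_{\vec{y}_1\vec{y}_2} i$, and replace every $\SUM_{\vec{y}} 0$ by the constant $0$. After this, every maximal summation term has the shape $\SUM_{\vec{y}} i$ where $i$ is either a function application $f(\vec{x})$ or the constant $1$. Non-atomic cases of the induction are handled by routinely passing to subformulas, so the only interesting step is an atomic subformula $t_1 = t_2$ containing dummy-sum instances.

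Write each $t_i = |A|^{k_i}\cdot v_i(\vec{u}_i)$, where $k_i$ counts the dummies in $t_i$ and $v_i$ is obtained by restricting each summation to the variables that genuinely occur in its summand. Each $v_i$ is then dummy-free and is either a marginal $\SUM_{\vec{x}'} f(\vec{x})$ of a quantified distribution (summing to $1$ over its free variables $\vec{u}_i$) or the constant $1$ (with $\vec{u}_i = \emptyset$, again summing to $1$). Assuming without loss of generality $k_1 \leq k_2$ and setting $j = k_2 - k_1$, the atomic formula is equivalent to $v_1(\vec{u}_1) = |A|^{j}\, v_2(\vec{u}_2)$.

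The heart of the argument is eliminating the $|A|^j$-factor without reintroducing dummies. I introduce a fresh $(|\vec{u}_1|+j)$-ary distribution variable $g$ together with a fresh $j$-tuple $\vec{z}$ of first-order variables, and rewrite $t_1 = t_2$ as
\[
\exists g\Big( \forall \vec{u}'\,\vec{z}\,\vec{z}'\, g(\vec{u}',\vec{z}) = g(\vec{u}',\vec{z}') \;\wedge\; \forall \vec{u}'\, \SUM_{\vec{z}} g(\vec{u}',\vec{z}) = v_1(\vec{u}') \;\wedge\; \forall \vec{z}\, g(\vec{u}_1,\vec{z}) = v_2(\vec{u}_2) \Big).
\]
The first two conjuncts pin $g(\vec{u}',\vec{z}) = v_1(\vec{u}')/|A|^j$ for every $\vec{u}', \vec{z}$, and because $v_1$ sums to $1$ over $\vec{u}'$ this choice of $g$ really is a distribution. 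The third conjunct, read off at the current assignment's values of the free variables $\vec{u}_1, \vec{u}_2$, then forces $v_1(\vec{u}_1)/|A|^j = v_2(\vec{u}_2)$, which is exactly the equation we wanted. Every sum in the rewritten formula is non-dummy: $\vec{z}$ genuinely appears as an argument of $g$, and $v_1, v_2$ are already dummy-free by (inner) induction.

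\textbf{Main obstacle.} The trickiest ingredient is ensuring that the witness $g$ is genuinely a distribution: if $v_1$ failed to sum to $1$ over $\vec{u}'$, no such $g$ could exist and the rewriting would be strictly stronger than the original equation. This is precisely where the preprocessing step pays off---after normalisation $v_1$ is always either a distribution marginal or the constant $1$, both of which sum to $1$. A final, purely syntactic step prenex-pulls the fresh $\exists g$ past the surrounding quantifiers of $\phi$ (permissible because $g$ is fresh) so that the overall output stays within $\pesod{=,\SUM}$.
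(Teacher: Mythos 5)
Your strategy differs from the paper's: you rewrite each numerical atom locally, absorbing the cardinality factor $|A|^{j}$ into a per-atom auxiliary distribution $g$, whereas the paper performs a single global transformation (every function symbol $f$ becomes $f^*$ of arity $\ar(f)+k$, required to be uniform over the $k$ extra coordinates, so that $f^*(\vec{x},\vec{v})=f(\vec{x})/n^k$; the dummy variables then occur inside every summand, so no sum is dummy, and equations are preserved because both sides scale by the common factor $1/n^k$). Unfortunately your version has a genuine gap, located exactly where you flag the ``main obstacle'': the claim that after preprocessing $v_1$ always sums to $1$ over its free variables is false. Flattening nested sums and deleting $\SUM_{\vec{y}}0$ does nothing about \emph{repeated variables} in function applications, which the syntax permits (the paper itself writes $f(x)=h(x,x)$ in the proof of Lemma \ref{lem:realzero}). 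Concretely, let $A=\{0,1\}$, let $f$ be the distribution with $f(0,0)=f(1,1)=0.2$ and $f(0,1)=f(1,0)=0.3$, let $h$ be the distribution $h(0)=0.1$, $h(1)=0.9$, and consider the atom $f(x,x)=\SUM_{y}h(x)$ under $x\mapsto 0$, so that $k_1=0$, $k_2=1$, $j=1$, $v_1=f(x,x)$, $v_2=h(x)$. The atom is true, since $0.2=2\cdot 0.1$. But in your rewriting the first two conjuncts pin $g(u',z)=f(u',u')/2$, whose total mass is $\sum_{u'}f(u',u')=0.4\neq 1$, so no \emph{distribution} $g$ satisfies them and the rewritten formula is false: precisely the ``strictly stronger'' failure you anticipated. (The constant $0$, which survives your preprocessing as a bare term, triggers the same failure against a side with more dummies, since conjunct two would then demand $\SUM_{\vec{z}}g(\vec{z})=0$.)

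A secondary flaw: the closing prenex step is not justified, because $\forall x\,\exists g\,\psi$ is \emph{not} equivalent to $\exists g\,\forall x\,\psi$ merely on account of $g$ being fresh; pulling an existential function quantifier across a universal first-order quantifier requires an arity increase, as in case (v) of the proof of Lemma \ref{lemmakaks}. This step is, however, also unnecessary, since the syntax allows $\exists g$ at any subformula position, so it is a removable blemish rather than a gap. The substantive defect is the repeated-variables one: your per-atom witness $g$ inherits a distribution constraint on \emph{instantiated values} of terms, which need not carry total mass $1$, whereas the paper's uniform rescaling imposes the distribution constraint only on the \emph{argument tuples} of $f^*$ and is therefore insensitive to how variables repeat inside terms. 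To salvage your route you would need an additional normalization eliminating repeated arguments (e.g., replacing $\theta(\ldots f(x,x)\ldots)$ by $\forall x'(\neg x=x' \lor \theta(\ldots f(x,x')\ldots))$) and a separate treatment of the constant $0$; as written, the induction step fails.
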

\begin{proof}
Let $k$ be the number of dummy sum-instances in $\phi$. Without loss of generality,  we may assume that each dummy sum-instance is manifested using a distinct variable in $\vec{v}=(v_1,\dots,v_k)$, whose only instance in $\phi$ is the related dummy sum-instance. It is straighforward to check that for any structure $\mA$ with cardinality $n$, the interpretation $t^\mA$ of any term $t$ appearing in $\phi$ is at most $n^k$. 
 
 We start the translation $\psi \mapsto \psi^*$ by scaling each function $f$ occurring in $\phi$ by $\frac{1}{n^k}$ 
 as follows.
 Define $f(\vec{x})\mapsto f^*(\vec{x},\vec{v})$.
 For Boolean connectives, $=$, $\SUM$, and first-order 
 {quantification} the translation is homomorphic. In the case for existential function quantification, the functions are scaled by increasing their arity by $k$ and stipulating that their weights are distributed evenly over the arity extension:
 \[
 \exists f \psi \mapsto \exists f^* \big(\forall \vec{x}\, \vec{v}\,\vec{w}\, f^*(\vec{x},\vec{v}) = f^*(\vec{x},\vec{w}) \land \phi^* \big).
 \]
 Let $f_1,\dots,f_t$
 be the list of free
 function variables of $\phi$ with arities $\lvert \vec{x}_1\rvert, \dots, \lvert \vec{x}_t\rvert$, respectively.
Now, define
\[
\phi^+ \dfn  \exists f^*_1 \dots f^*_t \Big( \bigwedge_{l\leq t} \big( \forall \vec{x}_l\,\SUM_{\vec{v}}f_l^*(\vec{x}_l,\vec{v}) = f_l(\vec{x}_l) \land \forall \vec{x}_l \, \vec{v} \, \vec{w} \, f^*_l(\vec{x}_l,\vec{v}) = f^*_l(\vec{x}_l,\vec{w}) \big)  \land \exists \vec{v} \, \phi^*\Big).
\]
It is now straightforward to check that $\phi^+$ and $\phi$ are equivalent, and that there are no dummy-sum instances in $\phi^+$.
\end{proof}

\begin{restatable}
{lemma}{relemmakaks}\label{lemmakaks}
For every formula $\phi\in\pesod{=,\SUM}$ there is a formula $\phi^*\in\lead{=,\SUM} $ such that 
$\struc_{d[0,1]}{(\phi)}=\struc_{d[0,1]}{(\phi^*)}$, 
and any second sort identity atom in $\phi^*$ is of the form 
$f_i(\tuple w) =\SUM_{\tuple v}f_j(\tuple u,\tuple v)$ for
distinct $f_i$ and $f_j$ of which at least one is quantified. Furthermore, $\phi^*$ is almost conjunctive if $\phi$ is almost conjunctive and in $\lead{=,\SUM}$.
\end{restatable}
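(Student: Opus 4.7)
The plan is to massage $\phi$ through a sequence of equivalence-preserving transformations that yield the required shape. I would first apply Lemma~\ref{lem:safeness} to eliminate dummy-sum instances from $\phi$, so that every occurrence of $\SUM_{\vec{z}} i$ has $\vec{z}$ entirely contained in the free variables of $i$; this controls the range of term values over $d[0,1]$-structures and simplifies the later steps. Next, I would push every complex numerical term into a fresh existentially quantified function variable (``hoisting''). Concretely, for each maximal sum subterm $t = \SUM_{\vec{v}} i(\vec{u},\vec{v})$ that appears inside a larger term or opposite a non-sum in an identity atom, introduce a fresh distribution variable $h_t$ of arity $|\vec{u}|$, conjoin the equation $\forall \vec{u}\, h_t(\vec{u}) = \SUM_{\vec{v}} i(\vec{u},\vec{v})$, and replace $t$ by $h_t(\vec{u})$ in the original atom. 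Working from the innermost outward, this reduces every identity atom to the required shape $f_i(\vec{w}) = \SUM_{\vec{v}} f_j(\vec{u},\vec{v})$ (with $\vec{v}$ possibly empty). If an atom reuses the same symbol on both sides, I would make a further copy $h$ via $\forall \vec{y}\, h(\vec{y}) = g(\vec{y})$ to enforce distinctness; since $h$ is among the newly quantified variables, the ``at least one is quantified'' condition holds.

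Then I would prenex the formula. The loose fragment forbids negated numerical atoms, so negations reside only on first-order literals and the standard prenex rules for $\land, \lor, \exists, \forall, \existst$ apply, yielding $\existst^{*}\, \theta$ with $\theta$ a first-order formula over the extended functional vocabulary. Finally, I would Skolemize: for each $\exists x$ occurring under $\forall \vec{y}$, introduce a fresh distribution variable $g_x$ of arity $|\vec{y}| + 1$, add a conjunct forcing its $\vec{y}$-marginal to be uniform (``$\forall \vec{y}\, \vec{y}'\, \SUM_x g_x(\vec{y},x) = \SUM_x g_x(\vec{y}',x)$''), and replace $\exists x\, \psi(x,\vec{y})$ by $\forall x\, (g_x(\vec{y},x) = \mathbf{0}\;\lor\; \psi(x,\vec{y}))$, where $\mathbf{0}$ is a term pinned to zero through auxiliary quantified functions. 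The uniform marginal guarantees that for each $\vec{y}$ some $x$ satisfies $g_x(\vec{y},x) \neq 0$, which delivers the Skolem witness. For the almost-conjunctiveness claim, observe that if $\phi$ is already in $\lead{=,\SUM}$ then Steps~3 and~4 are unnecessary, and Steps~1 and~2 add only conjuncts, so no new disjunctions are introduced.

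The principal obstacle is the Skolemization step: in the distribution setting, with only equality and $\SUM$ at our disposal, expressing ``$g_x(\vec{y},x) = 0$'' while keeping every atom in the prescribed shape $f_i(\vec{w}) = \SUM_{\vec{v}} f_j(\vec{u},\vec{v})$ (distinct symbols, at least one quantified, no explicit constants) requires a careful encoding. A natural route is to enlarge $g_x$ by an additional coordinate $v$ ranging over two ``indicator'' values and let a companion quantified distribution certify, via sum constraints, that the relevant slice vanishes, in the spirit of the scaled-copy trick used in the proof of Proposition~\ref{prop:remove-d}. Checking that the final formula satisfies the shape restriction on every atom and, when applicable, remains almost conjunctive is the bulk of the bookkeeping.
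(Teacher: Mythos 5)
Your outline matches the paper's strategy in most respects: eliminating dummy-sum instances via Lemma~\ref{lem:safeness}, flattening terms by introducing fresh quantified functions naming each sum subterm (the paper's recursive $\theta_i$ construction does exactly your ``hoisting''), and eliminating first-order existentials by a support trick with a fresh distribution $g$ and a conjunct of the form $g(\ldots)=0 \lor \psi$ (the paper's case for $\exists y$, with the constant $0$ made available by Lemma~\ref{lem:realzero}). However, there is a genuine gap at your Step~3, and it sits precisely at the technical heart of the lemma. You assert that ``the standard prenex rules for $\land,\lor,\exists,\forall,\existst$ apply.'' They do not in this setting: the quantifier $\existst f$ ranges over \emph{distributions}, and the standard second-order swap $\forall y\,\existst f\,\psi \;\equiv\; \existst F\,\forall y\,\psi[F(y,\cdot)/f]$ fails, because if every slice $F(y,\cdot)$ were a distribution then $F$ would have total mass $|A|$, so $F$ itself cannot be a distribution (and conversely, slices of a distribution of higher arity are not distributions). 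Since $\phi\in\pesod{=,\SUM}$ may have $\existst$ nested under first-order quantifiers, and your hoisted quantifiers $\existst h_t$ are likewise introduced under them, you cannot reach the prefix class $\existst^*\forall^*$ without confronting this swap; as written, your prenexing step is unsound.

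The paper's proof spends its main effort exactly here (case (v) of its recursion): to push $\existst \tuple f$ past $\forall y$, it replaces each $f$ by an arity-increased $f^*$ intended to satisfy $f^*(y,\tuple u)=\frac{1}{|A|}f_y(\tuple u)$, and, crucially, also introduces a scaled copy $g^*$ of every \emph{free} distribution variable $g$, constrained by $\SUM_y g^*(y,\tuple x)=g(\tuple x)$ together with uniformity in $y$, so that $g^*(y,\tuple x)=g(\tuple x)/|A|$. The point is that the flattened atoms $f_i(\tuple w)=\SUM_{\tuple v}f_j(\tuple u,\tuple v)$ are homogeneous, hence invariant under scaling both sides by the same factor; by rewriting the matrix so that it mentions only the scaled versions, satisfaction is preserved even though no multiplication by $|A|$ is available. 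Your proposal never performs (or even acknowledges) this scaling, while the obstacle you do flag --- encoding ``$=0$'' inside the Skolemization --- is comparatively routine and handled by Lemma~\ref{lem:realzero}. Note also that the paper's footnote records that an earlier published proof of this very lemma foundered on this step, which is a good indication that it cannot be waved through as ``standard prenexing.'' Your final remark is correct, though: when $\phi$ is already in $\lead{=,\SUM}$, no swapping is needed and your hoisting-only argument preserves almost-conjunctivity, which is all the last sentence of the lemma requires.
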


\begin{proof}
By the previous lemma, we may assume without loss of generality that $\phi$ does not contain any dummy-sum instances. That is, any summation term occurring in $\phi$ is of the form $\SUM_{\tuple v}i(\tuple u\tuple v)$, where it is to be noted that the variables of $\vec{v}$ occur free in the term $i$. This, in particular, implies that the terms of $\phi$ can be captured by using distributions.

First we define for each second sort term $i(\tuple x)$ a special formula $\theta_i$ defined recursively using fresh function symbols $f_i$ as follows:
\begin{itemize}
\item If $i(\tuple u)$ is $ g(\tuple u)$ where $g$ is a function symbol, then $\theta_i$ is defined as $ f_i(\tuple u)= g(\tuple u)$. (We may intepret $g(\tuple u)$ as $\SUM_{\emptyset} g(\tuple u)$).
\item If $i(\tuple u)$ is $\SUM_{\tuple v}j(\tuple u\tuple v)$, then $\theta_i$ is defined as $ \theta_j\wedge f_i(\tuple u)=\SUM_{\tuple v}f_j(\tuple u\tuple v)$.
\end{itemize}
The translation $\phi\mapsto \phi^*$  then proceeds recursively on the structure of $\phi$. By Lemma \ref{lem:realzero} we may use the real constant $0$ in the translation.
\begin{enumerate}[label=(\roman*)]
\item If $\phi$ is $  i(\tuple u)=j(\tuple v)$, then $\phi^*$ is defined as $\exists \tuple f (f_i(\tuple u)=f_j(\tuple v)  \wedge \theta_i\wedge 
\theta_j)$ 
where $\tuple f$ lists the function symbols $f_k$ for each subterm $k$ of $i$ or $j$.
\item If $\phi$ is an atom or negated atom of the first sort, then $\phi^*:=\phi$.
\item If $\phi$ is $ \psi_0\circ\psi_1$ where $\circ\in \{\vee,\wedge\}$, $\psi^*_0$ is $\exists\tuple  f_0\forall \tuple x_0\theta_0$, and $\psi^*_1$ is $\exists\tuple  f_1\forall\tuple x_1\theta_1$, then $\phi^*$ is defined as $\exists \tuple f_0\tuple f_1\forall \tuple x_0\tuple x_1 (\theta_0\circ \theta_1)$.
\item If $\phi$ is $\exists y\psi$ where $\psi^*$ is $\exists\tuple  f\forall \tuple x\theta$, then $\phi^*$ is defined as $\exists g\exists\tuple  f\forall \tuple x\forall y(g(y)=0\vee \theta)$. \label{eks}
\item  Suppose $\phi$ is $\forall y\psi$ where $\psi^*$ is $\exists\tuple  f\forall \tuple x\theta$. Let $\tuple g$ list the free distribution variables in $\phi$. Then $\phi^*$ is defined as 
\begin{align*}
\exists\tuple  f^*\exists \tuple g^*
\forall yy'\forall \tuple x \Big( &\bigwedge_{g^*\in \tuple g^*}\big( g^*(y,\tuple x)=g^*(y',\tuple x) \land \SUM_{y} g^*(y,\tuple x)=g(\tuple x)\big) \wedge \\
&\bigwedge_{f^*\in \tuple f^*} \big(f^*(y,\tuple x)=f^*(y',\tuple x) \big) \land \theta^* \Big),
\end{align*}
where $\tuple f^*$ ($\tuple g^*$, resp.) is obtained from $\tuple f$  ($\tuple g$, resp.) by replacing each $f$  ($g$, resp.) from $\tuple f$  ($\tuple g$, resp.) with $f^*$ ($g^*$, resp.) such that $\ar(f^*)=\ar(f)+1$ ($\ar(g^*)=\ar(g)+1$, resp.), 
and $\theta^*$ is obtained from $\theta$ by replacing all function terms $f(\tuple z)$ ($g(\tuple z)$, resp.) with $f^*(y, \tuple z)$ ($g^*(y, \tuple z)$, resp.).

\item If $\phi$ is $\exists f\psi$ where $\psi^*$ is $\exists\tuple  f\forall \tuple x\theta$, then $\phi^*$ is defined as $\exists f\psi^*$.
\end{enumerate}
It is straightforward to check that $\phi^*$ is of the correct form and equivalent to $\phi$. What happens in (v) is that instead of guessing for all $y$ some distribution $f_y$ with arity $\ar(f)$, we guess a single distribution $f^*$ with arity $\ar(f)+1$ such that $f^*(y,\tuple u)=\frac{1}{|A|}\cdot f_y(\tuple u)$, where $A$ is the underlying domain of the structure. 
Similarly, we guess a distribution $g^*$ for each free distribution variable $g$ such that $g^*(y,\tuple u)=\frac{1}{|A|}\cdot g(\tuple u)$. Observe that  case \ref{eks} does not occur if $\phi$ is in $\lead{\SUM, =}$; in such a case, a straightforward structural induction  shows that $\phi^*$ is almost conjuctive if $\phi$ is.
    \end{proof}

Using the obtained normal form for existential second-order logic over the reals we now proceed to the translation. This translation is similar to one found in \cite{HKMV18}, with the exception that probabilistic independence atoms cannot be used here.


\begin{restatable}{lemma}{refromESOf}\label{lemma:fromESOf}
Let $\phi(f) \in\lead{=,\SUM} $ be of the form described in Lemma \ref{lemmakaks}, with one free variable $f$.
Then there is a formula $\Phi(\tuple x)\in \FO(\approx,\deps)$ such that for all structures $\A$ and probabilistic teams $\X := f^\A$,
\(
\A\models_{\X} \Phi \iff (\A,f)\models \phi.
\)
Furthermore, if $\phi(f)$ is almost conjunctive,  then $\Phi(\tuple x)\in \FO(\approx)$.
\end{restatable}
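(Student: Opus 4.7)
The plan is to proceed by induction on the structure of $\phi$, which by Lemma \ref{lemmakaks} is of the form $\exists f_1 \ldots \exists f_m \forall \tuple y \, \theta$, where $\theta$ is quantifier-free and every second-sort identity atom has the shape $f_i(\tuple w) = \SUM_{\tuple v} f_j(\tuple u, \tuple v)$ with $f_i \neq f_j$. I would translate each existential function quantifier $\exists f_i$ to an existential first-order quantifier $\exists \tuple x_i$ over a fresh tuple of length $\ar(f_i)$, translate $\forall \tuple y$ unchanged, and translate the body $\theta$ recursively to a formula $\theta^*$ in $\FO(\approx,\deps)$. The central invariant is: after the block of existentials $\exists \tuple x_1 \ldots \exists \tuple x_m$ followed by $\forall \tuple y$, the probabilistic team $\Y$ has its marginal on $\tuple x_i$ equal to $f_i$ as a distribution, with $\tuple x_i$ independent of $\tuple y$. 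This works because $\forall \tuple y$ in probabilistic team semantics uniformly extends the team and thereby preserves marginals on variables already in scope.

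For the quantifier-free body, first-order literals remain as is, conjunctions translate componentwise, and a second-sort atom $f_i(\tuple w) = \SUM_{\tuple v} f_j(\tuple u, \tuple v)$ is translated using marginal identity atoms. The main observation is that under the invariant above, $f_i(s(\tuple w)) = \lvert \Y_{\tuple x_i = s(\tuple w)} \rvert$ and $\SUM_{\tuple v} f_j(s(\tuple u), \tuple v) = \lvert \Y_{\tuple x_j^u = s(\tuple u)}\rvert$, where $\tuple x_j^u$ is the subtuple of $\tuple x_j$ at the positions corresponding to $\tuple u$. Requiring these weights to coincide for every $s$ in the support of the universally extended team amounts to a collection of marginal-identity constraints that force $\tuple x_i$ and $\tuple x_j^u$ to share their common-variable-indexed marginals and to be conditionally uniform over the remaining positions. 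Concretely, for each shared-variable pattern between $\tuple w$ and $\tuple u$, one asserts atoms of the form $\tuple x_i \approx \tuple x_j^u$ together with additional atoms $\tuple x_i \approx (\tuple x_i\!\upharpoonright\!P,\tuple w')$ and $\tuple x_j^u \approx (\tuple x_j^u\!\upharpoonright\!P,\tuple u')$, where $\tuple w', \tuple u'$ are the non-shared universally quantified variables in $\tuple w, \tuple u$ respectively, and $P$ is the common-positions set. These marginal identities collectively encode the pointwise equality required by the atom.

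The remaining case is disjunction. In team semantics, $\psi_1 \lor \psi_2$ splits the team convexly, which generally distorts the marginals on $\tuple x_i$ in each component, breaking the invariant. In the almost conjunctive case, however, one disjunct contains no numerical terms and is hence a classical first-order formula, so I would split the team according to the truth value of this classical disjunct on each assignment; both resulting sub-teams inherit a well-defined $\tuple x_i$-marginal scaled by the split weights, and the invariant can be recovered using Proposition~\ref{propositions}\ref{i:scaling}, so no dependence atoms are needed. In the general case, I would use dependence atoms to freeze the functional role of each $f_i$: introduce, alongside $\exists \tuple x_i$, auxiliary variables $\tuple z_i$ with $\dep(\tuple x_i, \tuple z_i)$ encoding the weights of $f_i$ at each input, and reference $\tuple z_i$ in both sides of any split; this pins the distribution on $\tuple x_i$ in every sub-team to the original $f_i$. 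Correctness is verified by showing, by induction, that $\A \models_\Y \theta^*$ iff $(\A, f_1, \ldots, f_m) \models_s \theta$ holds for every $s$ in the support of $\Y$, where $f_i$ is read off as the $\tuple x_i$-marginal. The hardest step is the second-sort atom translation in the presence of variable overlap between $\tuple w$ and $\tuple u$, since getting the right list of marginal identity atoms requires a careful case split on the shared-variable pattern and using the uniformity of universally quantified variables to convert pointwise equalities into marginal equalities.
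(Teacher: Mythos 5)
Your skeleton---representing each quantified $f_i$ by an existentially quantified variable tuple whose marginal is $f_i$, keeping the universal block, and inducting on the body with the invariant that these marginals are independent of the universal variables---is exactly the paper's. But both hard cases of the induction contain genuine gaps. For the identity atoms, your reduction to ``shared-position marginals coincide plus conditional uniformity over the non-shared positions'' does not capture the ESO atom. Two concrete failures: (a) alignment---if a shared variable occupies different positions in $\tuple w$ and $\tuple u$, the position-wise atom $\tuple x_i \approx \tuple x_j^u$ compares the wrong coordinates and, combined with your uniformity atoms, forces the common marginal to be uniform, which the ESO atom does not require; (b) arity mismatch---when $|\tuple w| \neq |\tuple u|$ (legal in the normal form of Lemma \ref{lemmakaks}, e.g.\ arising from an atom $g(x)=g'(x,y)$), your atom $\tuple x_i \approx \tuple x_j^u$ is not even well-formed, and under any charitable repair the pointwise requirement becomes a \emph{scaled} identity between the shared-position marginals (scaling factor $|A|^{|\tuple w|-|\tuple u|}$), which over distributions is unsatisfiable, while your marginal identities assert plain equality and remain satisfiable; the ``iff'' of the lemma then fails. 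The paper's mechanism is genuinely different and is what makes this case work: it existentially quantifies two marker variables $\alpha,\beta$, forces $(\alpha = x \leftrightarrow \tuple w = \tuple y_i)$ and $(\beta = x \leftrightarrow \tuple u = \tuple y_{j1})$, and asserts the single atom $\tuple x\,\alpha \approx \tuple x\,\beta$ in which the \emph{entire universal tuple} occurs on both sides; this makes the marginal identity act blockwise, i.e.\ pointwise in the values of the universal variables, so it encodes exactly the equation $f_i(s(\tuple w)) = \SUM_{\tuple v} f_j(s(\tuple u),\tuple v)$ for every $s$, for any overlap pattern and any arities. Plain marginal identities among representative tuples and universal variables cannot express such conditioned point-value comparisons.

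Your treatment of disjunction in the general case also fails. Variables take values in the finite domain and dependence atoms are purely relational---by Proposition \ref{propositions}\ref{i:conservative} they see only $\supp{\X}$---so auxiliary tuples $\tuple z_i$ with $\dep(\tuple x_i,\tuple z_i)$ cannot ``encode the weights of $f_i$'' and cannot pin the $\tuple x_i$-marginals of the sub-teams produced by a convex split; after an arbitrary split your invariant is simply lost. What is needed (and what the paper does) is to constrain the \emph{split}, not the marginals: quantify a selector $z$ with $\dep(\tuple x, z)$ where $\tuple x$ is the \emph{universal} tuple, so that the split is a union of complete $\tuple x$-blocks, i.e.\ the parts are of the form $\X'[M_0/\tuple x]$ and $\X'[M_1/\tuple x]$ up to scalars; since the representative tuples are independent of $\tuple x$, their marginals survive such blockwise splits up to scaling, and Proposition \ref{propositions}\ref{i:scaling} restores the induction hypothesis. (Your almost-conjunctive disjunction case, splitting by the truth of the first-order disjunct, is fine and matches the paper's $\theta_0 \lor (\theta_0^{\neg} \land \Theta_1)$ variant.)
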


\begin{proof}
By item \ref{i:alternateprop} of Proposition \ref{propositions}, we can use weighted semantics in this proof. Without loss of generality each structure is enriched with two distinct constants $c$ and $d$; such constants are definable in $\FO(\approx,\deps)$ by $\exists cd (\dep(c)\land \dep(d) \land c \neq d)$, and for almost conjunctive formulae they are not needed.

Let $ \phi(f)=\exists \tuple f \forall \tuple x \,\theta(f,\tuple x)\in\lead{=,\SUM}$ be of the form described in the previous lemma, with one free variable $f$.
In what follows, we build $\Theta$ inductively from $\theta$, and then
 let
 \begin{linenomath*}
\begin{equation*}
\Phi \dfn  \exists \tuple y_1 \ldots \exists\tuple y_n \forall \tuple x \,\Theta(\tuple x, \tuple y_1, \ldots ,\tuple y_n),  
\end{equation*}
\end{linenomath*}
  where 
  $\tuple y_i$ are sequences of variables of length $\ar(f_i)$. 
 Let $m\dfn \lvert \tuple x \rvert$. We show the following claim: For $M\sub A^m$ and weighted teams $\Y=\X'[M/\tuple x]$, where the domain of $\X'$ extends that of $\X$ by $\tuple y_1, \ldots, \tuple y_n$, 
  \begin{linenomath*}
  \begin{equation}\A\models^w_{\Y} \Theta
  \text{ iff } (\A, f,f_1, \ldots ,f_n) \models \theta(\tuple a)\text{ for all }\tuple a\in M,
  \end{equation}
  \end{linenomath*} 
  where $f_i:= \X' \upharpoonright \tuple y_i$.
   Observe that the claim implies that $\A\models^w_{\X} \Phi$ iff $\A\models \phi(f)$.

  Next, we show the claim by structural induction on the construction of $\Theta$:

 \begin{enumerate}[label=(\arabic*),nosep]
 \item  If $\theta$  is a literal of the first sort, we let $\Theta: =\theta$, and the claim readily holds.
  \item If $\theta$ is of the form
 $f_i(\tuple x_i) = \SUM_{\tuple x_{j0}}f_j(\tuple x_{j0}\tuple x_{j1})$, let $\Theta:= \exists \alpha\beta\psi $ for $\psi$ given as 
 \begin{linenomath*}
 \begin{equation}\label{eq:theta}
 (\alpha=x\leftrightarrow \tuple x_i=\tuple y_i) \wedge (\beta=x \leftrightarrow  \tuple x_{j1}=\tuple y_{j1} )\wedge \tuple x \alpha\approx \tuple x \beta,
 \end{equation}
 \end{linenomath*}
where $x$ is any variable from $\tuple x$, and the first-order variable sequence $\tuple y_j$ that corresponds to function variable $f_j$ is thought of as a concatenation of two sequences $\tuple y_{j0}$ and $\tuple y_{j1}$ whose respective lenghts are $|\tuple x_{j0}|$ and $|\tuple x_{j1}|$.

 Assume first that $\text{for all }\tuple a\in M$, we have $(\A, f,f_1, \ldots ,f_n) \models \theta(\tuple a)$, that is, $f_i(\tuple a_i) = \SUM_{\tuple x_{j0}}f_j(\tuple x_{j0}\tuple a_{j1})$. To show that $\Y$ satisfies $\Theta$, let $\Z$ be an extension of $\Y$ to variables $\alpha$ and $\beta$ such that it satisfies the first two conjuncts of \eqref{eq:theta}. Observe that $\Z$ satisfies $\tuple x \alpha\approx \tuple x \beta$ if for all $\tuple a\in M$, $\Z_{\tuple x = \tuple a}$ satisfies 
 $ \alpha\approx  \beta$.  For a probabilistic team $\X$ and a first-order formula $\alpha$, we write $\relweight{\X}{\alpha}$ for the relative weight $|\X_\alpha| / |\X|$.
 
 Now, the following chain of equalities hold:
   \begin{linenomath*}
   \begin{align*}
&\relweight{\Z}{ \tuple x \alpha= \tuple ax}=\relweight{ \Y}{\tuple x\tuple x_i=\tuple a \tuple y_i}=\relweight{\Y}{\tuple x \tuple y_i=\tuple a \tuple a_i}=\relweight{\Y}{\tuple x=\tuple a}\cdot \relweight{\Y}{\tuple y_i=\tuple a_i}=\\
&  \relweight{\Y}{\tuple x=\tuple a}\cdot f_i(\tuple a_i)=
 \relweight{\Y}{\tuple x=\tuple a}\cdot \SUM_{\tuple x_{j0}}f_j(\tuple x_{j0}\tuple a_{j1})=\relweight{\Y}{\tuple x=\tuple a}\cdot \relweight{\Y}{\tuple y_{j1}=\tuple a_{j1}}\\
 &\relweight{\Y}{\tuple x\tuple y_{j1}=\tuple a\tuple a_{j1}}=
\relweight{\Y}{\tuple x\tuple x_{j1}=\tuple a\tuple y_{j1}}=\relweight{\Z}{\tuple x\beta=\tuple a x}.
\end{align*}
\end{linenomath*}
Note that the absolute weights $|\Y|$ and $|\Z|$ are equal.
The third equality then follows since $\tuple x $ and $\tuple y_i$ are independent by the construction of $\Y$. It is also here that we need relative  instead of absolute weights.
Thus $\alpha$ and $\beta$ agree with $x$ in $\Z_{\tuple x = \tuple a}$ for the same weight. Moreover, $x$ is some constant $a$ in $\Z_{\tuple x = \tuple a}$, and whenever $\alpha$ or $\beta$ disagrees with $x$, it can be mapped to another constant $b$ that is distinct from $a$. It follows that $\Z_{\tuple x = \tuple a}$ satisfies $\alpha \approx \beta$, and thus we conclude that $\Y$ satisfies $\Theta$.

For the converse direction, assume that $\Y$ satisfies $\Theta$, and let $\Z$ be an extension of $\Y$ to $\alpha$ and $\beta$ satisfying \eqref{eq:theta}.
Then for all $\tuple a\in M$,   $\Z_{\tuple x=\tuple a}$ satisfies $\alpha \approx \beta$ and thereby for all $\tuple a\in M$,
\begin{linenomath*}
\begin{align*}
&\relweight{\Y}{\tuple x =\tuple a}\cdot f_i(\tuple a_i)=
  \relweight{\Z}{\tuple x\alpha=\tuple a x}=  \relweight{\Z}{\tuple x\beta=\tuple a x}=
              \relweight{\Y}{\tuple x=\tuple a }\cdot \SUM_{\tuple x_{k}} f_j(\tuple x_{k},\tuple a_{l}).
 \end{align*} 
 \end{linenomath*}
 For the second equality, recall that $x$ is a constant in $\Z_{\tuple x=\tuple a}$.
 Thus $(\A, f,f_1, \ldots ,f_n) \models \theta(\tuple a)\text{ for all }\tuple a\in M$, which concludes the induction step.

  \item If $\theta$ is $\theta_0\wedge \theta_1$, let $\Theta \dfn \Theta_0\wedge \Theta_1$. The claim follows by the induction hypothesis.
  
 \item If $\theta$ is $\theta_0\vee \theta_1$,  let
 \(
 \Theta \dfn \exists z  \Big(\dep(\tuple x,z) \land \big((\Theta_0\wedge z=c) \vee (\Theta_1 \wedge  z=d)\big)\Big).
 \)
 
 Alternatively, if $\theta_0$ contains no numerical terms, let
\( \Theta \dfn \theta_0 \lor ( \theta_0^{\neg}  \land\Theta_1),\)
where $\theta_0^{\neg} $ is obtained from $\neg \theta_0$ by pushing $\neg$ in front of atomic formulae.

Assume first that $(\A, f,f_1, \ldots ,f_n) \models \theta_0\vee\theta_1$ for all $\tuple a \in M$. Then $M$ can be partitioned to disjoint $M_0$ and $M_1$ such that
 \begin{linenomath*}
 \begin{equation}\label{eq:tai}
 (\A, f,f_1, \ldots ,f_n) \models \theta_i\text{ for all }\tuple a \in M_i.
 \end{equation}
 \end{linenomath*}
  We have two cases:
 \begin{itemize} 
 \item Suppose $\phi(f)$ is not almost conjunctive. Let
  $\Z$ be the extension of $\Y$ to $z$ such that $s(z)=c$ if $s(\tuple x)$ is in $M_0$, and otherwise $s(z)=d$, where $s$ is any assignment in the support of $\Z$. Consequently, $\Z$ satisfies $\dep(\tuple x,z)$. Further, the induction hypothesis implies that $\A\models^w_{\Y_i} \Theta_i$, where $\Y_i:= X'[M_i/\tuple x]$. Since $\frac{|M_0|}{|M|} \Y_0 = \Z_{\tuple z = c}$ and $\frac{|M_1|}{|M|}\Y_1 =  \Z_{\tuple z = d}$, we obtain $\A\models^w_{\Z_{\tuple z = c}} \theta_0$ and $\A\models^w_{\Z_{\tuple z = d}} \Theta_1$ by item \ref{i:scaling} of Proposition \ref{propositions}. We conclude that $\Z$ satisfies  $(\Theta_0\wedge z=0) \vee (\Theta_1 \wedge  z=1)$, and thus $\Y$ satisfies $\Theta$.   
 
 \item Suppose $\phi(f)$ is almost conjunctive. Without loss of generality $\theta_0$ contains no numerical terms. Then $\A \models_{\X'[M_0/\tuple x]} \theta_0$ by flatness (i.e., \ref{i:flatness} of Proposition \ref{propositions}). We may assume that $M_0$ is the maximal subset of $M$ satisfying \eqref{eq:tai}, in which case we also obtain $\A \models_{\X'[M_1/\tuple x]} \theta^\neg_0$ by flatness. Furthermore,  $\A \models_{\X'[M_1/\tuple x]} \Theta_1$ by induction hypothesis.
 
 \end{itemize} 
 The converse direction is shown analogously in both cases. This concludes the proof.

\end{enumerate}
   
\end{proof}


The ``$\geq$'' direction of item \ref{i:equivyks} in Theorem \ref{thm:FOprob} follows by Lemmata \ref{lem:lics}, \ref{lemmakaks}, and \ref{lemma:fromESOf}; that of item \ref{i:equivkaks} follows similarly, except that Proposition \ref{prop:remove-d} is used instead of Lemma \ref{lem:lics}. This concludes the proof of Theorem \ref{thm:FOprob}.

\section{Interpreting inclusion logic in probabilistic team semantics}
Next we turn to the relationship between inclusion and probabilistic inclusion logics. The logics are comparable for, as shown in Propositions \ref{propositions}, team semantics embeds into probabilistic team semantics conservatively. The seminal result by Galliani and Hella shows that inclusion logic captures PTIME over ordered structures  \cite{gallhella13}. We  show that restricting to finite structures, or uniformly distributed probabilistic teams, inclusion logic is in turn subsumed by probabilistic inclusion logic. There are two immediate consequences for this.
 First, the result by Galliani and Hella readily extends to probabilistic inclusion logic. Second, their result obtains an alternative, entirely different proof through linear systems.
  
We utilize another result of Galliani stating that inclusion logic is equiexpressive with \emph{equiextension logic} \cite{galliani12}, defined as the extension of first-order logic with \emph{equiextension} atoms $\tuple x_1 \bowtie \tuple x_2:= \tuple x_1 \sub \tuple x_2 \land \tuple x_2 \sub \tuple x_1$. In the sequel, we relate equiextension atoms to probabilistic inclusion atoms. 


For a natural number $k\in\N$ and an equiextension atom $\tuple x_1 \bowtie \tuple x_2$, where $\tuple x_1$ and $\tuple x_2$ are variable tuples of length $m$, define $\psi^k(\tuple x_1, \tuple x_2)$ as
\begin{linenomath*}
\begin{align}\label{kaannos}
 \forall \tuple u\exists v_1v_2 \forall \tuple{z_0} \exists \tuple z (
&
(\tuple x_1=\tuple u \leftrightarrow v_1=y) \land (\tuple x_2=\tuple u \leftrightarrow v_2=y) \,\land\\
&(\tuple{z_0}=\tuple y \to \tuple z=\tuple y) \land (\neg \tuple z = \tuple y \vee  \tuple u  v_1 \approx \tuple u v_2)),\nonumber
\end{align}
\end{linenomath*}
where $\tuple z $ and $\tuple{z_0}$ are variable tuples of length $k$, and $\tuple y$ is obtained by concatenating $k$ times some variable $y$ in $\tuple u$. 
 Intuitively \eqref{kaannos} expresses that a probabilistic team $\X$, extended with universally quantified $\tuple u$, decomposes to $\Y+\Z$, where $\Y(s) = f_s \X(s)$ for some variable coefficient $f_s\in [\frac{1}{n^k}, 1]$, and $|\Y_{\tuple x_1 = \tuple u}| = |\Y_{\tuple x_2 =\tuple u}|$, for any $\tuple u$. Thus \eqref{kaannos}  implies that $\tuple x_1 \bowtie \tuple x_2$.
  On the other hand, $\tuple x_1 \bowtie \tuple x_2$ implies \eqref{kaannos} if each assignment weight $\X(s)$  equals $g_s|\X|$ for some $g_s\in [\frac{1}{n^k}, 1]$. In this case, one finds the  decomposition $\Y+\Z$ by balancing the weight differences between values of $\tuple x_1$ and $\tuple x_2$. More details are provided in the proof of the next lemma.
\begin{restatable}{lemma}{rehelpinglemma}\label{helpinglemma}
Let $k$ be a positive integer, $\mA$ a finite structure with 
 universe $A$ of size $n$, and $\X:X \to \RE_{\geq 0}$ a weighted team. 
  \begin{enumerate}[label=(\roman*)]
  \item\label{it:pos} Suppose $\mA\models^w_\X \tuple x_1 \bowtie \tuple x_2$, $|\X_{\tuple x_1 =\tuple x_2}|=0$, and  
  $\X(s) \geq \frac{|\X|}{n^k}$ for all $s \in \supp{\X}$.  Then $\mA\models^w_\X \phi^k(\tuple x, \tuple y)$.
  \item If $\mA\models^w_\X \phi^k(\tuple x, \tuple y)$, then $ \mA\models^w_\X \tuple x_1 \bowtie \tuple x_2$.
  \end{enumerate}
\end{restatable}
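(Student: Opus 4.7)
My plan is to prove both directions by unfolding the semantics of $\phi^k$ and tracking weights through each quantifier and through the weighted disjunction.

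For (ii), let $\X_1 := \X[A^m/\tuple u]$; let $\X_2$ be the supplement witnessing $\exists v_1 v_2$; let $\X_3 := \X_2[A^k/\tuple{z_0}]$; and let $\X_4$ be the supplement witnessing $\exists \tuple z$. By flatness the three first-order conjuncts must hold pointwise in $\X_4$, so the weighted disjunction yields a decomposition $\X_4 = \Y_1 + \Y_2$ with $\Y_1 \models^w \neg \tuple z = \tuple y$ and $\Y_2 \models^w \tuple u v_1 \approx \tuple u v_2$. Two consequences follow. First, the implication $\tuple{z_0} = \tuple y \to \tuple z = \tuple y$ together with flatness of the literal $\neg \tuple z = \tuple y$ on $\Y_1$ forces every extension of a support assignment of $\X_2$ by $\tuple{z_0} = s(\tuple y)$ to lie entirely in $\Y_2$. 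Second, since $y$ is a coordinate of $\tuple u$, the biconditionals $\tuple x_i = \tuple u \leftrightarrow v_i = y$ translate the marginal identity on $\Y_2$, specialized at $(\tuple a, a_1)$ with $a_1$ the $y$-coordinate of $\tuple a$, into the equation $|\Y_{2,\, \tuple u = \tuple a,\, \tuple x_1 = \tuple a}| = |\Y_{2,\, \tuple u = \tuple a,\, \tuple x_2 = \tuple a}|$ for every $\tuple a \in A^m$. Combining both, whenever $|\X_{\tuple x_1 = \tuple a}| > 0$ the forced contribution makes the left-hand side at least $|\X_{\tuple x_1 = \tuple a}|/n^{m+k} > 0$, hence the right-hand side is also positive, yielding some $s \in \supp{\X}$ with $s(\tuple x_2) = \tuple a$. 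By symmetry, $\tuple x_1 \bowtie \tuple x_2$.

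For (i) I would construct the witnesses explicitly. Set $\X_1 := \X[A^m/\tuple u]$ and supplement $v_i$ by the point distribution at $s(y)$ when $s(\tuple x_i) = s(\tuple u)$ and by the uniform distribution on $A \setminus \{s(y)\}$ otherwise; this enforces the biconditionals pointwise. Take $\X_3 := \X_2[A^k/\tuple{z_0}]$ and supplement $\tuple z$ by a distribution that is allowed to depend on $(s, \tuple a) := (s, s(\tuple u))$, which is legitimate since supplementation may depend on any already-assigned variable: place mass $\alpha_{s,\tuple a} \in [0,1]$ on $s(\tuple y)$ and spread the remainder uniformly on $A^k \setminus \{s(\tuple y)\}$, with $\alpha_{s,\tuple a} = 1$ forced on the $\tuple{z_0} = s(\tuple y)$ extension. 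Split $\X_4 = \Y_1 + \Y_2$ by the value of $\tuple z$: $\Y_2$ is the restriction to $\tuple z = \tuple y$ and $\Y_1$ the restriction to $\tuple z \neq \tuple y$, so that $\Y_1 \models^w \neg \tuple z = \tuple y$ automatically.

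A direct projection computation reduces the marginal identity $\tuple u v_1 \approx \tuple u v_2$ on $\Y_2$ to the requirement that, for each $\tuple a$ in the common range $V := \{s(\tuple x_1) : s \in \supp{\X}\} = \{s(\tuple x_2) : s \in \supp{\X}\}$,
\[ \sum_{s \in \supp{\X},\, s(\tuple x_1) = \tuple a} \X(s)\,\beta_{s, \tuple a} \;=\; \sum_{s \in \supp{\X},\, s(\tuple x_2) = \tuple a} \X(s)\,\beta_{s, \tuple a}, \]
where $\beta_{s, \tuple a} := 1 + (n^k - 1)\alpha_{s, \tuple a} \in [1, n^k]$. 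The hypothesis $|\X_{\tuple x_1 = \tuple x_2}| = 0$ ensures the two index sets are disjoint, so the $\beta$-coefficients on the two sides may be chosen independently. The weight bound $\X(s) \geq |\X|/n^k$ gives $w_i(\tuple a) := |\X_{\tuple x_i = \tuple a}| \in [|\X|/n^k, |\X|]$ for each $\tuple a \in V$, hence $w_1(\tuple a)/w_2(\tuple a) \in [1/n^k, n^k]$. Setting all $\beta$-coefficients on the heavier side to $1$ and all on the lighter side to $\max(w_1(\tuple a), w_2(\tuple a))/\min(w_1(\tuple a), w_2(\tuple a)) \in [1, n^k]$ yields a feasible balancing and completes the construction.

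The decisive, and most delicate, step is the balancing in (i); it relies on two ideas: letting the supplement of $\tuple z$ depend on $\tuple u$, which decouples the constraints indexed by different $\tuple a$ into independent one-dimensional linear equations, and combining the weight lower bound with the disjointness hypothesis to guarantee that the admissible ranges of the two weighted sums overlap.
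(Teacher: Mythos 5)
Your proof is correct and follows essentially the same route as the paper's: for (ii) you propagate positive weight through the part forced into the marginal identity atom by the universally quantified $\tuple{z_0}$, and for (i) you build explicit witnesses and balance weights on the $\tuple z = \tuple y$ sub-team, with feasibility coming from the lower bound $\X(s)\geq |\X|/n^k$ together with the equiextension hypothesis (which guarantees the two sides are either both zero or both large). The only differences are cosmetic: the paper first reduces to a single value $\tuple u = \tuple b$ using closure of $\FO(\approx)$ under scaled unions and balances both sides to $w_1 w_2/|\Y'|$, whereas you keep all $\tuple u$-values together (the constraints decouple anyway, since the supplementation of $\tuple z$ may depend on $\tuple u$) and balance via the coefficients $\beta_{s,\tuple a}$, scaling the lighter side up to the heavier one.
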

\begin{proof}
(i) 
Observe that 
$
\X[A/\tuple u]= \frac{1}{n^m} \X^*,
$
 where $\X^*$ is defined as the sum $\X[\tuple a_1/\tuple u] + \ldots + \X[\tuple a_l/\tuple u]$, and $\tuple a_1, \ldots ,\tuple a_l$ lists all elements in $A^m$. By Proposition \ref{propositions}\ref{i:scaling} it suffices to show that $\X^*$ satisfies the formula obtained by removing the outermost universal quantification of $\psi^k$.  
 By Proposition \ref{thm:closure} it suffices to show that each $\X[\tuple a_i/\tuple u]$ individually satisfies the same formula.
 Hence fix a tuple of values $\tuple b\in A^m$ and define $\Y\dfn \X[\tuple b / \tuple u]$. We show that $\Y$ satisfies
\begin{linenomath*}
\begin{align}\label{kaannos2}
 \exists v_1v_2 \forall \tuple{z_0} \exists \tuple{z_1} (&(\tuple x_1=\tuple b \leftrightarrow v_1=c) \land (\tuple x_2=\tuple b \leftrightarrow v_2=c)\,\land\\
&(\tuple{z_0}=\tuple c \to \tuple{z_1}=\tuple c) \land ( \tuple{z_1} = \tuple c \to   v_1 \approx v_2)). \notag
\end{align}
\end{linenomath*}
Observe that we have here fixed $\tuple u \mapsto \tuple b$ and $y\mapsto c$, where $c$ is some value in $\tuple b$.
We have also
 removed $\tuple u$ from the marginal identity atom in \eqref{kaannos}, for it has a fixed value in $\Y$.
  
 Fix some $d \in A$ that is distinct from $c$, and denote by $Y$ be the support of $\Y$. For existential quantification over $v_i$, extend $s\in Y$ by $v_i \mapsto c$ if $s(\tuple x_i)=\tuple b$, and otherwise by $v_i \mapsto d$, so as to satisfy the first two conjuncts. Denote by $\Y':Y'\to \RE_{\geq 0}$ the weighted team, where $Y'$ consists of these extensions, and the weights are inherited from $\Y$. 
 
 

Observe that $\Y'(s) \geq \frac{|\X|}{n^{k}}$ for all $s\in \supp{\Y'}$. Fix $i\in \{1,2\}$, and assume that $|\X_{\tuple x_i=\tuple b}| >0$. Then $|\X_{\tuple x_i=\tuple b}|\geq \frac{|\X|}{n^k}$, and thus using $|\X_{\tuple x_1 =\tuple x_2}|=0$ and $|\X|=|\Y'|$ we obtain
 \[
 w_i \dfn |\Y'_{v_i=c \land v_{3-i} = d}|=|\X_{\tuple x_i=\tuple b \land \tuple x_{3-i} \neq \tuple b}|=
  |\X_{\tuple x_i=\tuple b}|\geq 
 \frac{|\Y'|}{n^k}.
 \]
  Since $\X \models \tuple x_1 \bowtie \tuple x_2$, we obtain that
 $w_1$ and $w_2$ are either both zero or both at least $\frac{|\Y'|}{n^{k}}$.

 Next, let us describe the existential quantification of $\tuple{z_1}$ (later we show how the universal quantification of $\tuple{z_0}$ can be fitted in). The purpose of this step is to balance the possible weight difference between $|\Y'_{\tuple x_1 =\tuple b}|$ and $|\Y'_{\tuple x_2 =\tuple b}|$, which in turn is tantamount to balancing $|\Y'_{\tuple v_1 =c\land v_2 =d}|$ and $|\Y'_{v_1 =d \land v_2 =c}|$.
For $s'\in Y'$,
\begin{enumerate}[label=(\roman*)]
\item if $s'(v_1) = c$ and $s'(v_2) = d$, allocate respectively $\frac{w_{2}}{|\Y'|}$ and $1-\frac{w_{2}}{|\Y'|}$ of the weight of $s'$ to $s'(\tuple c/\tuple{z_1})$ and $s'(\tuple d/\tuple{z_1})$;
\item if $s'(v_1) = d$ and $s'(v_2) = c$, allocate respectively $\frac{w_{1}}{|\Y'|}$ and $1-\frac{w_{1}}{|\Y'|}$ of the weight of $s'$ to $s'(\tuple c/\tuple{z_1})$ and $s'(\tuple d/\tuple{z_1})$; or
\item otherwise, allocate the full weight of $s'$ to $s'(\tuple c/\tuple{z_1})$.
\end{enumerate}
Denote by $\Z$ the probabilistic team obtained this way, and define $\Z'\dfn \Z_{\tuple z_1 = \tuple c}$. We observe that 
\[
|\Z'_{v_1 = c\land v_2 = d} |= |\Z'_{v_1 = d\land v_2 = c} |= \frac{w_1w_{2}}{|\Y'|}.
\]
Furthermore, $|\Z'_{v_1 = c\land v_2 = c} |=0$ and  hence $|\Z'_{v_1 = d\land v_2 = d} |= |\Z'|-\frac{2w_1w_{2}}{|\Y'|}$.
 We conclude that $\Z'$ satisfies $   v_1 \approx v_2$, whence $\Z$ satisfies $\tuple{z_1} = \tuple c \to   v_1 \approx v_2$.
 
 Finally, let us return to the universal quantification of $\tuple{z_0}$, which precedes the existential quantification of $\tuple z$ in \eqref{kaannos2}. The purpose of this step is to enforce that for each $s\in \supp{\Y'}$, the extension $s(\tuple c / \tuple z_1)$ takes a positive weight.
 Observe that $\frac{w_i}{|\Y'|}$ is either zero or at least $\frac{1}{n^{k}}$, for $w_i$ is either zero or at least $\frac{|\Y'|}{n^{k}}$. Furthermore, note that
  universal quantification distributes $\frac{1}{n^{k}}$ of the weight of $s'$ to $s'(\tuple c/\tuple{z_0})$. Thus the weight of $s'$ can be distributed in such a way that both the conditions (i)-(iii) and the formula $\tuple{z_0}=\tuple c \to \tuple{z_1}=\tuple c$ simultaneously hold. This concludes the proof of case \ref{it:pos}.
 
(ii) Suppose that the assignments in $X$ mapping $\tuple x_1$ to $\tuple b$ have a positive total weight in $\X$. By symmetry, it suffices to show that the assignments in $X$ mapping $\tuple x_2 $ to $\tuple b$ also have a positive total weight in $\X$. 
 By assumption there is an extension $\Z$ of $\X[\tuple b / \tuple u]$ satisfying the quantifier-free part of \eqref{kaannos2}. It follows that the total weight of assignments in $\Z$  that map $v_1 $ to $c$ is  positive. Consequently, by $\tuple{z_0}=\tuple c \to \tuple{z_1}=\tuple c$ where $\tuple{z_0}$ is universally quantified, a positive fraction of these assignments maps also $\tuple{z_1}$ to $\tuple c$. This part of $\Z$ is allocated to $v_1 \approx v_2$, and thus the weights of assignments mapping $v_2$ to $c$ is positive as well. But then, going  backwards, we conclude that the total weight of assignments mapping $\tuple x_2$ to $\tuple b$ is positive, which concludes the proof.    
\end{proof}

We next establish that inclusion logic is subsumed by probabilistic inclusion logic at the level of sentences.
\begin{theorem}\label{thm:sentprobinc}
$\FO(\sub)\leq \FO(\approx)$ with respect to sentences. 
\end{theorem}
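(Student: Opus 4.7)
My overall strategy is to route the translation through equiextension logic and then apply the atom-by-atom rewriting supplied by Lemma~\ref{helpinglemma}. First, by Galliani's theorem, every $\FO(\sub)$-sentence is equivalent to an $\FO(\bowtie)$-sentence, so it suffices to translate each equiextension atom $\tuple x_1\bowtie\tuple x_2$ into the formula $\psi^k(\tuple x_1,\tuple x_2)$ from~\eqref{kaannos}, for a suitable $k$ depending on the position of the atom in the sentence. The candidate translation is then obtained by replacing every occurrence of an equiextension atom by such a $\psi^k$, and the claim reduces to verifying that, at the point of evaluation, the preconditions of Lemma~\ref{helpinglemma}\ref{it:pos} are satisfied whenever the atom is satisfied.

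Since we restrict to \emph{sentences}, the evaluation starts from the trivial team $\X_\emptyset$, so at each subformula occurrence the current probabilistic team is generated from $\X_\emptyset$ by finitely many applications of $\exists$, $\forall$, $\wedge$ and $\vee$. I would put the sentence into a normal form where every equiextension atom is preceded by a block of universal quantifiers binding a fresh tuple $\tuple w$ of length $k$ (chosen large enough that $n^k$ dominates the weight fluctuations introduced by the preceding $\exists$- and $\vee$-steps), and by a first-order guard that forces $\tuple x_1\neq \tuple x_2$. The guard eliminates the degenerate case $|\X_{\tuple x_1=\tuple x_2}|\neq 0$: one splits the atom as
\[
\tuple x_1\bowtie\tuple x_2 \;\equiv\; \bigl(\tuple x_1=\tuple x_2\bigr) \vee \bigl(\tuple x_1\neq \tuple x_2 \wedge \tuple x_1\bowtie\tuple x_2\bigr),
\]
since equiextension is preserved under such deterministic splits, and then it suffices to translate the right disjunct using $\psi^k$. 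The fresh universal block guarantees that at the point the atom is evaluated the support is covered by a uniform refinement, so every assignment's weight is at least $|\X|/n^k$, which is exactly the weight bound needed in Lemma~\ref{helpinglemma}\ref{it:pos}.

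With this preparation, the correctness proof proceeds by a routine induction on the structure of the $\FO(\bowtie)$-sentence: first-order connectives and quantifiers are handled using Proposition~\ref{propositions} (locality and scaling) to transfer the inductive hypothesis between team-semantic and probabilistic-team-semantic evaluations, and equiextension atoms are handled by the two directions of Lemma~\ref{helpinglemma}. The direction $\FO(\bowtie)\Rightarrow\FO(\approx)$ uses part~\ref{it:pos}, which is applicable thanks to the normal-form preparation above, while the converse direction uses part (ii) of the lemma, which has no side-conditions at all.

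The main obstacle is the normal-form step: one has to argue that for a given sentence there is a uniform $k$ such that, after the fresh universal block is inserted, the weight of every assignment in the current team is at least $|\X|/n^k$. This is where restricting to sentences (rather than open formulas) is essential, because the weights are then generated from the single-point team $\X_\emptyset$ and can be tracked by the quantifier depth; the bound $k$ can be taken to be the total number of quantifiers in the sentence, so the construction stays polynomial and the translated formula remains in $\FO(\approx)$.
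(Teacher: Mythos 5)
Your overall skeleton (reduce to $\FO(\bowtie)$ via Galliani's theorem, replace each equiextension atom by $\psi^k$, and verify the precondition of Lemma~\ref{helpinglemma}\ref{it:pos}) is exactly the paper's, but two of your key preparatory steps fail. First, the guard equivalence $\tuple x_1\bowtie\tuple x_2 \equiv (\tuple x_1=\tuple x_2)\vee(\tuple x_1\neq\tuple x_2\wedge\tuple x_1\bowtie\tuple x_2)$ is false in (probabilistic) team semantics: the two literals force every assignment with $\tuple x_1=\tuple x_2$ into the left disjunct and every other assignment into the right one, so the split is determined, and the right part need not satisfy the atom. Concretely, for distinct $a,c$ the team with assignments $(\tuple x_1,\tuple x_2)\in\{(a,c),(a,a),(c,c)\}$ satisfies $\tuple x_1\bowtie\tuple x_2$, but the forced right part $\{(a,c)\}$ does not; equiextension is a global property of the team and is not preserved under this ``deterministic split''. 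The paper avoids this by instead rewriting the atom in the equivalent form $\exists u v(u\neq v\wedge \tuple x_1u\bowtie\tuple x_2v)$ (each assignment is extended with both orderings of two distinct values), which preserves satisfaction and makes the side condition $|\X_{\tuple x_1u=\tuple x_2v}|=0$ automatic.

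Second, the weight lower bound $\X(s)\geq|\X|/n^k$ cannot be obtained from a fresh universal block, nor ``tracked by quantifier depth''. In probabilistic team semantics, existential quantification and disjunction split weights by \emph{arbitrary} reals (any $\alpha\in[0,1]$), so before your universal block an assignment may already carry arbitrarily small relative weight; universal quantification then scales \emph{all} weights by the same factor $1/n^{k}$ and never repairs a skewed distribution. Hence no finite $k$, however chosen, yields the bound for free, and your normal-form step collapses. The missing idea is the paper's third simplifying assumption: because $\FO(\bowtie)$-formulae are weight-insensitive (satisfaction depends only on supports, by Proposition~\ref{propositions}\ref{i:conservative}), one may assume that the witnessing semantic operations are \emph{uniform} --- quantifiers split an assignment into at most $n$ equally weighted extensions and disjunctions into two equally weighted parts. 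Only under this normalization of the witnesses does each assignment at an atom position have weight at least $|\X|/n^k$, with $k$ the number of quantifiers and disjunctions of the sentence, so that Lemma~\ref{helpinglemma}\ref{it:pos} applies; the converse direction via part (ii) is then unproblematic, as you say.
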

\begin{proof}
As $\FO(\sub)\equiv \FO(\bowtie)$ (\cite{galliani12}), it suffices to show $\FO(\bowtie)\leq \FO(\approx)$ over sentences. 
Let $\phi\in \FO(\bowtie)$ be a sentence, and let $k$ be the number of disjunctions and quantifiers in $\phi$. Let $\phi^*$ be obtained from $\phi$ by replacing all equiextension atoms of the form $\tuple x_1 \bowtie \tuple x_2$ with $\psi^k(\tuple x_1, \tuple x_2)$.   
We can make four simplifying assumption without loss of generality. 
First,  we may restrict attention to weighted semantics by item \ref{i:alternateprop} of Proposition \ref{propositions}. Thus, we assume that $\mA\models^w_{\X} \phi$ for some weighted team $\X$ and a finite structure $\mA$ with universe of size $n$.
Second, we may assume that the support of $\X$ consists of the empty assignment by item \ref{i:locality} of Proposition \ref{propositions}.
Third, since $\FO(\bowtie)$ is insensitive to assignment weights, we may assume that the satisfaction of $\phi$ by $\X$ is witnessed by uniform semantic operations. That is, existential and universal quantification split an assignment to at most $n$ equally weighted extensions, and disjunction can only split an assignment to two equally weighted parts. 
Fourth, we may assume that any equiextension atom $\tuple x_1 \bowtie \tuple x_2$
 appears in $\phi$ in an equivalent form $\exists uv(u\neq v \land \tuple x_1u \bowtie \tuple x_2v)$, to guarantee that the condition $|\X_{\tuple x_1 =\tuple x_2}|=0$ holds for all appropriate subteams $\X$.
%
We then obtain by the previous lemma and a simple inductive argument that $\mA\models^w_{\X} \phi^*$. The converse direction follows similarly by the previous lemma.
\end{proof}
Consequently, probabilistic inclusion logic captures $\PTIME$, for this holds already for inclusion logic \cite{gallhella13}. Another consequence is an alternative proof, through probabilistic inclusion logic (Theorem \ref{thm:sentprobinc}) and linear programs (Theorems \ref{thm:FOprob} and 
  \ref{thm:ptimedataesor}), for the PTIME upper bound of the data complexity of inclusion logic. For this, note also that quantification of functions, whose range is the unit interval, is clearly expressible in $\esor{\leq,\SUM,0,1}$.
\begin{corollary}\label{cor:pincP}
Sentences of $\FO(\approx)$ capture $\PTIME$ on finite ordered structures.
\end{corollary}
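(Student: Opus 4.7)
The plan is to prove both directions via the bridges established earlier in the paper: inclusion logic for the lower bound, and the translation into additive $\ESO_\RE$ plus the linear programming argument for the upper bound.

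For the lower bound (capturing all of $\PTIME$), I would invoke the Galliani--Hella theorem, which states that $\FO(\sub)$ captures $\PTIME$ on finite ordered structures, and compose it with Theorem \ref{thm:sentprobinc}, which gives $\FO(\sub)\leq \FO(\approx)$ at the level of sentences. Thus any $\PTIME$ property of finite ordered structures is definable by some $\phi\in\FO(\sub)$, and the corresponding $\phi^*\in\FO(\approx)$ defines the same class.

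For the upper bound, I would apply Theorem \ref{thm:FOprob}\ref{i:equivkaks}, which translates any $\FO(\approx)$-sentence $\phi$ into an equivalent almost conjunctive formula $\phi'\in \leau{=,\SUM,0,1}$. Since $\leau{=,\SUM,0,1}$ restricts function quantification to $[0,1]$-valued functions, I need to re-express this restriction inside $\lear{\leq,\SUM,0,1}$: each existential function quantifier $\existst f$ with codomain $[0,1]$ can be simulated by the same quantifier over $\RE$ together with a conjoined universal constraint $\forall \vec x\,(0\leq f(\vec x)\wedge f(\vec x)\leq 1)$, which is almost conjunctive and in $\existst^*\forall^*$ form. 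This produces an almost conjunctive formula in $\lear{\leq,\SUM,0,1}$ equivalent to $\phi$ over finite structures. By Corollary \ref{cor:learptime} (equivalently, Theorem \ref{thm:ptimedataesor} applied to the transformed formula, which has no existential first-order quantifier inside the scope of a universal one), the resulting formula has $\PTIME$ data complexity.

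Combining both directions yields that the class of properties of finite ordered structures definable by $\FO(\approx)$-sentences coincides with $\PTIME$. The only non-routine step is the upper bound, but both ingredients (Theorem \ref{thm:FOprob}\ref{i:equivkaks} and Theorem \ref{thm:ptimedataesor}) are already available; the encoding of the $[0,1]$-range restriction via additional inequations is the small observation needed to bridge the loose fragment $\leau{\cdots}$ with the fragment $\lear{\cdots}$ treated by Proposition \ref{prop:lp}. I do not expect any genuine obstacle; the proof is essentially a composition of the earlier equivalences with the Galliani--Hella theorem.
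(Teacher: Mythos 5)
Your proof is correct and follows essentially the same route as the paper: the lower bound via the Galliani--Hella theorem composed with Theorem \ref{thm:sentprobinc}, and the upper bound via Theorem \ref{thm:FOprob}\ref{i:equivkaks} together with Theorem \ref{thm:ptimedataesor}, bridged by exactly the observation the paper makes in passing, namely that $[0,1]$-valued function quantification is expressible in $\esor{\leq,\SUM,0,1}$ by conjoining the constraints $0\leq f(\vec x)\leq 1$ (which preserves the almost conjunctive $\existst^*\forall^*$ form). Your spelled-out version of that bridging step, and your care to cite Theorem \ref{thm:ptimedataesor} rather than the capturing statement of Corollary \ref{cor:learptime} (whose expressive-completeness half depends on this very corollary), make the argument sound and non-circular.
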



Theorem \ref{thm:sentprobinc} also extends to formulae over uniform teams. Recall that a function $f$ is uniform if $f(s)=f(s')$ for all $s,s'\in \supp{f}$.

\begin{theorem}
$\FO(\sub)\leq \FO(\approx)$ 
 over uniform probabilistic teams.
\end{theorem}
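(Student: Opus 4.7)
My plan is to adapt the proof of Theorem~\ref{thm:sentprobinc} to the formula setting, with the uniformity of the input team playing the role that the emptyness of the support of $\X_\emptyset$ played in the sentence case. By $\FO(\sub)\equiv\FO(\bowtie)$ (\cite{galliani12}), it suffices to translate $\FO(\bowtie)$ into $\FO(\approx)$ over uniform probabilistic teams. Given $\phi\in\FO(\bowtie)$ with $m$ free variables and $d$ occurrences of quantifiers and disjunctions, set $k:=m+d$, and define $\phi^*$ from $\phi$ by replacing each subformula of the form $\tuple x_1\bowtie\tuple x_2$ by $\exists uv\big(\neg u=v \wedge \psi^k(\tuple x_1 u,\tuple x_2 v)\big)$, where $\psi^k$ is the formula from \eqref{kaannos}.

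For the forward direction, I would prove by induction on the structure of $\phi$ that whenever $\A\models_\X\phi$ for a uniform $\X$, satisfaction can be witnessed by \emph{uniform} semantic operations: every existential or universal quantifier extends each surviving assignment to $|A|=n$ equally weighted extensions, and every disjunction splits the weight of each assignment evenly between the two disjuncts. This is legitimate because $\FO(\bowtie)$ is insensitive to the actual magnitudes of assignment weights (equiextension is in effect a flat dependency via item~\ref{i:conservative} of Proposition~\ref{propositions}), so any standard witness can be replaced by the averaged one. Since $\X$ is uniform with support of size at most $n^m$, every assignment in $\X$ has weight $|\X|/|\supp{\X}|\geq|\X|/n^m$, and after at most $d$ further uniform splits, every assignment in any subteam $\Y$ reached during the evaluation still has weight at least $|\Y|/n^k$. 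This is exactly the hypothesis required by Lemma~\ref{helpinglemma}\ref{it:pos}; the extra quantifiers $\exists uv(\neg u=v\wedge\cdots)$ ensure the disjointness condition $|\Y_{\tuple x_1 u=\tuple x_2 v}|=0$ needed in that lemma, and so the replacement of $\tuple x_1\bowtie\tuple x_2$ by the corresponding $\psi^k$-formula is faithful at every stage.

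The converse direction is easier and proceeds directly as in Theorem~\ref{thm:sentprobinc}: Lemma~\ref{helpinglemma}(ii) ensures that any weighted team satisfying $\psi^k(\tuple x_1 u,\tuple x_2 v)$ also satisfies $\tuple x_1 u\bowtie\tuple x_2 v$, hence $\tuple x_1\bowtie\tuple x_2$, and a routine induction converts $\A\models_\X\phi^*$ back to $\A\models_\X\phi$. Uniformity of $\X$ is not needed here.

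The main obstacle I expect is the careful bookkeeping required in the inductive step, namely checking that the uniform-split witnessing strategy really does work at every connective, and that the single parameter $k=m+d$ really is large enough no matter where in $\phi$ an equiextension atom sits. The delicate point is the case of disjunction, where an asymmetric split chosen by a non-uniform witness might violate the lower bound on weights; the remedy is to appeal to the flatness of $\FO(\bowtie)$ to replace any witness by its uniform symmetrisation, and then to verify that both resulting subteams still support uniform witnesses for their respective disjuncts. Once this symmetrisation lemma is in hand, the rest of the argument is a straightforward induction mirroring that of Theorem~\ref{thm:sentprobinc}.
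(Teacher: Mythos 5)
Your proposal is correct in spirit but takes a genuinely different route from the paper. You re-run the inductive argument of Theorem \ref{thm:sentprobinc} directly at the formula level, letting uniformity of $\X$ supply the initial bound $\X(s)\geq|\X|/n^{m}$ on relative weights that the ``support $=$ empty assignment'' normalization supplied in the sentence case, and then you invoke Lemma \ref{helpinglemma} at the atoms. The paper instead reduces the formula case to the already-proven sentence case by relativizing to the support: writing $X=\supp{\X}$ and $\Dom(\X)=\{x_1,\dots,x_n\}$, it observes that for \emph{uniform} $\X$,
\[
\mA \models_\X \phi \quad\Leftrightarrow\quad (\mA, R\dfn X) \models \forall x_1\dots x_n \big(\neg R(x_1\dots x_n) \lor (R(x_1\dots x_n) \land \phi) \big),
\]
applies Theorem \ref{thm:sentprobinc} to this $\FO(\bowtie)$-sentence, notes that the translation ${}^*$ touches only the equiextension atoms and hence commutes with the relativization, and finally undoes the displayed equivalence with $\phi^*$ in place of $\phi$. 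The paper's argument is shorter and uses the sentence theorem as a black box, with uniformity entering only through the displayed equivalence; your argument is more self-contained but forces you to repeat the weight bookkeeping, which is exactly where your one concrete slip lies.

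The slip: $k=m+d$ is too small. In your translation the wrapper $\exists uv(\neg u=v\wedge\psi^k(\tuple x_1u,\tuple x_2v))$ is added on top of $\phi$, so the team on which $\psi^k$ is evaluated has passed through up to $d+2$ weight-splitting operations, not $d$. Moreover, the two extra existentials genuinely split weights: you cannot witness $u$ and $v$ by two fixed distinct constants, because then the value sets of $\tuple x_1u$ and of $\tuple x_2v$ would be disjoint, and the wrapped atom $\tuple x_1u\bowtie\tuple x_2v$ --- which your forward direction must verify \emph{before} applying Lemma \ref{helpinglemma}\ref{it:pos} --- would be false whenever the supports are non-empty. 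A correct witness extends each assignment by (essentially) all pairs of distinct values, costing two further factors of $1/n$ in the minimal relative weight, so the hypothesis $\X(s)\geq|\X|/n^{k}$ of Lemma \ref{helpinglemma}\ref{it:pos} fails for $k=m+d$. The repair is immediate: take $k=m+d+2$, or do as the paper's sentence proof does and regard the wrapping $\exists uv(u\neq v\wedge\tuple x_1u\bowtie\tuple x_2v)$ as part of $\phi$ before counting quantifiers. With that adjustment, the remaining ingredients of your plan --- uniform symmetrisation of witnesses justified by conservativity (Proposition \ref{propositions}\ref{i:conservative}), and the converse direction via Lemma \ref{helpinglemma}(ii), which indeed needs no uniformity --- go through.
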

\begin{proof}
Recall that $\FO(\sub)\equiv \FO(\bowtie)$. Let $\phi$ be an $\FO(\bowtie)$ formula, $\A$ a finite structure, and $\X$ a uniform probabilistic team. Let ${}^*$ denote the translation of Theorem \ref{thm:sentprobinc}. Now
%
 \begin{linenomath*}
\begin{align*}
\mA \models_\X \phi \quad&\Leftrightarrow\quad (\mA, R\dfn X) \models \forall x_1\dots x_n \big(\neg R(x_1\dots x_n) \lor \big(R(x_1\dots x_n) \land \phi\big) \big)\\
&\Leftrightarrow\quad  (\mA, R\dfn X) \models \forall x_1\dots x_n \big(\neg R(x_1\dots x_n) \lor \big(R(x_1\dots x_n) \land \phi\big) \big)^*\\
&\Leftrightarrow\quad  (\mA, R\dfn X) \models \forall x_1\dots x_n \big(\neg R(x_1\dots x_n) \lor \big(R(x_1\dots x_n) \land \phi^*\big) \big)\\
&\Leftrightarrow\quad \mA \models_\X \phi^*,
\end{align*}
\end{linenomath*}
where $X$ is the support of $\X$ and $\Dom(\X)=\{x_1,\dots, x_n\}$.
   
\end{proof}

\iflong
\section{Definability over open formulae}\label{sect:sep}

We now turn to definability over open formulae. In team semantics, inclusion logic extended with dependence atoms is expressively equivalent to independence logic at the level of formulae. This relationship however does not extend to probabilistic team semantics. As we will prove next, probabilistic inclusion logic extended with dependence atoms is strictly less expressive than probabilistic independence logic. The reason, in short, is that logics with  marginal identity and dependence can only describe additive distribution properties, whereas the concept of independence involves multiplication. 

We begin with a proposition illustrating that probabilistic independence logic has access to irrational weights.\footnote{We thank Vadim Kulikov for the idea behind this proposition.}

\begin{proposition}\label{prop:esim}
Define $\phi(x)=\exists c\exists y\forall z\theta$, where $\theta$ is defined as
\begin{linenomath*}
\begin{equation}
\dep(c)\wedge \pmi{x}{y} \wedge x\approx y \wedge ((x=c \wedge y= c) \leftrightarrow z=c). \label{esimerkki}
\end{equation}
\end{linenomath*}
Let $\A$ be a finite structure with domain $A$ of size $n$, and let $\X$ be a probabilistic team. Then
\begin{equation}\label{eq:sqrt}
\A\models_\X \phi(x) \implies |\X_{x=a}|=\frac{1}{\sqrt{n}}\text{ for some }a\in A.
\end{equation}
\end{proposition}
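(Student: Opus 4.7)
My plan is to trace the effect of the quantifier prefix $\exists c\exists y\forall z$ on any witnessing supplement for $\A\models_\X\phi(x)$, and then read off the claimed weight identity from the conjuncts of $\theta$. Concretely, fix $\Y\in\supplement{c}{A}{\X}$ for $\exists c$, $\Y'\in\supplement{y}{A}{\Y}$ for $\exists y$, and $\Y''\dfn\Y'[A/z]$ for $\forall z$, so that $\A\models_{\Y''}\theta$. The $x$-marginal of $\Y'$ coincides with $\X$ by construction, while $\forall z$ spreads $z$ uniformly in $\Y''$ and independently of the remaining variables.

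From the ``routine'' conjuncts of $\theta$ I will collect the following. The dependence atom $\dep(c)$ pins $c$ to a single value $c_0\in A$ on $\supp{\Y''}$. The uniform spread of $z$ yields $|\Y''_{z=c_0}|=1/n$. The atoms $\pmi{x}{y}$ and $x\approx y$, evaluated on $\Y''$, supply $|\Y''_{x=c_0}|=|\Y''_{y=c_0}|$ and $|\Y''_{x=c_0\wedge y=c_0}|\cdot|\Y''|=|\Y''_{x=c_0}|\cdot|\Y''_{y=c_0}|$, which using $|\Y''|=1$ and the preservation of the $\{x,y\}$-marginal through $\exists c,\exists y,\forall z$ collapses to $|\Y''_{x=c_0\wedge y=c_0}|=|\X_{x=c_0}|^2$.

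The crucial step is to extract from the biconditional $(x=c\wedge y=c)\leftrightarrow z=c$ the measure identity $|\Y''_{x=c\wedge y=c}|=|\Y''_{z=c}|$. To that end I will unfold the biconditional via the shorthand $\phi\to\psi\equiv\phi^\bot\vee(\phi\wedge\psi)$ and chase the two resulting implications as splits of $\Y''$ under team-semantic disjunction. The analysis should show that on $\supp{\Y''}$ the two events $s(x)=s(y)=c_0$ and $s(z)=c_0$ coincide as sets, so their $\Y''$-measures agree. Combining with the previous paragraph gives $|\X_{x=c_0}|^2=1/n$, and taking $a\dfn c_0$ delivers $|\X_{x=a}|=1/\sqrt{n}$.

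I expect the main obstacle to be justifying this measure identity cleanly: team-semantic disjunction splits weights rather than supports, and I will have to rule out any fractional allocation in the splits that would let an assignment violate the pointwise biconditional while still residing in $\supp{\Y''}$. Once this closure step is in hand, the arithmetic is immediate.
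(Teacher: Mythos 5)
Your proof follows the same route as the paper's: uniformity of $z$ plus constancy of $c$ give the event $z=c_0$ weight $\frac{1}{n}$, the biconditional transfers this weight to the event $x=c_0\wedge y=c_0$, and independence together with identical distribution and marginal preservation yield $|\X_{x=c_0}|^2=\frac{1}{n}$, hence $|\X_{x=c_0}|=\frac{1}{\sqrt{n}}$. The one step you flag as an obstacle is in fact immediate and needs no chase through the splits: the biconditional is a first-order formula, so by flatness (item \ref{i:flatness} of Proposition \ref{propositions}) its satisfaction by $\Y''$ is equivalent to pointwise satisfaction on $\supp{\Y''}$; alternatively, note that in each unfolded implication $\phi^{\bot}\vee(\phi\wedge\psi)$ the two disjuncts are mutually exclusive first-order formulas, so any weight split witnessing the disjunction necessarily partitions the support, and no fractional allocation can leave an assignment violating the pointwise biconditional inside $\supp{\Y''}$.
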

\begin{proof}
Suppose  $\A\models_\X \phi(x)$, and let  $\Y$ be an extension of $\X$, in accord with the quantifier prefix of $\phi$, that satisfies \eqref{esimerkki}. Then in $\Y$ $c$ is constant and $z$ uniformly distributed over all domain values. Hence $z$ equals $c$ for weight $\frac{1}{n}$, and consequently $x$ and $y$ simultaneously equal $c$ for the same weight. Since $x$ and $y$ are independent and identically distributed, in isolation they equal $c$ for weight $\frac{1}{\sqrt{n}}$. Since $\X$ and $\Y$ agree on the weights of $x$, the claim follows.    
\end{proof}

It follows, then, that independence atoms are not definable in additive existential second-order logic.
\begin{lemma}
$\FO(\pind) \not\leq \esor{\leq,+,0,1}$.
\end{lemma}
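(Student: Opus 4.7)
My plan is to exploit the irrationality phenomenon uncovered in Proposition~\ref{prop:esim}: the formula $\phi(x)$ given there (which lies in $\FO(\pind)$, noting that $\dep(c)$ is equivalent to $\pmi{c}{c}$ and that $x\approx y$ is similarly definable via marginal independence atoms) can be satisfied on a structure of size $n$ only by a probabilistic team some of whose marginal $x$-weights equal $1/\sqrt{n}$. I would then argue by contradiction: assume that $\phi(x)$ is equivalent, as a property of $d[0,1]$-structures over the vocabulary $\{f\}$ with $f$ unary, to some $\phi^*\in\esor{\leq,+,0,1}$. My task is to show that $\phi^*$ can define only a rational ``semi-linear'' set of distributions, and hence cannot force irrational weights.

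The key step is to fix a two-element structure $\A$ with domain $A=\{a_1,a_2\}$ and describe the set
\[
S \dfn \{\, f\colon A\to[0,1] \mid f\text{ is a distribution and }(\A,f)\models\phi^*\,\}.
\]
Because the first-order quantifiers of $\phi^*$ range over the finite set $A$, they can be eliminated by unfolding into finite disjunctions and conjunctions. What remains is an existential second-order prefix $\exists g_1\dots g_m$ followed by a Boolean combination of atoms. The relational atoms on $\A$ reduce to constants once $\A$ is fixed, while each remaining numerical atom is a linear (in)equality with integer coefficients over the finitely many values $f(a_1),f(a_2),g_i(\vec a)$, since the only constants available are $0,1$ and the only operation is $+$. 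Bringing this Boolean combination into DNF and applying Fourier--Motzkin elimination to project out the values of the $g_i$'s exhibits $S$ as a finite union of rational polyhedra intersected with the simplex $\{f\ge 0,\ f(a_1)+f(a_2)=1\}$.

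Since $\phi(x)$ is satisfiable on $A$ (take $\X$ with $|\X_{x=a_1}|=1/\sqrt{2}$ and verify via Proposition~\ref{prop:esim}), the assumed equivalence $\phi^*\equiv\phi(x)$ forces $S$ to be non-empty. Every non-empty polyhedron defined by rational linear constraints contains a rational point, so $S$ contains some rational distribution $f_0$. Re-interpreting $f_0$ as a probabilistic team $\X_0$ with $f_{\X_0}=f_0$, the equivalence yields $\A\models_{\X_0}\phi(x)$, and Proposition~\ref{prop:esim} then forces $f_0(a)=1/\sqrt{2}$ for some $a\in A$, contradicting the rationality of $f_0$.

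The main obstacle is the semi-linearity step --- specifically, verifying that eliminating the $\RE$-valued quantified functions produces only rational linear constraints. This relies crucially on the fact that $\esor{\leq,+,0,1}$ has no multiplication and permits only the constants $0$ and $1$, so that every linear constraint encountered during the unfolding already has integer coefficients and Fourier--Motzkin elimination preserves rationality of the defining constraints. Once $S$ is confirmed to be a rational semi-linear set, the rational-point argument combined with Proposition~\ref{prop:esim} yields the desired contradiction.
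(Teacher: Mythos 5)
Your proposal is correct and takes essentially the same route as the paper's proof: fix the two-element structure, unfold the first-order quantifiers of the assumed translation $\phi^*$ into a Boolean combination of integer-coefficient linear constraints on the values $f(a_1),f(a_2),g_i(\vec a)$, and derive a contradiction between the irrational weights forced by Proposition~\ref{prop:esim} and the fact that a satisfiable system of rational linear inequalities always admits a rational solution. The only differences are cosmetic: you eliminate the quantified function values via Fourier--Motzkin where the paper passes to prenex DNF and invokes the rational-vertex property of linear programs, and you make explicit the $\FO(\pind)$-definability of $\dep(c)$ and $\approx$ (which the paper defers to its citation of \cite{HHKKV19} in Theorem~\ref{thm:sep}); note also that both arguments rely, in the same slightly implicit way, on the satisfiability of $\phi(x)$ on the two-element structure, a direction Proposition~\ref{prop:esim} does not itself state.
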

\begin{proof}
Let $\phi(x)$ be as in the previous proposition.
Assume towards contradiction that it has a translation $\Psi(f)$ in $\esor{\leq,+,0,1}$. Then $\Psi$ contains one free unary function variable $f$ to encode the probabilistic team over $\{x\}$. 
Let $\A$ be a structure with universe $\{0,1\}$ and empty vocabulary.  By the previous proposition $\A$ satisfies $\Psi(f)$ if and only if $\{f(0),f(1)\}=\{1/\sqrt{2}, 1-1/\sqrt{2}\}$.  

We define a translation $\Phi \mapsto \Phi{}^*$ from $\esor{\leq,+,0,1}$ over $\A$ to the additive existential (first-order) theory over the reals. Without loss of generality $\Phi$ has no nested function terms. In the translation, we interpret function terms of the form $g(a_1, \ldots ,a_{\ar(g)})$, for $a_1, \ldots , a_{\ar(g)} \in \{0,1\}$, as first-order variables.
 The translation, defined recursively, is identity for numerical inequality atoms, 
 homomorphic for disjunction and conjunction, and otherwise defined as: 
\begin{itemize}
\item $(\forall y \Phi)^*:=\Phi^*(0/y) \land \Phi^*(1/y)$,
\item $(\exists y \Phi)^*:=\Psi^*(0/y) \lor \Phi^*(1/y)$,
\item $(\exists g \Phi)^*:= (\exists g(a_1, \ldots ,a_{\ar(g)})_{a_1, \ldots , a_{\ar(g)} \in \{0,1\}}\Phi^*$,
\end{itemize}
where  $\Phi^*(a/y)$ is obtained from $\Phi^*$ by substituting variable $h(x_1, \ldots ,x_{i-1}, a, x_{i+1}, \ldots x_n)$ for any variable of the form  $h(x_1, \ldots ,x_{i-1}, y, x_{i+1}, \ldots x_n)$. Applying the translation to $\Psi(f)$ we obtain a formula $\Psi^*(f(0),f(1))$ that contains two free first-order variables $f(0)$ and $f(1)$.

It is easy to see that $\A\models \Psi(f)$ if and only if $\Psi^*(f(0),f(1))$ holds in the real arithmetic. Consequently, $\Psi^*$ has only irrational solutions.  On the other hand, $\Psi^*$ can be transformed to the form $\exists x_1 \ldots \exists x_n \bigvee_i \bigwedge_j C_{ij}$, where each $C_{ij}$ is a (strict or non-strict) linear inequation with integer coefficients and constants. Since $\Psi^*$ is satisfiable, some system of linear inequations $\bigwedge_j C_{ij}$ has solutions, and thus also rational solutions.
\footnote{
To see why, observe that such a system can be expressed as a linear program in the canonical form (e.g., as in the proof of Theorem \ref{addboole}). Since the optimal solution of a linear program is always attained at a vertex of the feasible region, a linear program with rational coefficients and constants has at least one rational optimal solution if it has optimal solutions at all (see, e.g., \cite{dantzig97}).
}
 Thus $\Psi^*$ has rational solutions, which leads to a contradiction.  We conclude that $\phi(x)$ does not translate into $\esor{\leq,+,0,1}$.    
\end{proof}

The following result is now immediate. 
\begin{theorem}\label{thm:sep}
$\FO(\deps,\approx)< \FO(\pind)$.
\end{theorem}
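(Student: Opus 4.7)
The plan is to assemble the strict separation directly from Theorem~\ref{thm:FOprob}\ref{i:equivyks} and the preceding lemma. For the non-strict inclusion $\FO(\approx,\deps) \leq \FO(\pind)$, I would argue that both the marginal identity atom $\tuple x \approx \tuple y$ and the dependence atom $\dep(\tuple x, y)$ are already expressible in $\FO(\pind)$ by first-order manipulations around probabilistic marginal independence atoms; indeed, this is implicit in the preceding lemma, whose witnessing formula $\phi(x)$ of Proposition~\ref{prop:esim} freely uses $\approx$ and $\dep$ and is nonetheless treated as belonging to $\FO(\pind)$.

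For the strict direction, I would argue by contradiction that $\FO(\pind) \not\leq \FO(\approx,\deps)$. The key intermediate step is to observe that $\pesou{=,+,0,1} \leq \esor{\leq,+,0,1}$: a quantifier $\exists f$ ranging over $[0,1]$-valued functions can be simulated by a quantifier over $\RE$-valued functions guarded by the first-order constraint $\forall \tuple x \,(0 \leq f(\tuple x) \land f(\tuple x) \leq 1)$, which uses only the operations and constants already available in $\esor{\leq,+,0,1}$. Combined with Theorem~\ref{thm:FOprob}\ref{i:equivyks}, this yields a translation $\FO(\approx,\deps) \leq \esor{\leq,+,0,1}$. If we additionally had $\FO(\pind) \leq \FO(\approx,\deps)$, composing the two translations would give $\FO(\pind) \leq \esor{\leq,+,0,1}$, in direct contradiction with the preceding lemma.

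I do not expect any substantial obstacle: the argument is essentially a one-line consequence of earlier results in the paper. The only point requiring care is the routine reduction $\pesou{=,+,0,1} \leq \esor{\leq,+,0,1}$, and the bookkeeping needed to ensure that all three logics are being compared on the same class of objects (pairs of a finite structure together with the function encoding a probabilistic team over the free variables of the formula), which is precisely the convention fixed in Section~\ref{sect:prob-to-eso}.
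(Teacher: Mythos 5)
Your proposal is correct and follows essentially the same route as the paper's proof: the inclusion comes from the definability of $\deps$ and $\approx$ in $\FO(\pind)$ (which the paper justifies by citing \cite{HHKKV19} rather than by appeal to its implicit use in the preceding lemma), and the strictness from composing $\FO(\approx,\deps)\equiv\pesou{=,+,0,1}\leq\esor{\leq,+,0,1}$ with the lemma $\FO(\pind)\not\leq\esor{\leq,+,0,1}$. Your explicit guard-based justification of the subsumption $\pesou{=,+,0,1}\leq\esor{\leq,+,0,1}$ is exactly what the paper leaves implicit in the phrase ``clearly subsumed.''
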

\begin{proof}
Dependence and marginal identity atoms are definable in $\FO(\pind)$ (i.e., in first-order logic extended with marginal probabilistic independence atoms) \cite[Proposition 3, Theorem 10, and Theorem 11]{HHKKV19}. Furthermore, $\phi(x)$ in Proposition \ref{prop:esim} is not definable in $\FO(\deps,\approx)$. For this, recall that by Theorem \ref{thm:FOprob},
$\FO(\deps,\approx)$ corresponds to $\pesou{\leq,+,0,1}$. This logic is clearly subsumed by $\esor{\leq,+,0,1}$, which in turn cannot translate $\phi(x)$ by the previous lemma.
    \end{proof}

There are, in fact, more than one way to prove that $\FO(\pind) \not \leq \FO(\deps,\approx)$. 
 Above, we use the fact that probabilistic independence cannot be defined in terms of additive existential second-order logic, which in turn encompasses both dependence and marginal independence atoms. Another strategy is to apply the closure properties of these atoms.

Let $\phi$ be a formula over probabilistic team semantics.
We say that $\phi$ is \emph{closed under scaled unions} if for all parameters $\alpha \in [0,1]$, finite structures $\A$, and probabilistic teams $\X$ and $\Y$: $\A\models_\X \phi$ and $\A\models_\Y \phi$ imply $\A \models_\Z \phi$, where $\Z\dfn\alpha \X + (1-\alpha)\Y$. In the weighted semantics, we say that $\phi$ is \emph{closed under unions} if for all finite structures $\A$ and weighted teams $\X$ and $\Y$: $\A\models^w_\X \phi$ and $\A\models^w_\Y \phi$ imply $\A \models^w_{\X+\Y} \phi$.
We say that $\phi$ is \emph{relational} if for all finite structures $\A$, and probabilistic teams $\X$ and $\Y$ such that $\supp{Y} = \supp{X}$: $\A\models_\X \phi$ if and only if $\A\models_\Y \phi$. We say that $\phi$ is \emph{downwards closed} if for all finite structures $\A$, and probabilistic teams $\X$ and $\Y$ such that $\supp{Y} \subseteq \supp{X}$: $\A\models_\X \phi$ implies $\A\models_\Y \phi$. Furthermore, a logic $\calL$ is called relational (downward closed, closed under scaled union, resp.) if each formula $\phi$ in $\calL$ is relational (downward closed, closed under scaled unions, resp.).
%
\begin{proposition}\label{thm:closure}
The following properties hold:
\begin{itemize}
\item $\FO(\deps)$ is relational. [Self-evident]
\item $\FO(\approx)$ is closed under scaled unions. \cite{HHKKV19}
\end{itemize}
\end{proposition}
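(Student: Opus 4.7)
The plan is to treat the two items separately. For the first, relationality of $\FO(\deps)$ follows immediately from Proposition~\ref{propositions}: by items \ref{i:alternateprop} and \ref{i:conservative}, for any probabilistic team $\X$ and any $\phi\in\FO(\deps)$ we have $\A\models_{\X}\phi \Leftrightarrow \A\models^w_{\X}\phi \Leftrightarrow \A\models_{\supp{\X}}\phi$, so satisfaction depends only on the support of $\X$.

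For the second item I would argue by induction on the structure of $\phi\in\FO(\approx)$. Fix $\alpha\in[0,1]$ and probabilistic teams $\X,\Y$ with $\A\models_{\X}\phi$ and $\A\models_{\Y}\phi$, and set $\Z\dfn\alpha\X+(1-\alpha)\Y$. The cases $\alpha\in\{0,1\}$ are trivial, so assume $\alpha\in(0,1)$, whence $\supp{\Z}=\supp{\X}\cup\supp{\Y}$. For first-order literals the claim depends only on supports, which behave correctly under the mixture. For a marginal identity atom $\tuple x\approx\tuple y$, linearity of the counting functional $|\cdot_{\tuple x=\tuple a}|$ yields
\[
|\Z_{\tuple x=\tuple a}| = \alpha|\X_{\tuple x=\tuple a}|+(1-\alpha)|\Y_{\tuple x=\tuple a}| = \alpha|\X_{\tuple y=\tuple a}|+(1-\alpha)|\Y_{\tuple y=\tuple a}| = |\Z_{\tuple y=\tuple a}|
\]
for every $\tuple a$. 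Conjunction is immediate from the induction hypothesis. For $\forall x\,\phi$, a direct check on the duplicate-team construction shows $\Z[A/x]=\alpha\X[A/x]+(1-\alpha)\Y[A/x]$, and the induction hypothesis applies. For $\exists x\,\phi$, witnesses $\X'\in\supplement{x}{A}{\X}$ and $\Y'\in\supplement{x}{A}{\Y}$ yield $\alpha\X'+(1-\alpha)\Y'\in\supplement{x}{A}{\Z}$ by additivity of marginals, and the induction hypothesis concludes.

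The main obstacle is the disjunction case $\phi=\phi_1\vee\phi_2$, where one must reorganize a double mixture into a single two-stage mixture suitable for the induction hypothesis. Suppose $\X=\gamma\X_1+(1-\gamma)\X_2$ with $\A\models_{\X_i}\phi_i$, and analogously $\Y=\delta\Y_1+(1-\delta)\Y_2$ with $\A\models_{\Y_i}\phi_i$. Set $\beta\dfn\alpha\gamma+(1-\alpha)\delta$; when $\beta\in(0,1)$ define
\[
\Z_1\dfn\tfrac{\alpha\gamma}{\beta}\X_1+\tfrac{(1-\alpha)\delta}{\beta}\Y_1,\qquad \Z_2\dfn\tfrac{\alpha(1-\gamma)}{1-\beta}\X_2+\tfrac{(1-\alpha)(1-\delta)}{1-\beta}\Y_2.
\]
A direct computation gives $\Z=\beta\Z_1+(1-\beta)\Z_2$; since each $\Z_i$ is a scaled union of $\X_i$ and $\Y_i$, the induction hypothesis for $\phi_i$ yields $\A\models_{\Z_i}\phi_i$, and hence $\A\models_{\Z}\phi_1\vee\phi_2$. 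The degenerate cases $\beta\in\{0,1\}$ (or $\gamma,\delta\in\{0,1\}$) collapse one of $\Z_1,\Z_2$ and are dispatched by using the empty probabilistic team as the collapsed half together with the empty team property.
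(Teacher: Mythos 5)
Your proposal cannot diverge from the paper's proof for a simple reason: the paper contains no proof of this proposition. The first item is dismissed as self-evident and the second is delegated to \cite{HHKKV19}, where closure of $\FO(\approx)$ under scaled unions is established by essentially the structural induction you carry out. Your proposal is therefore a correct, self-contained reconstruction of the cited argument, and that self-containedness is what it buys over the paper's citation. The details check out: relationality of $\FO(\deps)$ does follow by chaining items \ref{i:alternateprop} and \ref{i:conservative} of Proposition \ref{propositions} (for a nonempty probabilistic team $|\X|=1$, so weighted and probabilistic satisfaction coincide, and the empty-team case is trivial); the atomic and quantifier cases rest on linearity of the weight functional $|\cdot_{\tuple x=\tuple a}|$, of projection, and of the duplication and supplementation operations; and the crux --- rewriting the double mixture in the disjunction case as $\Z=\beta\Z_1+(1-\beta)\Z_2$ with $\beta=\alpha\gamma+(1-\alpha)\delta$, so that each $\Z_i$ is itself a scaled union of witnesses to which the induction hypothesis applies --- is computed correctly, including the verification that the coefficients in $\Z_1$ and $\Z_2$ lie in $[0,1]$ and sum to $1$. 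Two loose ends are worth tightening. First, you appeal to ``the empty team property,'' which this paper never states (it appears only in earlier work, for $\FO(\cpind)$); it does hold for $\FO(\approx)$ by an easy induction, but a cleaner dispatch of the degenerate case $\beta=1$ (symmetrically $\beta=0$) avoids it entirely: in that case $\Z=\Z_1$, and one may take the split $\Z=1\cdot\Z_1+0\cdot\X_2$, which the paper's disjunction clause permits since its coefficient ranges over all of $[0,1]$ and $\X_2$ already satisfies $\phi_2$ by assumption. Second, since the paper defines $f+g$ only for functions on a shared domain, you should say explicitly that $\X$ and $\Y$ (and likewise the split parts) are first extended by zero weights to the common team $X\cup Y$; with that convention the identity $\supp{\Z}=\supp{\X}\cup\supp{\Y}$ for $\alpha\in(0,1)$ holds as you use it.
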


In the context of multiteam semantics, Gr\"adel and Wilke have shown that probabilistic independence is not definable by any logic that extends first-order logic with a collection of atoms that are downwards closed or union closed \cite[Theorem 17]{wilke20}. In fact, their proof works also when downwards closed atoms are replaced with relational atoms (which, in their framework as well as in the probabilistic framework, is a strictly more general notion). While their proof technique does not directly generalise to probabilistic team semantics, it can readily be adapted to weighted semantics (Definition \ref{def:weightedsemantics}).
\begin{theorem}[cf. \cite{wilke20}]\label{thm:wilke}
Let $\calC$ be a collection of relational atoms, and let $\calD$ be a collection of atoms that are closed under unions. Then under weighted semantics
$\FO(\pind) \not\leq \FO(\calC,\calD)$.
\end{theorem}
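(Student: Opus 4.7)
I adapt Gr\"adel and Wilke's strategy from multiteam to weighted semantics. The proof has two components: an explicit separating pair of weighted teams, and a closure lemma showing every $\FO(\calC,\calD)$-formula propagates through their sum, giving the required contradiction.

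\textbf{Step 1: separating teams.} Work over a structure with universe $A=\{0,1\}$ and variables $\{x,y\}$. Define weighted teams $\X_1,\X_2$ with full support by
\begin{equation*}
\X_1(0,0)=1,\ \X_1(0,1)=2,\ \X_1(1,0)=2,\ \X_1(1,1)=4,\quad \X_2(a,b)=\X_1(1-a,1-b).
\end{equation*}
Direct arithmetic yields $|\X_i|=9$, marginals $3$ and $6$, and joint weights satisfying the independence equation in every cell (e.g.\ $1\cdot 9=3\cdot 3$ and $4\cdot 9=6\cdot 6$), so $\X_i\models^w\pmi{x}{y}$ for $i=1,2$. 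Their sum $\X_1+\X_2$ has weights $(5,4,4,5)$ and violates $\pmi{x}{y}$ since $5\cdot 18=90\neq 81=9\cdot 9$. Crucially, $\supp{\X_1}=\supp{\X_2}=\supp{\X_1+\X_2}$.

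\textbf{Step 2: closure lemma.} I will prove by induction on $\psi\in\FO(\calC,\calD)$ that if $\X_1,\X_2$ are weighted teams with $\supp{\X_1}=\supp{\X_2}$ and both satisfy $\psi$, then $\X_1+\X_2\models^w\psi$. First-order literals and atoms in $\calC$ are handled by their relational character (satisfaction depends only on the support, which is preserved); atoms in $\calD$ are handled by the union-closure hypothesis. Conjunction is immediate from applying the induction hypothesis to each conjunct. Universal quantification commutes with weighted team addition, and the duplication operation $\X\mapsto\X[A/x]$ preserves the same-support relation.

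\textbf{Step 3: the main obstacle (disjunction and existential quantification).} These are the delicate cases. For $\phi_1\lor\phi_2$, the witnessing decompositions $\X_i=\Y_i^1+\Y_i^2$ with $\Y_i^j\models^w\phi_j$ need not satisfy $\supp{\Y_1^j}=\supp{\Y_2^j}$, so the induction hypothesis cannot be applied to the naive $\Z^j:=\Y_1^j+\Y_2^j$. The same alignment problem appears for $\exists x$ via supplementation. To overcome this I strengthen the inductive claim to its multi-team form: for any finite family of weighted teams sharing a common support, if each satisfies $\psi$ then so does their sum. Then in the disjunction case I refine both witness decompositions to a common rectangular structure by further subdivision of weights (crucially using that weights are real and admit arbitrary subdivisions, which makes the adaptation from Wilke's integer multiteam argument cleaner), producing a larger but still same-support family of $\phi_j$-witnesses to which the strengthened hypothesis applies. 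The existential case is treated analogously by combining supplementations component-wise after an analogous refinement.

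\textbf{Conclusion.} If some $\psi\in\FO(\calC,\calD)$ were equivalent under weighted semantics to $\pmi{x}{y}$, the closure lemma applied to the $\X_1,\X_2$ of Step 1 forces $\X_1+\X_2\models^w\psi$, contradicting the failure of $\pmi{x}{y}$ on $\X_1+\X_2$. Hence $\FO(\pind)\not\leq\FO(\calC,\calD)$ under weighted semantics.
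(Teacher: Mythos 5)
Your Step 1 is correct, and the concern you raise in Step 3 is exactly the right one --- but your resolution does not work, because the closure lemma of Step 2 (in both its two-team and its strengthened finite-family form) is \emph{false}, so no induction can establish it. Concretely, work over the domain $\{0,1\}$ with variables $x,y,z,w$ and take $\phi := (\dep(x)\land y\approx z)\lor(\dep(x)\land y\approx w)$, which is a formula of $\FO(\calC,\calD)$ with $\calC=\{\dep\}$ relational and $\calD=\{\approx\}$ union closed (marginal identity is union closed under weighted semantics, since $\lvert(\X+\Y)_{\vec x=\vec a}\rvert=\lvert\X_{\vec x=\vec a}\rvert+\lvert\Y_{\vec x=\vec a}\rvert$). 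Let $F_1$ be the weight function on $\{0,1\}^3$ (coordinates $y,z,w$) assigning weight $2$ to $(1,1,0)$ and weight $1$ elsewhere, and let $F_2$ assign $2$ to $(1,0,1)$ and $1$ elsewhere. Then $F_1$ satisfies $y\approx z$ but not $y\approx w$, $F_2$ satisfies $y\approx w$ but not $y\approx z$, and $F_1+F_2$ satisfies neither: its $y$-marginals are $(8,10)$ while its $z$- and $w$-marginals are $(9,9)$. Now let $\X_1$ carry $F_1$ on the fiber $x=0$ and $F_2$ on the fiber $x=1$, and let $\X_2$ carry $F_2$ on $x=0$ and $F_1$ on $x=1$. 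These two weighted teams have the same (full) support, and both satisfy $\phi$: split along the $x$-fibers and assign the fibers to the two disjuncts in opposite orders. But any decomposition witnessing $\phi$ for a team whose support meets both fibers must consist of exactly the two fibers, since each disjunct contains the constancy atom $\dep(x)$; in $\X_1+\X_2$ each fiber carries $F_1+F_2$, which satisfies neither $y\approx z$ nor $y\approx w$, so $\X_1+\X_2\not\models^w\phi$. This also shows why your proposed repair cannot succeed even in principle: subdividing weights never changes the \emph{supports} of the disjunction witnesses, and here those supports are forced and genuinely incompatible between $\X_1$ and $\X_2$, so no ``common rectangular refinement'' exists.

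For comparison: the paper itself gives no detailed proof of this theorem --- it cites \cite[Theorem 17]{wilke20} and asserts that Gr\"adel and Wilke's technique adapts to weighted semantics. The adaptation that actually works is not closure under a single same-support sum but a pigeonhole-style induction over arbitrarily large families: for each $\phi$ and each $m$, any sufficiently large family of common-support teams satisfying $\phi$ contains a subfamily of size $m$ such that the sum of \emph{every} nonempty subset of that subfamily satisfies $\phi$. In the disjunction step one first pigeonholes on the finitely many possible support patterns of the witnessing decompositions, so that all selected teams decompose ``the same way'' --- precisely the coherence that your pair $\X_1,\X_2$ (and mine above) lacks --- and then extracts nested subfamilies for the two disjuncts; quantifiers are handled by linearity of duplication and supplementation. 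The contradiction is then obtained by feeding in arbitrarily many pairwise incompatible independence-satisfying teams, e.g.\ $\X_t$ with weights $1,t,1/t,1$ on $(0,0),(0,1),(1,0),(1,1)$: each satisfies $\pmi{x}{y}$, while by the Cauchy--Schwarz inequality the sum of any two or more distinct ones violates it. Note that your single pair from Step 1 does not suffice for this argument; one needs an unbounded supply of such teams.
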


This theorem can be then transferred to probabilistic semantics by using the following observations: For any probabilistic $n$-ary atom ${\rm D}$, we can define an $n$-ary atom ${\rm D^*}$ in the weighted semantics as follows:
\[
\mA \models^w_{\X} {\rm D^*}(x_1,\dots,x_n) \text{ if and only if } \mA \models_{\frac{1}{\lvert \X \rvert}\cdot\X} {\rm D}(x_1,\dots,x_n) 
\] 
It follows via a straightforward calculation that ${\rm D^*}$ is union closed, whenever ${\rm D}$ is closed under scaled unions: Assume that $\mA \models^w_{\X} {\rm D^*}(x_1,\dots,x_n)$ and $\mA \models^w_{\Y} {\rm D^*}(x_1,\dots,x_n)$. Fix $k=\frac{\lvert \X \rvert}{\lvert \X \rvert + \lvert \Y \rvert}$ and note that then $1-k = \frac{\lvert \Y \rvert}{\lvert \X \rvert + \lvert \Y \rvert}$. By definition, we get $\mA \models_{\frac{1}{\lvert \X \rvert}\cdot\X} {\rm D}(x_1,\dots,x_n)$ and $\mA \models_{\frac{1}{\lvert \Y \rvert}\cdot\Y} {\rm D}(x_1,\dots,x_n)$, from which $\mA \models_{\frac{k}{\lvert \X \rvert}\cdot\X + \frac{1-k}{\lvert \Y \rvert}\cdot\Y} {\rm D}(x_1,\dots,x_n)$ follows via closure under scaled unions. Finally, since $\frac{k}{\lvert \X \rvert}\cdot\X + \frac{1-k}{\lvert \Y \rvert}\cdot\Y =  \frac{1}{\lvert \X \rvert+\lvert \Y \rvert}\cdot\X + \frac{1}{\lvert \X \rvert+\lvert \Y \rvert}\cdot\Y = \frac{1}{\lvert \X \rvert+\lvert \Y \rvert}\cdot(\X+\Y)$, we obtain that $\mA \models^w_{\X+\Y} {\rm D^*}(x_1,\dots,x_n)$.

The final piece of the puzzle is the following generalisation of \cite[Proposition 8]{HHKKV19}. The original proposition was formulated for concrete atomic dependency statements satisfying the proposition as an atomic case for induction. The inductive argument of the original proof works with any collection of atoms that satisfy the proposition as an atomic case.
\begin{proposition}\label{prop:jelia19}
Let $\D$ be a collection of atoms. 
  If
  \(
  \A\models^w_\X {\rm D}(\vec{x}) \Leftrightarrow
  \A\models_{\frac{1}{|\X|} \cdot \X} {\rm D}(\vec{x}),
  \)
  for every structure $\A$, weighted team $\X:X\to \mathbb{R}_{\geq 0}$  of $\A$, and $D\in\D$, then
  \(
  \A\models^w_\X \phi \Leftrightarrow
  \A\models_{\frac{1}{|\X|} \cdot \X} \phi,
  \)
   for every $\A$, $\X$, and $\phi \in \FO(\D)$ as well.
\end{proposition}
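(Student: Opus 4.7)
The plan is a straightforward structural induction on $\phi\in \FO(\D)$, using the case for $\D$-atoms that is supplied as hypothesis. The inductive step reduces at each stage to the observation that the operations used in the semantic clauses (support, duplication, supplementation, addition of weighted teams) behave well under positive scaling of a weighted team. Throughout I write $\hat\X \dfn \frac{1}{|\X|}\cdot\X$, noting $|\hat\X|=1$ whenever $|\X|>0$, and I handle the edge case $|\X|=0$ (the empty team) separately by an empty-team property argument, since in that case both $\models^w_\X$ and $\models_{\hat\X}$ are vacuously satisfied.

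For the atomic base cases, a literal $l$ satisfies $\A\models^w_\X l$ iff $\A\models_s l$ for every $s\in \supp{\X}$, and $\supp{\X}=\supp{\hat\X}$, so the claim follows (this is the flatness property, item \ref{i:flatness} of Proposition \ref{propositions}). For $D\in\D$, the equivalence is precisely the hypothesis. Conjunction is immediate from the induction hypothesis applied to each conjunct against the same team. For the quantifier cases, I will use that the duplication operation is linear, i.e.\ $(\alpha\X)[A/x]=\alpha(\X[A/x])$ and $|\X[A/x]|=|\X|$, so $\widehat{\X[A/x]}=\hat\X[A/x]$; this lets the universal clause pass through the IH verbatim. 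For the existential clause, the key point is that $\Y\in\supplement{x}{A}{\X}$ implies $|\Y|=|\X|$ (supplementation preserves total weight) and $\hat\Y\in\supplement{x}{A}{\hat\X}$; conversely any probabilistic supplement of $\hat\X$ lifts to a weighted supplement of $\X$ by multiplying by $|\X|$, and the IH takes care of the rest.

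The one non-routine case, and what I expect to be the main obstacle, is the disjunction. For the forward direction, suppose $\A\models^w_\X \psi_0\vee\psi_1$ via $\Y+\Z=\X$ with $\A\models^w_\Y\psi_0$ and $\A\models^w_\Z\psi_1$. If $|\Y|,|\Z|>0$, set $\alpha\dfn |\Y|/|\X|$ and observe
\[
\alpha\hat\Y+(1-\alpha)\hat\Z \;=\; \frac{|\Y|}{|\X|}\cdot\frac{\Y}{|\Y|}+\frac{|\Z|}{|\X|}\cdot\frac{\Z}{|\Z|}\;=\;\frac{\Y+\Z}{|\X|}\;=\;\hat\X,
\]
and by IH $\A\models_{\hat\Y}\psi_0$ and $\A\models_{\hat\Z}\psi_1$, giving $\A\models_{\hat\X}\psi_0\vee\psi_1$. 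If one of $|\Y|,|\Z|$ is zero, the other equals $|\X|$ and one uses the empty-team property plus a trivial split $\alpha\in\{0,1\}$. The converse direction reverses this construction: given $\hat\X=\alpha\Y'+(1-\alpha)\Z'$ with $\A\models_{\Y'}\psi_0$ and $\A\models_{\Z'}\psi_1$, set $\Y\dfn \alpha|\X|\cdot\Y'$ and $\Z\dfn (1-\alpha)|\X|\cdot\Z'$, check $\Y+\Z=\X$, $|\Y|=\alpha|\X|$, $\hat\Y=\Y'$ (and similarly for $\Z$), then apply the IH.

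Thus the induction closes for every connective and quantifier, and the proposition follows. The only subtle bookkeeping is the juggling between additive and convex decompositions in the disjunction step, together with careful handling of zero-weight subteams, which is why the argument is not an entirely mechanical reuse of earlier lemmas but instead generalises the concrete atomic version of \cite{HHKKV19} cleanly to arbitrary collections $\D$.
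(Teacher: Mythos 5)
Your proof is correct and takes essentially the same approach as the paper: the paper's own ``proof'' simply notes that the structural induction of \cite[Proposition 8]{HHKKV19} goes through with the hypothesis supplying the atomic case, and your spelled-out induction (flatness for literals, linearity of duplication and weight-preservation of supplementation for the quantifiers, and the translation between additive and convex decompositions for disjunction) is exactly that argument made explicit. The one soft spot you flag yourself --- invoking an empty-team property for the zero-weight edge cases of the disjunction clause --- is equally implicit in the paper's cited proof, so it does not constitute a deviation or a gap relative to the paper.
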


By combining Theorem \ref{thm:wilke} and Proposition \ref{prop:jelia19} with the two observation made above, we obtain the probabilistic analogue of Theorem \ref{thm:wilke}.
\begin{theorem}
Let $\calC$ be a collection of relational atoms, and let $\calD$ be a collection of atoms that are closed under scaled unions. Then
$\FO(\pind) \not\leq \FO(\calC,\calD)$.
\end{theorem}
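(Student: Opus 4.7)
I would prove the theorem by contrapositive: assuming that $\FO(\pind)$ admits a translation into $\FO(\calC,\calD)$ under probabilistic semantics, I will transport the translation into weighted semantics and contradict Theorem \ref{thm:wilke}.

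The setup uses the same $D \mapsto D^*$ passage as the discussion preceding the theorem. For each probabilistic atom $D$ in $\calC\cup\calD$, define the weighted atom $D^*$ by
$\A \models^w_\X D^* \iff \A\models_{\frac{1}{|\X|}\X} D$. As already observed, $D$ being closed under scaled unions implies $D^*$ being closed under unions; a parallel easy verification shows that if $D$ is relational in probabilistic semantics then $D^*$ is relational in weighted semantics, since scaling preserves supports. Moreover, the weighted marginal independence atom $\pind$ itself satisfies the scaling property $\A \models^w_\X \pind \iff \A \models_{\frac{1}{|\X|}\X} \pind$, simply by clearing the denominator $|\X|$ in its defining equation.

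Now suppose for contradiction that for each $\phi \in \FO(\pind)$ there is some $\psi_\phi \in \FO(\calC,\calD)$ with $\struc_{d[0,1]}(\phi) = \struc_{d[0,1]}(\psi_\phi)$. Let $\psi_\phi^*$ be the weighted-semantics formula obtained from $\psi_\phi$ by replacing every $D \in \calC\cup\calD$ with $D^*$. Since the collection $\{\pind\}\cup\{D^* : D\in\calC\cup\calD\}$ satisfies the hypothesis of Proposition \ref{prop:jelia19}, both $\phi$ and $\psi_\phi^*$ commute with scaling: for every weighted team $\X$ of positive total weight,
\begin{linenomath*}
\begin{equation*}
\A \models^w_\X \phi \iff \A \models_{\frac{1}{|\X|}\X} \phi \quad\text{and}\quad \A \models^w_\X \psi_\phi^* \iff \A \models_{\frac{1}{|\X|}\X} \psi_\phi,
\end{equation*}
\end{linenomath*}
where in the second equivalence I use that $D^*$ agrees with $D$ on probability distributions, so $\psi_\phi^*$ and $\psi_\phi$ coincide there. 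Since $\frac{1}{|\X|}\X$ is a distribution and $\phi$ and $\psi_\phi$ are probabilistically equivalent, the right-hand sides agree, yielding $\A \models^w_\X \phi \iff \A \models^w_\X \psi_\phi^*$; the empty-team case follows from the empty-team property. Hence $\phi \equiv \psi_\phi^*$ in weighted semantics, witnessing $\FO(\pind)\leq \FO(\calC^*,\calD^*)$ in weighted semantics, where $\calC^*$ consists of relational atoms and $\calD^*$ of union-closed atoms. This contradicts Theorem \ref{thm:wilke}.

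The main thing to watch is that the three ingredients — scaling invariance of $\pind$, transfer of relational and union-closure properties under $D\mapsto D^*$, and Proposition \ref{prop:jelia19} — compose correctly; these are exactly the observations flagged in the paragraphs between Theorem \ref{thm:wilke} and the target theorem, so the plan is essentially to assemble them into a single contrapositive argument as above.
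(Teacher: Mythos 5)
Your proof is correct and follows essentially the same route as the paper, which obtains the theorem precisely by combining Theorem \ref{thm:wilke}, Proposition \ref{prop:jelia19}, and the $D \mapsto D^*$ transfer observations stated before the theorem. You merely make explicit two steps the paper leaves implicit --- that relationality transfers from $D$ to $D^*$ because normalization preserves supports, and that the weighted semantics of $\pind$ is itself scaling-invariant (so Proposition \ref{prop:jelia19} applies on the $\FO(\pind)$ side) --- and both verifications are correct.
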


From this, $\FO(\pind) \not \leq \FO(\deps,\approx)$ follows as a special case by Proposition \ref{thm:closure}.

\section{Axiomatization of marginal identity atoms}\label{sect:ax}
Next we turn to axioms of the marginal identity atom, restricting attention to atoms of the form $x_1\ldots x_n\approx y_1\ldots y_n$, where both $x_1\ldots x_n$ and $y_1\ldots y_n$ are sequences of distinct variables. It turns out that the axioms of inclusion dependencies over relational databases \cite{CasanovaFP84} are sound and almost complete for marginal identity; we only need one additional rule for symmetricity.  Consider the following axiomatization:

\begin{enumerate}
\item reflexivity: $x_1\ldots x_n\approx x_1\ldots x_n$;
\item symmetry: if $x_1\ldots x_n\approx y_1\ldots y_n$, then $y_1\ldots y_n\approx x_1\ldots x_n$;
\item projection and permutation:  if $x_1\ldots x_n\approx y_1\ldots y_n$, then $x_{i_1}\ldots x_{i_k}\approx y_{i_1}\ldots y_{i_k}$, where $i_1,\ldots ,i_k$ is a sequence of distinct integers from $\{1, \ldots ,n\}$.
\item transitivity: if $x_1\ldots x_n\approx y_1\ldots y_n$ and $y_1\ldots y_n\approx z_1\ldots z_n$, then $x_1\ldots x_n\approx z_1\ldots z_n$.
\end{enumerate}

For a set of marginal identity atoms $\Sigma\cup\{\sigma\}$, a proof of $\sigma$ from $\Sigma$ 
 is a finite sequence of marginal identity atoms such that (i) each element of the sequence is either from $\Sigma$, or follows from previous atoms in the sequence by an application of a rule, and (ii) the last element in the sequence is $\sigma$. We write $\Sigma \vdash \sigma$ if there is a proof of $\sigma$ from $\Sigma$.
 For a probabilistic team $\X$ and a formula $\phi$ over the empty vocabulary $\tau_\emptyset$, we write $\X \models \phi$ as a shorthand for $\mA\models_\X \phi$, where $\A$ is the structure over $\tau_\emptyset$ whose domain consists of the values in the support of $\X$. We use a shorthand $X\models \phi$, for a team $X$, analogously.
 We write $\Sigma \models \sigma$ if every probabilistic team $\Y$ that satisfies $\Sigma$ satisfies also $\sigma$.
  The proof of the following theorem is an adaptation of a similar result for inclusion dependencies \cite{CasanovaFP84}.
 
\begin{theorem}
Let $\Sigma\cup\{\sigma\}$ be a finite set of marginal identity atoms. Then $\Sigma\models \sigma$ if and only if $\Sigma\vdash \sigma$.
\end{theorem}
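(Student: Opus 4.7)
The plan is to follow the standard Casanova--Fagin--Papadimitriou template for inclusion dependencies and adapt it to probabilistic team semantics, where the extra symmetry rule handles the extra structure of $\approx$. I would split the proof into soundness and completeness.

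Soundness I would verify by routine induction on proof length: for each of the four rules one checks from the definition of $|\X_{\vec x = \vec a}|$ that the corresponding semantic statement is preserved. Reflexivity and symmetry are immediate. For projection/permutation I would sum the identity $|\X_{\vec x = \vec a}|=|\X_{\vec y=\vec a}|$ over the positions being projected out (crucially using that both tuples consist of distinct variables, so the marginalization is symmetric on the two sides). Transitivity is a one-line chaining of equalities.

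For completeness I would argue the contrapositive. Assume $\Sigma\not\vdash\sigma$, writing $\sigma = x_1\ldots x_n\approx y_1\ldots y_n$, and build a probabilistic team $\X$ satisfying $\Sigma$ but not $\sigma$. Following CFP, construct a witness team $T$ by a chase starting from an initial tuple $t_0$ that assigns pairwise distinct fresh constants $a_1,\ldots,a_n$ to $x_1,\ldots,x_n$ (and further fresh constants to remaining variables). Whenever $\vec u\approx\vec v\in\Sigma$ and some existing $t\in T$ has $t(\vec u)=\vec a$ with no realization of $\vec a$ as a $\vec v$-image, add a new tuple with $\vec v$-value $\vec a$ and fresh values elsewhere, and do the symmetric step in the other direction. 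The standard CFP trace argument then shows that $(a_1,\ldots,a_n)$ never appears as a $\vec y$-image: any chase step producing such a witness would unwind into a proof $\Sigma\vdash\sigma$ in our calculus, since reflexivity seeds the argument, projection/permutation extracts sub-tuples, symmetry inverts each chase direction (this is exactly where the extra symmetry rule is used), and transitivity chains successive chase steps.

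The main obstacle is the upgrade from a team-level counter-model to a \emph{probabilistic} one: unlike $\bowtie$ in team semantics, the atom $\vec u\approx\vec v$ demands equal \emph{multiplicities} of values, not merely equal sets, so the uniform distribution on $T$ need not satisfy $\Sigma$. My plan is to run the chase in a ``symmetric'' fashion, so that every tuple added to witness $\vec u\subseteq\vec v$ is paired with one witnessing $\vec v\subseteq\vec u$, and to close $T$ into a finite cyclic completion in which the multisets $\{\!\{s(\vec u):s\in T\}\!\}$ and $\{\!\{s(\vec v):s\in T\}\!\}$ coincide for every $\vec u\approx\vec v\in\Sigma$; uniform weighting of such a balanced $T$ then models $\Sigma$ as marginal identities and still places mass on $t_0$, which by the trace argument gives $\X\not\models\sigma$. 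Should a direct termination of the balanced chase prove delicate, the fallback is a linear-programming feasibility argument on the (possibly infinite) chase: the balance equations for $\Sigma$ together with $w(t_0)>0$ form a linear system whose infeasibility would yield, via a Farkas-style dual certificate, a syntactic derivation of $\sigma$ from $\Sigma$, again contradicting $\Sigma\not\vdash\sigma$.
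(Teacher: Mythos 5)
Your skeleton --- soundness by routine induction, completeness via a chase whose steps are traced back into derivations (symmetry inverting chase directions, projection/permutation extracting subtuples, transitivity chaining successive steps) --- is the same template the paper uses, and your traceback plan corresponds to the paper's Claim 2. The genuine gap is exactly the step you yourself flag as the main obstacle: producing a \emph{finite} team whose projections agree as \emph{multisets} for every atom of $\Sigma$, so that the uniform distribution is a probabilistic counter-model. You assert this can be arranged by running the chase ``symmetrically'' and closing $T$ into a ``finite cyclic completion,'' but you give no construction and no termination argument, and pairwise pairing of witnesses does not yield multiset equality: every tuple you add simultaneously changes the multiplicity counts of \emph{all} atoms' projections, so restoring the balance of one atom can unbalance another, and the whole difficulty is showing this cascade stops. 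This is where the paper does its real work: it replaces each $u_1\ldots u_l\approx v_1\ldots v_l$ in $\Sigma$ by a symmetric pair of \emph{inclusion} atoms $u_1\ldots u_l\, i_U\subseteq v_1\ldots v_l\, i_V$ over a variable set extended by index variables $i_V$, one per set of variables, whose value on each new chase assignment is defined to be the number of existing assignments agreeing with it on $V$. These counters force the chase to continue until multiplicities (not just value sets) match, and the proof that this chase terminates --- a delicate argument tracking the ``degree'' of assignments and comparing index values along generation sequences --- is the technical heart of the paper's proof and has no counterpart in your sketch.

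Your fallback via linear programming does not close this gap either, for two reasons. First, the chase with fresh values is in general infinite (already $\{x\approx y\}$ produces an infinite chain of fresh witnesses), so your balance system has infinitely many variables; Farkas-type duality is a finite-dimensional statement and needs a separate compactness argument here, and even a feasible point need not have finite support, whereas probabilistic teams in this setting are by definition distributions on \emph{finite} teams. Second, the claim that an infeasibility certificate ``unwinds'' into a derivation in the four-rule calculus is not a routine step: a dual certificate is a linear combination of balance equations, and converting such a combination into applications of reflexivity, symmetry, projection/permutation, and transitivity is essentially the completeness theorem itself restated in dual form. Without either a termination proof for a balanced chase (the paper's route) or a worked-out dual-to-derivation conversion, the proposal identifies the right difficulty but does not resolve it.
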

\begin{proof}
It is clear that the axiomatization is sound; we show that it is also complete. 

Assume that $\Sigma\models \sigma$, where $\sigma$ is of the form $  x_1\ldots x_n\approx y_1\ldots y_n$. 
 Let $\base$ consist of the variables appearing in $\Sigma\cup \{\sigma\}$. For each subset $V\sub \base$, let $i_V$ be an auxiliary variable, called an \emph{index}. Denote the set of all indices over subsets of $\base$ by $\idx$. Define $\Sigma^*$ as the set of all inclusion atoms $u_1\ldots u_l i_U\sub v_1\ldots v_l  i_V$, where $U=\{ u_1,\ldots ,u_l\}$, $V=\{v_1, \ldots ,v_l\}$, and $u_1\ldots v_l\approx v_1\ldots v_l$ or its inverse $v_1\ldots v_l \approx u_1\ldots v_l$ is in $\Sigma$.

To show that $\Sigma\vdash x_1\ldots x_n\approx y_1\ldots y_n$, we will first apply the chase algorithm of database theory to obtain a finite team $Y$ that satisfies $\Sigma^*$, where the codomain of $Y$ consists of natural numbers.
The indices $i_V$ in $Y$, in particular, act as multiplicity measures for values of $V$, making sure that both sides of any marginal identity atom in $\Sigma$ appear in $Y$ with equal frequency. This way, the probabilistic team $\Y$, defined as the uniform distribution over $Y$, will in turn satisfy $\Sigma$. 
Finally, we show that the chase algorithm yields a proof of $\sigma$, utilizing the fact that $\Y$ satisfies $\sigma$ by assumption.

  Next, we define a team $X_0$ that serves as the starting point of the chase algorithm. We also describe how assignments over $\base$ that are introduced during the chase are extended to $\base \cup \idx$.
  
  Let $X_0=\{s^*\}$, where $s^*$ is an assignment defined as follows.
   Let $s^*(x_i)=i$, for $1 \leq i \leq n$, and $s^*(x)=0$, for  $x\in (\base \cup \idx) \setminus \{x_1, \ldots ,x_n\}$. 
 For a team $Y$ with variable domain $\base \cup \idx$ 
 and an assignment $s$ with variable domain $\base$, define $s_Y\colon \base \cup \idx \to \mathbb{N}$ as the extension of $s$
  such that 
 \begin{equation}\label{eq:const}
 s_Y(i_V) = | \{t \in Y \mid t \upharpoonright V = s \upharpoonright V\}|,
 \end{equation}
 for $i_V \in \idx$.
 That is, the value $s_Y(i_V)$ 
  is the number of repetitions of $s \upharpoonright V$ in $Y$.

 In what follows, we describe a chase rule to expand  a team $X$. We say that an assignment $s'$ \emph{witnesses} an inclusion atom $\tuple x \sub \tuple y$ for another assignment $s$, if $s(\tuple x) = s'(\tuple y)$.
Consider  the following chase rule:

\paragraph{Chase rule}
Let $X$ be a team with variable domain $\base \cup \idx$, $s\in X$, and $\sigma:= u_1\ldots u_l i_U\sub v_1\ldots v_l  i_V\in \Sigma^*$.
Suppose no assignment in $X$  witnesses $ \sigma$ for $s$. Now let $s'$ be the assignment with variable domain $\base$ that is defined as 
\[
s'(x):=\begin{cases}
s(u_j) &\text{if $x$ is $v_j$, and}\\
0 &\text{ otherwise.}\\
\end{cases}
\]
Then we say that $s$ and $\sigma$ \emph{generate} the assignment $s'_{X}$. 

Next, let  $\calS=(X_0,X_1, X_2, \ldots) $ be a maximal sequence, where $X_{i+1}=X_i\cup\{s'_{X_i}\}$ for an assignment $s'_{X_i}$ generated non-deterministically 
 by some $s\in X_j$ and $\tau\in\Sigma^*$ according to the chase rule, where $j\leq i$ is minimal. Define $Y$ as the union of all elements in $\calS$. Note that $Y$ is finite if $\calS$ is. In particular, if $Y$ is finite, then it equals $X_i$, where $i$ is the least integer such that the chase rule is not anymore applicable to $X_i$. Below, we will show that $Y$ is finite, which follows if the chase algorithm terminates.
 

It is easy the verify that the following holds, for each $i\in \N$: For any  $U=\{u_1,\dots,u_n\}$ and $s\in X_i$, if the team $X_s:=\{t \in X_i \mid t \upharpoonright U = s \upharpoonright U\}$ is of size $m$, 
 then $\{t(i_U)\mid t\in X_s\}=\{0,\dots ,m-1\}$. That is, the values of $i_U$ in $X_s$ form an initial segment of $\mathbb{N}$ of size $\lvert X_s \rvert $.
Therefore, if $s\in X_i$ has no witness for $u_1\ldots u_l i_U\sub v_1\ldots v_l  i_V$ in $X_i$, then for any $t\in X_i$ such that $s(u_1\ldots u_l)= t(v_1\ldots v_l)$, we have $s(i_U)> t(  i_V)$.
It follows that 
\begin{equation}\label{eq:prop}
\text{
$ s( i_U)\geq s'_{X_i}(i_V) $ if $s'_{X_i}$ is generated by $s\in X_i$ and $u_1\ldots u_l i_U\sub v_1\ldots v_l  i_V$.
}
\end{equation} 

%
%

We will next show how $\Sigma\vdash x_1\ldots x_n\approx y_1\ldots y_n$ follows from the following two claims. We will then prove the claims, which concludes the proof of the theorem. 
\begin{claim}\label{claim:one}
$Y$ is finite.
\end{claim}

\begin{claim}\label{claim:two}
If $Y$ contains an assignment $s$ that maps some sequence of variables $z_j$, for $1 \leq j \leq k$, to distinct  $1\leq i_j\leq n$, then $\Sigma\vdash x_{i_1}\ldots x_{i_k}\approx z_1\ldots z_k$.
\end{claim}

It follows by construction that $Y\models \Sigma^*$. Since $Y$ is finite by Claim \ref{claim:one}, we may define a probabilistic team $\Y$ as the uniform distribution over $Y$. By the construction of $Y$ and $\Sigma^*$, it follows that $\Y\models \Sigma$, and hence $\Y \models x_1\ldots x_n\approx y_1\ldots y_n$ follows from the assumption that $\Sigma\models \sigma$. Consequently, $Y$ contains an assignment $s$ which maps $y_i$ to $i$, for $1 \leq i\leq n$. We conclude that by Claim \ref{claim:two} there is a proof of $  x_1\ldots x_n\approx y_1\ldots y_n$ from $\Sigma$. 
 \footnote{
 Claim \ref{claim:two}  is essentially from  \cite{CasanovaFP84}, with the exception that here we also need to consider symmetricity. This claim intuitively states that the chase procedure produces only assignments whose corresponding marginal identity atoms are provable from $\Sigma$
 }
 
 To complete the proof, we prove Claims \ref{claim:one} and \ref{claim:two}.

\begin{proof}[Proof of Claim \ref{claim:one}]
Assume towards contradiction that $Y$ is infinite, which entails that the sequence $\calS=(X_0,X_1, X_2, \ldots )$ is infinite.
 W.l.o.g. the chase rule is always applied to $s$ that belongs to the intersection $X_i \cap X_{j}$, for minimal $j\leq i$. 
 Define $\calS'=(X'_{0}, X'_{1}, X'_{2}, \ldots) $ as the sequence, where $X'_0=X_0$, and $X'_{i+1}$ is defined as $X_j$ where $j$ is the least integer such that  all $s\in X'_{i}$ and $\sigma\in \Sigma^*$ have a witness in $X_j$.
Due to the application order of the chase rule, it follows that 
\begin{equation}\label{eq:teksti}
\text{any assignment in $X'_{i+1}\setminus X'_{i}$ is generated by some assignment in $X'_{i}\setminus X'_{i-1}$,
}
\end{equation}
assuming $X'_{-1}=\emptyset$. Moreover, $\calS'$ is a subsequence of $\calS$ which is finite iff $\calS$ is.

 We first define some auxiliary concepts. 
 For an assignment $s$ in $X$, we use a shorthand $\bsupp{s}$ for $s\upharpoonright \base$,  called the \emph{base} of $s$. We also define $\bsupp{X}\dfn\{\bsupp{s}\mid s\in X\}$. 
 The \emph{multiplicity in $X$} of an assignment $s$ is defined as  
  $ |\{s'\in X\mid \bsupp{s} =\bsupp{s'}\}|$. 
 Note that $\bsupp{Y}$ is finite, for $\bsupp{s}$ is a mapping from $\base$ into $\{0, \ldots ,n\}$ for all $s\in Y$.
 Thus, since $Y$ is infinite, it contains assignments with infinite multiplicity in $Y$. 
 Next, we associate each assignment $s$ with the set of its \emph{positive variables} $\positive{s}\dfn\{x\in \base \mid s(x)> 0\}$, the size of which is called the  \emph{degree} of $s$.

Let $k$ be some integer such that $X'_{k}$  contains every assignment in $Y$ that has finite multiplicity in $Y$, and denote $X'_k$ by $Z$. 
Let $M\in \{1, \ldots ,n\}$ be the maximal degree of any assignment in $Y$ with infinite multiplicity in $Y$, that is, the maximal degree of any assignment in $Y\setminus Z$. 
Then, take any $s_L\in X'_{L} \setminus X'_{L-1}$ of degree $M$, where $L> k+S$ for $S\dfn |\bsupp{Y}|$. By  property \eqref{eq:teksti}, we find a sequence of assignments $(s_0, \ldots ,s_L)$, where $s_{i+1} \in X'_{i+1}\setminus X'_{i}$, for $i< L$, was generated by $s_{i}\in X'_{i}\setminus X'_{i-1} $ with the chase rule. 
  Since $S$ is sufficiently large, this sequence has a suffix $(s_l, \ldots , s_{m}, \ldots ,s_{L})$ in which each assignment belongs to $Y\setminus Z$, has degree $M$, and where $l < m$ and $\bsupp{s_l} = \bsupp{s_{m}}$.

It now suffices to show the following subclaim:
\begin{subclaim}
If $t,t'\in Y\setminus Z$ are two assignments with degree $M$ such that $t'$ was generated by $t$ by the chase rule, then
 $t(i_{\positive{t}}) \geq t'(i_{\positive{t'}}) $. 
\end{subclaim}

 The subclaim implies that $s_l(i_{\positive{s_l}})\geq s_m(i_{\positive{s_m}})$, which leads to a contradiction. For this, observe that the assignment construction in \eqref{eq:const},  together with $\bsupp{s_l} = \bsupp{s_{m}}$, implies that $s_l(i)< s_m(i)$ for all indices $i$. In particular, we have  
 $s_l(i_{\positive{s_l}}) < s_m(i_{\positive{s_m}}) $ since $\positive{s_l}=\positive{s_m}$. Hence, the assumption that $Y$ is infinite must be false.
\end{proof}

\begin{proof}[Proof of the subclaim]
%
%
%
 Suppose $t'$ is generated by $t$ and $u_1\ldots u_l i_U\sub v_1\ldots v_l  i_V\in \Sigma^*$. Without loss of generality  ${\positive{t}} =\{u_1, \ldots ,u_M\}$, in which case ${\positive{t'}} = \{v_1, \ldots ,v_M\}$.
We need to show that $t(i_{\positive{t}} ) \geq t'(i_{\positive{t'}})$.
  Now, $(t(u_1), \ldots ,t(u_l))$ is a sequence of the form $(i_1, \ldots ,i_M,0\ldots ,0)$, where $i_j$ are positive integers. By the assumption that $t\in Y\setminus Z$, there is an integer $m$ such that $t\in X_{m+1}\setminus X_{m}$ and $Z \subseteq X_m$.
 We obtain that
\begin{align}
t(i_{\positive{t}})
=\,&  \lvert \{s \in X_m \mid (s(u_1), \ldots ,s(u_M)) = (i_1, \ldots ,i_M)\} \rvert\label{nolla}\\
=\,&\sum_{j_{M+1}, \ldots ,j_l\in\{0,\ldots ,n\}} 
|\{s \in X_m \mid (s(u_1), \ldots ,s(u_l)) = (i_1, \ldots ,i_M,j_{M+1},\ldots ,j_l)\}|\nonumber\\
=\, &
|\{s \in X_m \mid (s(u_1), \ldots ,s(u_l)) = (i_1, \ldots ,i_M,0\ldots ,0)\}|
+\nonumber\\
&\sum_{\substack{j_{M+1}, \ldots ,j_l\in\{0,\ldots ,n\}\nonumber\\
(j_{M+1}, \ldots ,j_l)\neq (0, \ldots ,0) }} 
|\{s \in X_m \mid (s(u_1), \ldots ,s(u_l)) = (i_1, \ldots ,i_M,j_{M+1}\ldots ,j_l)\}|\nonumber\\
=\, &
t(i_U)
+
\sum_{\substack{j_{M+1}, \ldots ,j_l\in\{0,\ldots ,n\}\\(j_{M+1}, \ldots ,j_l)\neq (0, \ldots ,0) }} 
|\{s \in Z \mid (s(u_1), \ldots ,s(u_l)) = (i_1, \ldots ,i_M,j_{M+1},\ldots ,j_l)\}|\label{eka}\\
\geq\, &
t'(i_V)
+
\sum_{\substack{j_{M+1}, \ldots ,j_l\in\{0,\ldots ,n\}\\(j_{M+1}, \ldots ,j_l)\neq (0, \ldots ,0) }} 
|\{s \in Z \mid (s(v_1), \ldots ,s(v_l)) = (i_1, \ldots ,i_M,j_{M+1},\ldots ,j_l)\}|\label{toka}\\
=\, &
t'(i_{\positive{t'}})\nonumber
\end{align}
Here, the assignment construction in \eqref{eq:const} entails \eqref{nolla}, and it is also used in \eqref{eka}. For the summation term appearing in \eqref{eka}, 
we note that each assignment whose degree is strictly greater than $M$ must belong to $Z$. It remains to consider \eqref{toka}; the last equality is symmetrical to the composition of the first four equalities.

To show that \eqref{toka} holds, observe first that $t(i_U)\geq t'(i_V)$ by property \eqref{eq:prop}. For the summation term appearing in \eqref{toka}, suppose $\alpha=|\{s \in Z \mid (s(v_1), \ldots ,s(v_l)) = (i_1, \ldots ,i_M,j_{M+1},\ldots ,j_l)\}|$, for some sequence $j_{M+1}, \ldots ,j_l\in \{0, \ldots ,n\}$ containing a positive integer. By the assignment construction in \eqref{eq:const}, we find an assignment $s_0\in Z$ such that $(s_0(v_1), \ldots ,s_0(v_l),s_0(i_V)) = (i_1, \ldots ,i_M,j_{M+1},\ldots ,j_l,\alpha-1)$. Observe that $v_1\ldots v_l i_V\sub u_1\ldots u_l  i_U\in \Sigma^*$, because $\Sigma^*$ is symmetrical.
Now, since $Z$ is subsumed by $Y$, which in turn satisfies $v_1\ldots v_l i_V\sub u_1\ldots u_l  i_U$, we  find an assignment $s_1\in Y$ such that $(s_1(u_1), \ldots ,s_1(u_l),s_1(i_U)) = (i_1, \ldots ,i_M,j_{M+1},\ldots ,j_l,\alpha-1)$. Since the degree of $s_1$ is greater than $M$, we observe that $s_1\in Z$. This entails that $\alpha\leq |\{s \in Z \mid (s(u_1), \ldots ,s(u_l)) = (i_1, \ldots ,i_M,j_{M+1},\ldots ,j_l)\}|$ by the assignment construction in \eqref{eq:const}. 
From this, we obtain that \eqref{toka} holds. This shows the subclaim.
\end{proof}

\begin{proof}[Proof of Claim \ref{claim:two}]
Note that, if $s\in Y$, then there exists a minimal $i$ such that $s\in X_i\setminus X_{i-1}$.
We prove the claim by induction on $i$. For the initial team $X_0 = \{s^*\}$, we have $s^*(x_i)=i$, for $1 \leq i\leq n$. By reflexivity we obtain $x_{i_1}\ldots x_{i_k}\approx x_{i_1}\ldots x_{i_k}$, and thus the claim holds for the base step.

For the inductive step, suppose $s\in X_{i+1}\setminus X_{i}$ is generated by some $s'\in X_{j}\setminus X_{j-1}$, $j\leq i$, and some $u_1\ldots u_l i_U\sub v_1\ldots v_l  i_V$ in $\Sigma^*$. For a variable $v_i$ from $v_1,\ldots, v_l $ we say the variable $u_i$ from $u_1, \ldots ,u_l$ is its \emph{corresponding} variable.
Let $z_1, \ldots ,z_k$ be variables as in the claim, i.e., $s(z_j)=i_j\geq 1$, for $1 \leq j\leq k$. Now from the construction of $s$ (i.e., \eqref{eq:const}) it follows that $z_1, \ldots ,z_k$ are variables from $v_1,\ldots, v_l$. Let $z'_1, \ldots ,z'_k$ from $u_1, \ldots ,u_l$ denote the corresponding variables of $z_1, \ldots ,z_k$.  
Since $s$ was constructed by $s'$ and $u_1\ldots u_l i_U\sub v_1\ldots v_l  i_V$, it follows that $s(z_1, \ldots ,z_k)= s'(z'_1, \ldots ,z'_k)$.
By applying the induction hypothesis to $s'$, we obtain that $\Sigma$ yields a proof of $x_{i_1}\ldots x_{i_k} \approx z'_1\ldots z'_k$.
Since $u_1\ldots u_l \approx v_1\ldots v_l $ or its inverse is in $\Sigma$, using projection and permutation (and possibly symmetricity) we can deduce $z'_1\ldots z'_k\approx z_1\ldots z_k$.
Thus by transitivity we obtain a proof of $x_{i_1}\ldots x_{i_k}  \approx z_1\ldots z_k$. This concludes the proof of the claim. 
\end{proof}
\end{proof}

\fi

\section{Conclusion}
Our investigations gave rise to the expressiveness hierarchy in Table \ref{tbl:results}.
\begin{table}[t]
	\begin{tabular}{cccccccc}	
		&&$\text{ almost conjunctive }\leau{=,\SUM,0,1}$& &$\pesou{=,+,0,1}$&&$\pesou{=,\times,+,0,1}$ \\
		&& \rotatebox[origin=c]{-90}{$\equiv$}${}^*$ &&\rotatebox[origin=c]{-90}{$\equiv$}${}^*$ && \rotatebox[origin=c]{-90}{$\equiv$}$\stackrel{\scriptsize{\text{\cite{abs-2003-00644}}}}{}$ \\
		&&$\FO(\approx)$&$<\stackrel{\scriptsize{\text{\cite{HHKKV19}}}}{}$ &$\FO(\approx, \deps)$&$<^*$&$\FO(\pind)$\\
		&&  && &&\rotatebox[origin=c]{-90}{$\equiv$}$\stackrel{\scriptsize{\text{\cite{HHKKV19}}}}{}$ \\
		&&&&&& $\FO(\cpind)$ 
	\end{tabular}
\caption{
The known expressivity hierarchy of logics with probabilistic team semantics and corresponding $\ESO$ variants on metafinite structures.  
The results of this paper are marked with an asterisk (*).  
}
\label{tbl:results}
\end{table}
Furthermore, we established that $\FO(\approx)$ captures $\PTIME$ on finite ordered structures, and that $\FO(\approx,\deps)$ captures $\NP$ on finite structures.
Its worth to note that almost conjunctive $\eear{\leq ,+,\SUM,0,1}$ is in some regard a maximal tractable fragment of additive existential second-order logic, as dropping either the requirement of being almost conjunctive, or that of having the prefix form $\existst^*\exists^*\forall^*$, leads to a fragment that captures $\NP$. We also showed that the full additive existential second-order logic (with inequality and constants $0$ and $1$) collapses to $\NP$, a result which as far as we know has not been stated previously. 


Lastly, extending the axiom system of inclusion dependencies with a symmetry rule, we presented a sound and complete axiomatization for marginal identity atoms. 
Beside this result, it is well known that also marginal independence has a sound and complete axiomatization \cite{geiger:1991}. These two notions play a central role in statistics, as it is a common assumption in hypothesis testing that samples drawn from a population are independent and identically distributed (i.i.d.). 
It is an interesting open question whether marginal independence and marginal identity, now known to be axiomatizable in isolation, can also be axiomatized together. 


\section{Acknowledgements}
We would like to thank the anonymous referee for a number of useful suggestions. We also thank Joni Puljuj\"{a}rvi and Richard Wilke for pointing out errors in the previous manuscripts.

\vspace{-2mm}

\bibliographystyle{plain}
\bibliography{biblio,biblio2}


\appendix
\section{BSS-toolbox}\label{sect:app}
In this section we give a short introduction to BSS machines (see e.g. \cite{BSSbook}). The inputs for BSS machines come from $\RE^* \dfn \bigcup \{\RE^n \mid n\in\N\}$, which can be viewed as the real analogue of $\Sigma^*$ for a finite set $\Sigma$. The \emph{size} $|x|$ 
of $x\in \RE^n$ is defined as $n$. We also define $\RE_*$ as the set of all sequences $x=(x_i)_{i\in \mathbb{Z}}$ where $x_i\in \RE$. The members of $\RE_*$ are thus bi-infinite sequence of the form $(\ldots, x_{-2},x_{-1},x_0,x_1,x_2,\ldots )$. Given an element $x\in \RE^* \cup \RE_*$ we write $x_i$ for the $i$th coordinate of $x$. The space $\RE_*$ has natural shift operations. We define shift left $\sigma_l\colon \RE_* \to \RE_*$ and shift right $\sigma_r\colon\RE_* \to \RE_*$ as $\sigma_l(x)_i\dfn  x_{i+1}$ and $\sigma_r(x)_i\dfn x_{i-1}$.

\begin{definition}[BSS machines] \label{def:BSS} A BSS machine consists of an input space $\mathcal{I}=\RE^*$, a state space $\mathcal{S}=\RE_*$, and an output space $\mathcal{O}=\RE^*$, together with a connected directed graph whose nodes are labelled by $1, \ldots ,N$. The nodes are of five different types.
\begin{enumerate}
\item \emph{Input node}. The node labeled by $1$ is the only input node. The node is associated with a next node $\beta(1)$ and the input mapping $g_I: \mathcal{I} \to \mathcal{S}$.
\item \emph{Output node}. The node labeled by $N$ is the only output node. This node is not associated with any next node. Once this node is reached, the computation halts, and the result of the computation is placed on the output space by the output mapping $g_O: \mathcal{S}\to \mathcal{O}$.
\item \emph{Computation nodes.} A computation node $m$ is associated with a next node $\beta(m)$ and a mapping $g_m: \mathcal{S}\to \mathcal{S}$ such that for some $c\in \mathbb{R}$ and $i,j,k\in \mathbb{Z}$ the mapping $g_m$ is identity on coordinates $l\neq i$ and on coordinate $i$ one of the following holds:
\begin{itemize}
\item $g_m(x)_i =x_j+x_k$ (addition),
\item $g_m(x)_i = x_j - x_k$ (subtraction),
\item $g_m(x)_i = x_j\times x_k$ (multiplication),
\item $g_m(x)_i = c$ (constant assignment).
\end{itemize} 
\item \emph{Branch nodes.} A branch node $m$ is associated with nodes $\beta^-(m)$ and $\beta^+(m)$. Given $x\in \mathcal{S}$ the next node is $\beta^-(m)$ if $x_0\leq 0$, and $\beta^+(m)$ otherwise.
\item \emph{Shift nodes.} A shift node $m$ is associated either with shift left $\sigma_l$ or shift right $\sigma_r$, and a next node $\beta(m)$.
\end{enumerate}
The input mapping $g_I: \mathcal{I} \to \mathcal{S}$ places an
input $(x_1, \ldots ,x_n)$ in the state
\[(\ldots ,0,n,x_1, \ldots ,x_n, 0,\ldots )\in \mathcal{S},\]
 where the size of the input $n$ is located at the zeroth
 coordinate. The output mapping $g_O\colon \mathcal{S}\to
 \mathcal{O}$ maps a state to the string consisting of its first
 $l$ positive coordinates, where $l$ is the number of consecutive ones stored in the negative coordinates starting from the first negative coordinate.
 For instance, $g_O$ maps 
\[(\ldots ,2,1,1,1,n,x_1, x_2,x_3,x_4,\ldots )\in \mathcal{S},\]
 to $(x_1, x_2,x_3)\in \mathcal{O}$.
 A configuration at any moment of computation consists of a node
 $m\in \{1, \ldots ,N\}$ and a current state $x\in\mathcal{S}$.
The (sometimes partial) \emph{input-output} function $f_M:\RE^*\to \RE^*$ of a
machine $M$ is now defined in the obvious manner.
A function $f:\RE^*\to \RE^*$ is \emph{computable} if $f=f_M$ for some machine $M$. A language $L\sub \RE^*$ is \emph{decided} by a BSS machine $M$ if its characteristic function $\chi_L\colon \RE^*\to \RE^*$ is $f_M$.
 \end{definition}

\paragraph{Deterministic complexity classes.}
A  machine $M$ \emph{runs in (deterministic) time} $t\colon \N \rightarrow \N$,
if  $M$ reaches the output in $t(|x|)$
steps for each input $x\in \mathcal{I}$.
The  machine $M$ runs in \emph{polynomial time} if $t$ 
is a polynomial function.
 The complexity class
$\PTIME_\RE$ is defined as the set of all subsets of $\RE^*$ that
are decided by some  machine $M$ running in polynomial time.

\paragraph{Nondeterministic complexity classes.}
A language $L\sub \RE^*$ is \emph{decided nondeterministically}
 by a BSS machine $M$, if
 \[
 x \in L \quad\text{ if and only if }\quad f_M((x,x')) =1, \text{ for some $x'\in  \RE^*$}.
 \]
Here we assume a slightly
different input mapping $g_I:\mathcal{I}\to \mathcal{S}$, which places
an input $(x_1, \ldots ,x_n,x'_1, \ldots ,x'_m)$ in the state
\[(\ldots ,0,n,m,x_1, \ldots ,x_n,x'_1, \ldots ,x'_m,\ldots )\in \mathcal{S},\]
where the sizes of $x$ and $x'$ are respectively placed on the first two coordinates.
When we consider languages that a machine $M$ decides nondeterministically, we
call $M$ \emph{nondeterministic}. Sometimes when we wish to emphasize
that this is not the case, we call $M$ \emph{deterministic}.
Moreover, we say that $M$ is \emph{[0,1]-nondeterministic}, if
the guessed strings $x'$ are required to be from $[0,1]^*$.
L is \emph{decided in time} $t\colon \N \rightarrow \N$,
if, for every  $x \in L$, $M$ reaches the output $1$ in $t(|x|)$ steps for some $x'\in \RE^*$.
The machine \emph{runs in polynomial time} if $t$ is a polynomial function.
The class $\NP_\RE$ consists of those languages $L\subseteq \RE^*$
for which there exists a machine $M$ that nondeterministically decides $L$
in polynomial time.
Note that, in this case, the size of $x'$ above can be bounded by a polynomial (e.g., the running time of $M$) without altering the definition. The complexity class $\NP_\RE$ has many natural complete problems such as 4-FEAS, i.e., the problem of determining whether a polynomial of degree four has a real root \cite{blum1989}.

\paragraph{Complexity classes with Boolean restrictions.}
If we restrict attention to machines $M$ that may use only $c\in \{0,1\}$ in constant assignment nodes, then the corresponding complexity classes are denoted using an additional superscript $0$ (e.g., as in $\NP^0_\RE$).  Complexity classes over real computation can also be related to standard complexity classes. For a complexity class $\calC$ over the reals, the \emph{Boolean part} of $\calC$, written $\bp{\calC}$, is defined as $\{L\cap\{0,1\}^*\mid L\in \calC\}$.

\paragraph{Descriptive complexity.}
Similar to Turing machines, also BSS machines can be studied from the vantage point of descriptive complexity. To this end, finite $\RE$-structures are encoded as finite strings of reals using so-called rankings that stipulate an ordering on the finite domain. Let $\A$ be an $\RE$-structure over $\tau\cup\sigma$ where $\tau$ and $\sigma$ are relational and functional vocabularies, respectively. A \emph{ranking} of $\A$ is any bijection $\pi\colon\Dom(A)\to \{1, \ldots ,|A|\}$. A ranking $\pi$ and the lexicographic ordering on $\mathbb{N}^k$ induce a \emph{$k$-ranking} $\pi_k\colon\Dom(A)^k\to \{1, \ldots ,|A|^k\}$ for $k\in \mathbb{N}$.
Furthermore, $\pi$ induces the following encoding $\enc_\pi(\A)$. First we define $\enc_\pi(R^\A)$ and $\enc_\pi(f^\A)$ for $R\in \tau$ and $f\in \sigma$:
\begin{itemize}
\item Let $R\in \tau$ be a $k$-ary relation symbol. The encoding $\enc_\pi(R^\A)$ is a binary string of length $\lvert A \rvert^k$ such that the $j$th symbol in $\enc_\pi(R^\A)$ is $1$ if and only if $(a_1, \ldots ,a_k) \in R^\A$, where $\pi_k(a_1, \ldots ,a_k)=j$.
\item Let $f\in \sigma$ be a $k$-ary function symbol. The encoding $\enc_\pi(f^\A)$ is string of real numbers of length $\lvert A \rvert^k$  such that the $j$th symbol in $\enc_\pi(f^\A)$ is $f^\A(\vec{a})$, where $\pi_k(\vec{a})=j$.
\end{itemize}
The encoding $\enc_\pi(\A)$ is  then the concatenation of the string $(1,\ldots ,1)$ of length $|A|$ and the encodings of the interpretations of the relation and function symbols in $\tau\cup\sigma$.
We denote by $\enc(\A)$ any encoding $\enc_\pi(\A)$ of $\A$.

Let $\calC$ be a complexity class and $\eso{S}{O,E,C}$ a logic, where $O\sub \{+,\times, \SUM\}$,  $E \sub \{=,<,\leq\}$,  $C\sub \RE$, and $S\sub \RE$ or $S=d[0,1]$. Let $X\sub \RE$ or $X=d[0,1]$, and let $\calS$ be an arbitrary class of $X$-structures over $\tau\cup\sigma$ that is closed under isomorphisms.
We write $\enc(\calS)$ for the set of encodings of structures in $\calS$. Consider the following two conditions:
\begin{enumerate}[label=(\roman*)]
\item $\enc(\calS)=\{\enc(\A)\mid \A \in \struc^X(\phi)\}$ for some $\phi\in \eso{S}{O,E,C}[\tau\cup\sigma]\}$,
\item $\enc(\calS)\in \calC$.
\end{enumerate}
If  $(i)$ implies $(ii)$, we write $\eso{S}{O,E,C} \leq_X \calC$, and if the vice versa holds, we write $\calC \leq_X \eso{S}{O,E,C} $. If both directions hold, then we write $\eso{S}{O,E,C} \equiv_X \calC$. We omit the subscript $X$ in the notation if $X=\RE$.

The following results due to Gr\"adel and Meer extend Fagin's theorem to the context of real computation.
\begin{theorem}[\cite{GradelM95}]\label{thm:meer}
$\esor{+,\times,\leq,(r)_{r\in \RE}}\equiv \NP_\RE$. 
\end{theorem}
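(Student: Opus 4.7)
My plan is to prove the two inclusions separately, following the blueprint of Fagin's classical theorem adapted to the BSS model of computation over the reals.

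For the easier direction $\esor{+,\times,\leq,(r)_{r\in \RE}} \leq \NP_\RE$, fix a sentence $\phi = \existst f_1 \dots \existst f_k\, \psi$ with $\psi$ first-order, and a ranking $\pi$ of a structure $\A$ yielding $\enc_\pi(\A) \in \RE^*$. A non-deterministic BSS machine $M$ first reads $\enc_\pi(\A)$, determining $|A|$ and all $R^\A$, $g^\A$, then guesses values $f_i^\A(\vec a) \in \RE$ for each $\vec a \in A^{\ar(f_i)}$, producing a polynomial-length real string. It then evaluates $\psi$ deterministically: first-order quantifiers become nested loops over elements of $A$, so the total number of subformula evaluations is polynomial in $|A|$; each atomic numerical (in)equation is evaluated in constant BSS time using $+$, $\times$, and stored real constants $r$; each relational atom requires a single table lookup. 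Hence $M$ runs in polynomial time and accepts iff $\A \models \phi$.

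For the harder direction $\NP_\RE \leq \esor{+,\times,\leq,(r)_{r\in \RE}}$, I would mimic Fagin's encoding of a computation by an $\ESO$-sentence. Let $M$ be a non-deterministic BSS machine deciding $L$ within time $p(n)$. Choose $k$ large enough that $|A|^k \geq p(|A|)$; tuples from $A^k$ (ordered lexicographically with respect to a built-in order that can be axiomatized or assumed via the $\leq$-reduct on rank values) will index both time steps and tape cells. I would existentially quantify over:
\begin{enumerate}
\item a real-valued function $T \colon A^k \times A^k \to \RE$ recording the content of register $i$ at time $t$;
\item Boolean (i.e.\ $\{0,1\}$-valued) relational symbols $N_1, \dots, N_N$ where $N_m(\vec t)$ holds iff at time $\vec t$ the machine is at graph node $m$;
\item a function $Y \colon A^k \to \RE$ encoding the nondeterministic guess.
\end{enumerate}
Inside $\psi$ I would then write first-order conjuncts asserting: the initial row of $T$ matches the encoding of $\A$ concatenated with $Y$ (using the stored rankings of relation and function values); for each pair of adjacent times $\vec t, \vec t{+}1$, a disjunction over $m \in \{1,\dots,N\}$ that reads ``if $N_m(\vec t)$ then the update on $T$ at $\vec t{+}1$ is exactly the update dictated by node $m$, and $N_{\beta(m)}(\vec t{+}1)$ (or the appropriate branch successor depending on the sign of the designated register, which is expressible by $\leq$) holds''; and finally that $N_N(\vec t)$ holds at some time together with the accepting output convention. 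Addition, subtraction (via negative constants), multiplication and constants $r \in \RE$ are exactly what is needed to express the four computation-node types.

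The main obstacle, and the part that needs the most care, is the middle clause: faithfully transcribing the bi-infinite shift register and its shift operations into a first-order formula over $A^k$-indexed cells. One handles this by using the lexicographic successor on $A^k$ (definable from $\leq$) to identify ``cell $j{+}1$'' and by writing, for a shift-left node $m$, $\forall \vec t \forall \vec i\, (N_m(\vec t) \to T(\vec t{+}1,\vec i) = T(\vec t, \vec i{+}1))$ with appropriate boundary conventions at the extremes. The subtlety is that the BSS state space is two-sided infinite while $A^k$ is finite; this is resolved by observing that a $p(n)$-time computation only touches at most $p(n)$ cells on either side of the origin, so a sufficiently large $k$ suffices, and cells outside the used window can be frozen to $0$ by an explicit conjunct. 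Once this bookkeeping is in place, adequacy follows by a routine induction on time steps.
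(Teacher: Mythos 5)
First, note what you are being compared against: the paper itself contains no proof of Theorem~\ref{thm:meer} --- it is imported verbatim from Gr\"adel and Meer \cite{GradelM95} (their real-valued analogue of Fagin's theorem) as background for Section~3. So the only meaningful comparison is with the original Gr\"adel--Meer argument, and your proposal is essentially a faithful reconstruction of it. The easy direction is correct as you state it: prenex the function quantifiers, guess polynomially many reals, and evaluate the fixed first-order matrix by nested loops with constantly many BSS operations per numerical atom. The hard direction also follows the right blueprint: existentially quantified functions encoding register contents over time, node indicators, and the nondeterministic guess, with transitions, shifts (restricted to a polynomial window that is frozen to $0$ outside), and branch nodes transcribed by first-order conjuncts using $+$, $\times$, $\leq$ and the constants $(r)_{r\in\RE}$.

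The one point that genuinely needs repair is your phrase ``built-in order''. An $\RE$-structure as defined in Section~\ref{sec:preli} carries \emph{no} order on its finite domain, and the symbol $\leq$ of $\esor{+,\times,\leq,(r)_{r\in \RE}}$ compares numerical terms, i.e.\ reals, not domain elements; so the lexicographic indexing of time steps and cells, and the successor used in your shift clause, cannot be ``assumed''. The standard fix --- and the crux of Gr\"adel--Meer's proof in this setting --- is to existentially quantify a ranking: guess an injective function $\pi\colon A\to\RE$ (injectivity is axiomatizable, since $i=j$ abbreviates $i\leq j\wedge j\leq i$), and define $a<b$ on $A$ by $\pi(a)<\pi(b)$; the lexicographic order and successor on $A^k$ are then first-order definable from $\pi$. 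Your initial-configuration clause must likewise be written relative to this guessed $\pi$, i.e.\ it asserts that the first row of $T$ equals $\enc_\pi(\A)$ concatenated with $Y$; and the soundness of guessing the order is exactly where isomorphism-closure of the decided class is used (the machine accepts either all encodings of $\A$ or none, so ``some ranking works'' coincides with ``all rankings work''). Two smaller points: the logic quantifies only real-valued functions, so your indicators $N_m$ must be functions forced into $\{0,1\}$, e.g.\ by $N_m(\vec t)\times(N_m(\vec t)-1)=0$ together with a conjunct making exactly one of them equal $1$ at each time; and subtraction at computation nodes should be transcribed as $T(\vec t{+}1,i)+T(\vec t,k)=T(\vec t,j)$ rather than ``via negative constants''. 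With these adjustments your sketch is a correct rendering of the cited proof.
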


\end{document}
\endinput